%% file: arxiv-main.tex
\documentclass[11pt]{article}
\usepackage{ashwin}
\usepackage{tikz}

\usetikzlibrary{arrows.meta,calc,positioning, matrix, shapes.geometric}
\pgfdeclarelayer{edgelayer}
\pgfsetlayers{background,main,edgelayer}
\tikzset{
  pathBox/.style ={fill=red!18, draw=red!70, very thick},
  stepLabel/.style={font=\scriptsize\bfseries,
                    fill=red!80, text=white,
                    circle, inner sep=1.3pt}
}

\providecommand{\email}[1]{\texttt{#1}}

\title{Learning Partition Trees for Nearest Neighbor Search} 

\author{Sanjeev Khanna\thanks{NYU. Supported in part by National Science Foundation (NSF) award CCF-2625203 and AFOSR award FA9550-25-1-0107. \{\email{sanjeev.khanna@nyu.edu}\}} \and Ashwin Padaki\thanks{University of Pennsylvania. Supported by the National Science Foundation (NSF) GRFP under Grant No. DGE-2236662, and Grant No. CCF-2337993. \{\email{apadaki@seas.upenn.edu}\}}\and Erik Waingarten\thanks{University of Pennsylvania. Supported by the National Science Foundation (NSF) under Grant No. CCF-2337993. \{\email{ewaingar@seas.upenn.edu}\}}}

\begin{document} 

\maketitle 

\input{sections/abstract}

\newpage
\tableofcontents
\newpage

\input{sections/intro}
\input{sections/tech-overview}
\input{sections/prelims}
\input{sections/cut-hard}
\input{sections/cut-algo}
\input{sections/dasgupta}
\input{sections/final-learning}

\bibliographystyle{alpha} 
\bibliography{arxiv-refs, waingarten} 

\newpage

\appendix 

\input{sections/poly-approx}
\input{sections/missing-cut-algo}

\input{sections/missing-dasgupta}

\end{document}

%% file: sections/abstract.tex

\begin{abstract}
    We study nearest neighbor search from the perspective of data-driven algorithm design: given a dataset $P \subset \R^d$ of size $n$ and sample access to a query distribution over $\R^d$, the goal is to learn a data structure optimized for queries drawn from that specific distribution. We focus on the class of balanced halfspace trees, which naturally abstracts space-partitioning frameworks like locality-sensitive hashing. Assuming Gaussian-like marginal conditions on the dataset and query distribution, we give an efficient algorithm that learns a tree achieving $o(nd)$ query time, provided that a perfect tree exists.
    
    At the core of our algorithmic approach is the balanced halfspace cut problem, where we are given a distribution over $\R^d \times \R^d$ and must find a balanced halfspace that minimizes the fraction of cut pairs. We prove that without distributional assumptions, finding the optimal balanced halfspace is \nphard. To circumvent this computational barrier, we design an efficient improper learning algorithm: if the optimal halfspace cuts an $\alpha$ fraction of pairs, our algorithm outputs a balanced polynomial threshold function of degree $\Ot(1/\eps^2)$ that cuts at most an $O(\sqrt{\alpha+\eps})$ fraction.
\end{abstract}

%% file: sections/intro.tex

\section{Introduction}\label{sec:intro}

The focus of this paper is \emph{nearest neighbor search}. For a dimensionality $d\in\N$, we receive as input a dataset of $n$ points $P = \{p_1,\ldots,p_n\}\subset\R^d$. We aim to build a data structure which can support nearest neighbor queries: given a query point $q\in\R^d$, return the point $p\in P$ minimizing $\|q-p\|_2$ over all points in the dataset.\footnote{All distances in this paper are Euclidean (i.e., $\ell_2$ distance), but the specific choice of distance plays only a minimal role.} The goal is to design data structures that answer queries in time that is \emph{sublinear} in $n$ (i.e., significantly faster than scanning the entire dataset) while keeping the space complexity manageable, ideally near-linear in $n$.

Nearest neighbor search has been studied extensively from the theoretical computer science perspective, dating back to the work of Minsky and Papert~\cite{MP69} and Knuth~\cite{knuth1973}, continuing throughout the 80s and 90s~\cite{LT80, S84, C88, M93}, and leading to relatively recent developments~\cite{IM98, AI06, AR15, ALRW17, R18}. Historically, the driving question has been to understand the conditions under which one can achieve sublinear-time queries using polynomial space for \emph{worst-case} datasets and queries. This line of work has yielded powerful algorithmic frameworks and techniques in settings of (i) low dimensionality, (ii) low intrinsic dimensionality, and (iii) approximate nearest neighbors in high dimensions (see~\cite{AIR18} for a more thorough overview).

In this work, we consider \emph{beyond worst-case} data structures for nearest neighbor search, specifically taking the perspective of \emph{data-driven algorithm design}~\cite{GR17,balcan21}. Our motivation stems from a growing body of work outside theoretical computer science on fast and accurate heuristic algorithms for nearest neighbor search~\cite{MY18,JSDS19,BIGANN21,BIGANN23}. These works do not achieve worst-case performance guarantees (see~\cite{IX23}). Instead, performance is measured on benchmark datasets and query sets via curves comparing \emph{recall} (the fraction of correctly found nearest neighbors) with \emph{qps} (queries per second). This methodology crucially relies on an implicit assumption: that future ``online'' query performance will reflect the performance measured on an ``offline'' training set of queries. From a theoretical perspective, this raises a fundamental data structure design question:
\begin{quote}
    Given a dataset and sample access to a query distribution, are there (efficient) algorithms to build (fast) nearest neighbor data structures for queries drawn from that same distribution? 
\end{quote}

Crucially, sample access to the query distribution lets one cross-validate any single data structure; the challenge, and the goal of this paper, is to compete with the best data structure in a class.

\paragraph{Learning Nearest Neighbor Data Structures.} We consider a fixed dataset $P \subset \R^d$ of size $n$, and provide the algorithm with
\begin{itemize}
\item Sample access to a distribution $\calD$ over $\R^d \times P$ such that $(\bq, \nn) \sim \calD$ corresponds to a random query and its nearest neighbor, i.e., $\nn \in P$ minimizes $\| \bq - \nn\|_2$, and 
\item A class $\calC$ of ``bounded-complexity'' nearest neighbor data structures which return the nearest neighbor $p^\star \in P$ for any query $q \in \R^d$.
\end{itemize}
Our goal is to design a polynomial-time algorithm which produces a data structure for $P$ that performs nearly as well on $\calD$ as the best data structure in the class $\calC$, whenever $\calC$ contains a good one. To make this concrete, we first need to specify the class $\calC$.

\paragraph{The Class $\calC$ of Balanced Halfspace Trees.} We study the class of balanced halfspace tree data structures, which is a natural abstraction of (idealized) locality-sensitive hashing data structures (many of which use repeated halfspace partitions~\cite{C02, DIIM04}).\footnote{Beyond halfspace partitions, capturing the class of graph-based methods is an intriguing open direction (See Subsection~\ref{sec:related})} A balanced halfspace tree $T$ for a dataset $P \subset \R^d$ is a rooted binary tree with labeled edges and $n$ leaves. Each internal node $v \in T$ contains a halfspace $f_v \colon \R^d \to \{-1,1\}$ and two outgoing edges labeled with $+1$ and $-1$. For a fixed tree $T$ and $x \in \R^d$, the point $x$ defines the natural root-to-leaf path which starts at the root node $v$ and iteratively proceeds to the child of $v$ labeled by $f_v(x)$; $T(x)$ denotes the leaf of this path. The $n$ leaves of the tree are labeled with the points of $P$ such that $T(p) = p$, and the tree $T$ is \emph{$\beta$-balanced} if each internal node has at least $\beta$-fraction of its descendant leaves on each of its children's subtrees. For a nearest neighbor pair distribution $\calD$ over $\R^d \times P$, we say that $T$ is \emph{perfect} for $\calD$ if $(\bq, \nn) \sim \calD$ always satisfies $\nn = T(\bq) \in P$.

\begin{figure}[t]
\centering
\begin{tikzpicture}[x=1cm,y=1cm,scale=0.62]
\fill[white] (0,0) rectangle (10.0,10.0);
\draw[black,line width=0.45pt,line cap=round] (3.671,1.952)--(1.907,1.602);
\draw[black,line width=0.45pt,line cap=round] (1.947,2.019)--(0.000,2.245);
\draw[black,line width=0.45pt,line cap=round] (8.917,0.000)--(7.147,1.815);
\draw[black,line width=0.45pt,line cap=round] (5.607,2.770)--(4.510,0.000);
\draw[black,line width=0.45pt,line cap=round] (7.211,3.967)--(6.649,2.490);
\draw[black,line width=0.45pt,line cap=round] (8.591,5.467)--(8.554,3.503);
\draw[black,line width=0.45pt,line cap=round] (5.944,3.862)--(5.068,5.065);
\draw[black,line width=0.45pt,line cap=round] (3.074,3.448)--(3.012,4.831);
\draw[black,line width=0.45pt,line cap=round] (7.694,6.646)--(4.781,5.678);
\draw[black,line width=0.45pt,line cap=round] (6.120,7.206)--(4.958,8.299);
\draw[black,line width=0.45pt,line cap=round] (10.000,6.813)--(8.197,7.615);
\draw[black,line width=0.45pt,line cap=round] (8.540,8.381)--(6.369,8.966);
\draw[black,line width=0.45pt,line cap=round] (3.595,4.897)--(2.590,7.540);
\draw[black,line width=0.45pt,line cap=round] (2.249,6.984)--(0.000,6.928);
\draw[black,line width=0.45pt,line cap=round] (4.879,7.958)--(3.678,8.399);
\draw[black,line width=0.45pt,line cap=round] (4.034,10.000)--(1.701,8.094);
\draw[black,line width=0.65pt,line cap=round] (2.110,3.706)--(1.752,0.000);
\draw[black,line width=0.65pt,line cap=round] (7.388,2.292)--(6.230,0.000);
\draw[black,line width=0.65pt,line cap=round] (10.000,3.005)--(6.370,4.257);
\draw[black,line width=0.65pt,line cap=round] (4.573,5.009)--(4.137,3.163);
\draw[black,line width=0.65pt,line cap=round] (6.879,8.073)--(4.203,5.018);
\draw[black,line width=0.65pt,line cap=round] (9.263,10.000)--(7.734,6.577);
\draw[black,line width=0.65pt,line cap=round] (2.464,7.619)--(1.459,4.653);
\draw[black,line width=0.65pt,line cap=round] (4.244,10.000)--(3.233,7.139);
\draw[black,line width=1.0pt,line cap=round] (4.042,0.000)--(3.404,3.360);
\draw[black,line width=1.0pt,line cap=round] (7.547,5.348)--(4.954,2.944);
\draw[black,line width=1.0pt,line cap=round] (8.381,5.443)--(5.779,10.000);
\draw[black,line width=1.0pt,line cap=round] (4.508,6.345)--(0.000,9.154);
\draw[black,line width=1.6pt,line cap=round] (10.000,1.592)--(0.000,4.272);
\draw[black,line width=1.6pt,line cap=round] (5.349,10.000)--(4.191,4.965);
\draw[black,line width=2.5pt,line cap=round] (10.000,5.628)--(0.000,4.487);
\draw[black,line width=1.1pt] (0,0) rectangle (10.0,10.0);
\draw[querblue,dotted,line width=0.8pt] (2.457,1.203)--(2.882,0.855);
\draw[querblue,dotted,line width=0.8pt] (2.627,2.067)--(2.723,2.609);
\draw[querblue,dotted,line width=0.8pt] (0.893,1.636)--(0.910,1.087);
\draw[querblue,dotted,line width=0.8pt] (1.313,3.523)--(0.987,3.079);
\draw[querblue,dotted,line width=0.8pt] (7.622,1.121)--(7.431,0.605);
\draw[querblue,dotted,line width=0.8pt] (9.252,0.821)--(8.804,1.140);
\draw[querblue,dotted,line width=0.8pt] (5.505,0.941)--(5.922,1.300);
\draw[querblue,dotted,line width=0.8pt] (4.830,1.515)--(4.408,1.869);
\draw[querblue,dotted,line width=0.8pt] (7.883,2.644)--(8.422,2.755);
\draw[querblue,dotted,line width=0.8pt] (5.690,3.381)--(6.238,3.343);
\draw[querblue,dotted,line width=0.8pt] (9.293,4.940)--(9.320,4.391);
\draw[querblue,dotted,line width=0.8pt] (8.289,4.721)--(7.775,4.525);
\draw[querblue,dotted,line width=0.8pt] (5.481,3.797)--(4.954,3.956);
\draw[querblue,dotted,line width=0.8pt] (6.666,5.027)--(6.186,4.758);
\draw[querblue,dotted,line width=0.8pt] (2.393,4.004)--(1.892,4.230);
\draw[querblue,dotted,line width=0.8pt] (3.202,4.030)--(3.737,4.157);
\draw[querblue,dotted,line width=0.8pt] (6.006,5.432)--(6.486,5.700);
\draw[querblue,dotted,line width=0.8pt] (6.146,6.341)--(6.451,6.799);
\draw[querblue,dotted,line width=0.8pt] (4.861,6.343)--(5.094,6.841);
\draw[querblue,dotted,line width=0.8pt] (5.930,9.128)--(5.844,8.585);
\draw[querblue,dotted,line width=0.8pt] (8.897,6.975)--(8.896,6.425);
\draw[querblue,dotted,line width=0.8pt] (8.792,8.361)--(9.341,8.392);
\draw[querblue,dotted,line width=0.8pt] (8.060,8.175)--(7.548,7.975);
\draw[querblue,dotted,line width=0.8pt] (7.802,8.856)--(7.617,9.374);
\draw[querblue,dotted,line width=0.8pt] (2.220,5.302)--(2.521,5.762);
\draw[querblue,dotted,line width=0.8pt] (4.123,6.405)--(3.656,6.113);
\draw[querblue,dotted,line width=0.8pt] (0.434,6.085)--(0.931,5.850);
\draw[querblue,dotted,line width=0.8pt] (1.118,7.221)--(0.975,7.752);
\draw[querblue,dotted,line width=0.8pt] (3.562,7.589)--(4.091,7.435);
\draw[querblue,dotted,line width=0.8pt] (4.472,8.513)--(4.551,9.057);
\draw[querblue,dotted,line width=0.8pt] (3.135,9.025)--(3.078,8.478);
\draw[querblue,dotted,line width=0.8pt] (2.250,9.314)--(1.700,9.320);
\fill[black] (2.882,0.855) circle (1.9pt);
\fill[querblue] (2.457,1.203) circle (1.9pt);
\fill[black] (2.723,2.609) circle (1.9pt);
\fill[querblue] (2.627,2.067) circle (1.9pt);
\fill[black] (0.910,1.087) circle (1.9pt);
\fill[querblue] (0.893,1.636) circle (1.9pt);
\fill[black] (0.987,3.079) circle (1.9pt);
\fill[querblue] (1.313,3.523) circle (1.9pt);
\fill[black] (7.431,0.605) circle (1.9pt);
\fill[querblue] (7.622,1.121) circle (1.9pt);
\fill[black] (8.804,1.140) circle (1.9pt);
\fill[querblue] (9.252,0.821) circle (1.9pt);
\fill[black] (5.922,1.300) circle (1.9pt);
\fill[querblue] (5.505,0.941) circle (1.9pt);
\fill[black] (4.408,1.869) circle (1.9pt);
\fill[querblue] (4.830,1.515) circle (1.9pt);
\fill[black] (8.422,2.755) circle (1.9pt);
\fill[querblue] (7.883,2.644) circle (1.9pt);
\fill[black] (6.238,3.343) circle (1.9pt);
\fill[querblue] (5.690,3.381) circle (1.9pt);
\fill[black] (9.320,4.391) circle (1.9pt);
\fill[querblue] (9.293,4.940) circle (1.9pt);
\fill[black] (7.775,4.525) circle (1.9pt);
\fill[querblue] (8.289,4.721) circle (1.9pt);
\fill[black] (4.954,3.956) circle (1.9pt);
\fill[querblue] (5.481,3.797) circle (1.9pt);
\fill[black] (6.186,4.758) circle (1.9pt);
\fill[querblue] (6.666,5.027) circle (1.9pt);
\fill[black] (1.892,4.230) circle (1.9pt);
\fill[querblue] (2.393,4.004) circle (1.9pt);
\fill[black] (3.737,4.157) circle (1.9pt);
\fill[querblue] (3.202,4.030) circle (1.9pt);
\fill[black] (6.486,5.700) circle (1.9pt);
\fill[querblue] (6.006,5.432) circle (1.9pt);
\fill[black] (6.451,6.799) circle (1.9pt);
\fill[querblue] (6.146,6.341) circle (1.9pt);
\fill[black] (5.094,6.841) circle (1.9pt);
\fill[querblue] (4.861,6.343) circle (1.9pt);
\fill[black] (5.844,8.585) circle (1.9pt);
\fill[querblue] (5.930,9.128) circle (1.9pt);
\fill[black] (8.896,6.425) circle (1.9pt);
\fill[querblue] (8.897,6.975) circle (1.9pt);
\fill[black] (9.341,8.392) circle (1.9pt);
\fill[querblue] (8.792,8.361) circle (1.9pt);
\fill[black] (7.548,7.975) circle (1.9pt);
\fill[querblue] (8.060,8.175) circle (1.9pt);
\fill[black] (7.617,9.374) circle (1.9pt);
\fill[querblue] (7.802,8.856) circle (1.9pt);
\fill[black] (2.521,5.762) circle (1.9pt);
\fill[querblue] (2.220,5.302) circle (1.9pt);
\fill[black] (3.656,6.113) circle (1.9pt);
\fill[querblue] (4.123,6.405) circle (1.9pt);
\fill[black] (0.931,5.850) circle (1.9pt);
\fill[querblue] (0.434,6.085) circle (1.9pt);
\fill[black] (0.975,7.752) circle (1.9pt);
\fill[querblue] (1.118,7.221) circle (1.9pt);
\fill[black] (4.091,7.435) circle (1.9pt);
\fill[querblue] (3.562,7.589) circle (1.9pt);
\fill[black] (4.551,9.057) circle (1.9pt);
\fill[querblue] (4.472,8.513) circle (1.9pt);
\fill[black] (3.078,8.478) circle (1.9pt);
\fill[querblue] (3.135,9.025) circle (1.9pt);
\fill[black] (1.700,9.320) circle (1.9pt);
\fill[querblue] (2.250,9.314) circle (1.9pt);
\node[querblue,font=\footnotesize] at (0.593,1.686) {$\bq$};
\node[font=\footnotesize] at (1.310,1.037) {$\nn$};
\end{tikzpicture}

\caption{A perfect balanced halfspace tree in $\R^2$ for a nearest neighbor pair distribution $\Dpair$ over $32$ pairs. The black points are the dataset, and each query $\bq$ is connected to its nearest neighbor $\nn$ with a dotted line. The thick diagonal is the hyperplane at the root, and lines get thinner with depth.}
\label{fig:balanced-halfspace-tree}
\end{figure}

\begin{observation}
A dataset $P \subset \R^d$ and nearest neighbor distribution $\calD$ which admit a perfect $\beta$-balanced halfspace tree yield an $O(nd)$-space nearest neighbor data structure with query time $O(d \log n / \beta)$.
\end{observation}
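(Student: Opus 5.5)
The data structure is simply the tree $T$ itself, stored explicitly. A query is answered by following the root-to-leaf path of $\bq$ and returning the leaf label. The argument has three parts: correctness, space, and query time.

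\textbf{Correctness.} Perfection of $T$ for $\calD$ gives exactly what we need. For $(\bq,\nn)\sim\calD$, the leaf reached satisfies $T(\bq)=\nn$, which is the nearest neighbor of $\bq$. So the structure answers every query in the support of $\calD$ correctly.

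\textbf{Space.} $T$ is a binary tree whose $n$ leaves are labeled by points of $P$. Every internal node has two children (both subtrees are nonempty by balance), so there are exactly $n-1$ internal nodes. Each internal node stores one halfspace $f_v(x)=\mathrm{sign}(\langle w_v,x\rangle-\theta_v)$, described by $d+1$ numbers. Each leaf stores a point of $P$, or a pointer to it. The total space is therefore $O(nd)$.

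\textbf{Query time.} At each internal node, computing $f_v(\bq)$ takes one inner product, i.e.\ $O(d)$ time. The query time is thus $O(d\cdot\mathrm{depth}(T))$, and it remains to bound the depth using $\beta$-balance. Let $v$ be an internal node with $m_v$ descendant leaves and let $c$ be a child. Since the other child holds at least $\beta m_v$ leaves, $m_c\le(1-\beta)m_v$. Along any root-to-leaf path, the leaf counts start at $n$, shrink by a factor $(1-\beta)$ at each step, and are at least $1$ at the end. Hence the depth $D$ satisfies $(1-\beta)^D n\ge 1$, which gives
\[
D\le \frac{\ln n}{-\ln(1-\beta)}\le \frac{\ln n}{\beta},
\]
using $-\ln(1-\beta)\ge\beta$. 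The query time is therefore $O(d\log n/\beta)$.

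None of the steps is a real obstacle. The only point needing care is the depth bound: the shrinkage factor per level must come from the \emph{other} child's share of at least $\beta$, and one then applies $-\ln(1-\beta)\ge \beta$.
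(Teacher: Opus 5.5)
Your proof is correct and is the natural argument (the paper states this observation without giving a proof): store $T$ explicitly, answer a query by routing $\bq$ to the leaf $T(\bq)=\nn$ guaranteed by perfection, and use balance to bound the depth by $\log_{1/(1-\beta)} n \le \ln n/\beta$, which gives $O(nd)$ space and $O(d\log n/\beta)$ query time. You are also right to measure balance by leaf counts, as the introduction defines it; under the mass-based balance of Section~\ref{sec:dasgupta}, the depth can be far larger than $O(\log n/\beta)$ (for example, geometrically decaying point masses allow a path-like tree).
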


Our concrete problem is the following: given a dataset $P$ and sample access to a nearest-neighbor distribution $\calD$ for which $(P,\calD)$ admits a perfect $\beta$-balanced halfspace tree (the realizable regime), design a data structure (in polynomial time) that uses polynomial space and achieves sublinear query time.

\subsection{Related Work}\label{sec:related}

Our work adopts the framework of Gupta and Roughgarden~\cite{GR17}, who cast algorithm selection from a parameterized family as a statistical learning problem over an instance distribution (here, the parameterized family is $\calC$ and the instance distribution is $\calD$).\footnote{A related but distinct line-of-work studies learning-augmented algorithms, or algorithms with predictions~\cite{MV22} in which a (possibly unreliable) predictor is supplied as a side input to improve performance while retaining worst-case guarantees. In contrast, we do not assume access to a predictor; our goal is to learn the data structure itself from samples of the query distribution.} Within theoretical computer science, the idea of tailoring a nearest neighbor data structure to the dataset appears in various contexts: spectral algorithms for semi-random settings~\cite{AAKK14}, data-dependent locality-sensitive hashing~\cite{AINR14, AR15, ALRW17, ANNRW18}, data-dependent decompositions and embeddings~\cite{ANRW21, JWZ24, AJW25, AN25, BBMWWZ25}. However, while these works adapt the data structure to the dataset $P$, they remain worst-case with respect to queries; they make no assumptions on the query distribution and derive no benefit from having access to it. A complementary, more empirical thread learns partitions directly (e.g., NeuralLSH~\cite{DIRW20}) and more broadly, there is an extensive body of work on \emph{learning to hash} (see~\cite{WZSSS17} for a survey). 

To our knowledge, the only prior data-driven work in our setting (i.e., that uses the query distribution to inform the data structure) is that of Cayton and Dasgupta~\cite{CD07}. They use the query distribution to select hash bin widths after applying a random projection. Their method is tied to the specific partitioning scheme (i.e., random projection and bin) and comes without guarantees relative to a general class of data structures. The central question we take up is how to compete with the \emph{best} data structure in a class---in this work, the class of balanced halfspace trees---which requires both formulating the intermediate learning objectives and designing efficient algorithms for them. An interesting direction for future work is whether efficient learning algorithms exist for the class given by using $k$ balanced halfspace trees, or an appropriate class to capture \emph{graph-based methods}~\cite{IX23,DGMMS24, KPW26, CDFJLMSSW26}.

\subsection{Our Results} 

Our main result is a polynomial-time learning algorithm which can output a sublinear query time (i.e., $o(nd)$ time) nearest neighbor data structure whenever there exists a perfect balanced halfspace tree, and when the query distribution and dataset have ``Gaussian-like'' marginals (which we elaborate on shortly).

\begin{theorem}[Main Theorem---informal version of Theorem~\ref{thm:main-formal}]\label{thm:main-intro}
    Fix a dataset $P\subset\R^d$ of size $n$ where $d = \polylog(n)$. Let $\Dpair$ be a nearest neighbor distribution for $P$, and suppose: \begin{itemize}
        \item there exists a perfect $1/3$-balanced halfspace tree for $\Dpair$, and
        \item the nearest neighbor pair distribution $\calD$ has ``Gaussian-like'' marginals.
    \end{itemize}
    Then, there is a polynomial-time algorithm that outputs a nearest neighbor data structure using $O(nd)$ space and $o(nd)$ query time, which returns the true nearest neighbor $\nn$ with  probability $1-o(1)$ over $(\bq, \nn) \sim \Dpair$.
\end{theorem}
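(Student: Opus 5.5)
We build the data structure top-down, mimicking the perfect tree only for its first $k$ levels, where $k$ is chosen so that $(1/3)^k n$, the leaf-bucket size, is still $o(n)$ but the accumulated error stays $o(1)$. Concretely, we aim for a depth-$k$ tree of balanced polynomial threshold functions. Each leaf holds a bucket of roughly $n\cdot(2/3)^{k}$ or fewer points, and we answer a query by routing it to a bucket and scanning that bucket linearly. Evaluating a degree-$D$ PTF costs $d^{O(D)}$ time. Since $d=\polylog(n)$, the query time is $k\cdot d^{O(D)} + d\cdot n(2/3)^k$. This is $o(nd)$ as soon as $k=\omega(1)$ and $D$ is small enough that $d^{O(D)} = n^{o(1)}$; for instance, $D=o(\log n/\log\log n)$ suffices. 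Space is $O(nd)$ plus the PTF descriptions, which are $k\cdot d^{O(D)}=o(nd)$.

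The key step at each node is an instance of the balanced halfspace cut problem. At a node $v$ whose region receives the conditional pair distribution $\calD_v$ (pairs $(\bq,\nn)$ with both endpoints routed to $v$), the perfect tree provides a $1/3$-balanced halfspace that cuts an $\alpha=0$ fraction of pairs. We then invoke the improper learner from the abstract: with accuracy parameter $\eps$, it returns a balanced PTF of degree $\Ot(1/\eps^2)$ cutting at most an $O(\sqrt{\eps})$ fraction of $\calD_v$. The learner needs two things. First, it needs sample access to $\calD_v$, which we get by rejection sampling from $\calD$ restricted to pairs that survived the path so far. Second, it needs the Gaussian-like marginal condition, which must hold for the conditional distribution at each node. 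We would argue the latter by making the hypothesis in Theorem~\ref{thm:main-formal} hereditary, e.g.\ closed under conditioning on regions of constant mass, and I expect this to be exactly what ``Gaussian-like'' is formalized to guarantee, via polynomial approximation/anticoncentration as in the appendix.

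The subtlety is that after our first cut (a PTF, not the true halfspace), the conditional distribution no longer matches a subtree of the perfect tree. To handle this, we argue that the surviving distribution $\calD_v$ is close to one that is realized by some perfect balanced halfspace subtree. The restriction of the perfect tree's next-level halfspace to our region still cuts at most the mass where we deviated, so $\alpha_v \le O(\text{accumulated error})/\Pr[v]$. The learner then cuts $O(\sqrt{\alpha_v+\eps})$. Balance of the optimal halfspace with respect to the data points in our region may also degrade, so we would relax to $\beta$-balance for a slightly smaller constant. Here we exploit that intersecting with the halfspace and our PTF leaves points spread out under the Gaussian-like assumption.

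The main obstacle is controlling error propagation through the $\sqrt{\cdot}$ blow-up over $k$ levels. Per level, the error is $\delta_{i+1}\lesssim \sqrt{\delta_i+\eps}$, which compounds badly, and $\Pr[v]$ in the denominator shrinks geometrically. My first attempt is to avoid compounding by always comparing against the original perfect tree's halfspace at each level evaluated on the full distribution. Since the perfect tree cuts nothing, the error added at level $i$ is $O(\sqrt{\eps'})$ independent of earlier levels, summed over $2^i$ nodes weighted by mass. This would give a total failure probability of $O(k\sqrt{\eps'})$ after a union bound. Then we choose $k=\Theta(\log\log n)$ and $\eps=1/\polylog(n)$, hence $D=\polylog\log n$, making both the failure probability $o(1)$ and the query time $n^{1-\Omega(1)/\ldots}\cdot d=o(nd)$.
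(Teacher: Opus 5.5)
\textbf{The gap: you never produce a good certificate at non-root nodes.} Your architecture is close to the paper's: a greedy top-down tree of balanced PTF cuts learned on conditional distributions, then route-and-scan, with parameters set to inverse polylogarithms. But the charging step fails. At a non-root node $v$ of the learned tree, you need a cut in the learner's competitor class that is both $\Omega(1)$-balanced and $o(1)$-error on $\calD_v$. You never exhibit one, and a single halfspace of $T^\star$ cannot play this role.

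First, the halfspace $f_w$ at a non-root node $w$ of $T^\star$ is only guaranteed not to cut pairs that reach $w$. As a global cut it may split many pairs outside $\calR_{T^\star}(w)$, so ``comparing against the perfect tree's halfspace evaluated on the full distribution'' does not give zero error.

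Second, the learned cuts need not track $T^\star$ at all. Every balanced union of leaf regions of $T^\star$ is a zero-error cut, so the learner may, already at the root, return a cut unrelated to $T^\star$'s root hyperplane at no cost. Deviation from $T^\star$ is therefore not measured by error, and your bound $\alpha_v \le O(\text{accumulated error})/\Pr[v]$ has no basis.

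For instance, $\calR(v)$ can consist of a piece of relative mass $0.05$ on one side of the root hyperplane and a large piece on the other. The root hyperplane is then only $0.05$-balanced on $\calD_v$. Hyperplanes from deeper in $T^\star$ may slice the small piece and cut a constant fraction of $\calD_v$. Gaussian-like spreading says nothing about where $\calR(v)$ sits relative to a fixed hyperplane, so it cannot restore balance.

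\textbf{The missing idea: enlarge the certificate class.} At a learned node $u$ with $\Dnn$-mass $\eta_u \ge \gamma$, contract $T^\star$ until its leaves have $\Dnn$-mass at most $\eta_u/2$. This gives at most $6/\eta_u$ regions, each an intersection of $O(\log(1/\eta_u))$ halfspaces along a root-to-node path. Since $T^\star$ is perfect, no pair is ever split between two of these regions.

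Group the regions greedily into two sides, each of $\Dnn^{(u)}$-mass at least $1/3$. This gives an $(R,J)$-halfspace function with $R = O(1/\gamma)$ and $J = O(\log(1/\gamma))$. It has balance at least $1/3$ and error exactly $0$ on the conditional distribution, whatever the earlier learned cuts were.

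Running the learner of Theorem~\ref{thm:improper-alg} against $\calF_{R,J}$ then incurs only an additive $O(\sqrt{\eps})$ at each node. There is no $\sqrt{\cdot}$ compounding, and the mass-weighted sum along any root-to-leaf path is geometric. This is exactly why Theorem~\ref{thm:main-formal} assumes approximability of $(R,J)$-halfspace functions with degree $(R2^J/\eps)^{c_1}$, further inflated by $1/\gamma$ under conditioning (Claim~\ref{cl:conditioning}), rather than the $\Ot(1/\eps^2)$ halfspace bound you invoke.

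\textbf{Two smaller issues.}
\begin{itemize}
\item The degree is polylogarithmic in $n$, not $(\log\log n)^{O(1)}$. Even for halfspaces, $\Ot(1/\eps^2)$ with $\eps = 1/\polylog(n)$ is polylogarithmic. So $\eps$ and $\gamma$ must be $(\log n)^{-C}$ for a small constant $C$ to keep $d^{D} = n^{o(1)}$.
\item The learner's balance is $\Omega(\beta^2)$ with respect to $\Dnn$-mass, not point count. So bucket sizes are not bounded by $n(2/3)^k$ unless you discard dataset points of $\Dnn$-mass below $\delta/n$, as in Claim~\ref{claim:cost-to-search}.
\end{itemize}
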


To elaborate on the second condition, the underlying algorithm relies on low-degree polynomial approximations of (functions of) halfspaces in order to learn efficiently. Hence, we will require both marginals of $\Dpair$ to exhibit certain concentration and anticoncentration properties (generalizing properties of high-dimensional Gaussian distributions, see Appendix~\ref{sec:poly-approx}). This is a standard assumption made in learning theory (e.g., for agnostic learning of halfspaces and functions of halfspaces~\cite{klivans2002,KKMS08, DKZ20}). More formally, for small $\eps > 0$, if $f \colon \R^d \to \{-1,1\}$ is any function where the set $f^{-1}(1)$ is the union of $\poly(1/\eps)$ disjoint geometric regions, each described by an intersection of $O(\log(1/\eps))$ halfspaces, there should exist a $\poly(1/\eps)$-degree polynomial $A \colon \R^d \to \R$ which approximates $f$ on a random dataset point and random query (we require approximability in terms of first, second and fourth moments). 

\begin{remark}[Why Barely Sublinear]\label{rm:barely-sublinear}
Although a perfect tree would support $O(d \log n)$ time per query, Theorem~\ref{thm:main-intro} does not recover the perfect halfspace tree. Rather, each cut of the tree is learned only improperly and approximately (via optimizing over low-degree polynomials). Roughly speaking, the final query algorithm traverses an $O(\log(1/\eps))$-depth tree by evaluating $\poly(1/\eps)$-degree polynomials (each can take $d^{\poly(1/\eps)}$-time), and this narrows down the search to an $\eps$-fraction of the dataset. The query time is of the form $d^{\poly(1/\eps)} + \eps n d$ and we can set $\eps = (\log n)^{-\Omega(1)}$ when $d = \polylog(n)$ so that $d^{\poly(1/\eps)} = n^{o(1)}$. While this degrades significantly from the ideal $O(d \log n)$ time to $o(nd)$, it is still a substantial improvement over linear scan for exact nearest neighbor search. (See Footnote~\ref{foot-open} for a concrete challenge.)
\end{remark}

\paragraph{Balanced Halfspace Cut.} As we discuss in the technical overview, Theorem~\ref{thm:main-intro} will build a hierarchical clustering of the dataset $P$ which minimizes an asymmetric version of Dasgupta's objective for hierarchical clustering~\cite{dasgupta2016}. Roughly speaking, given a query and nearest neighbor distribution $\Dpair$, Dasgupta's objective will seek to minimize the expected number of dataset points within the smallest hierarchical partition containing both $(\bq, \nn) \sim \Dpair$. This objective function is well-studied, and we adapt an argument from~\cite{charikar2017}, which shows that recursive sparsest cuts (of the underlying pair distribution $\Dpair$) approximately minimize Dasgupta's objective. While we defer the details of the reduction, this motivates the key algorithmic problem, \emph{balanced halfspace cut}, which forms the technical bulk of this paper.

\begin{quote}
    Given sample access to a distribution $\Dpair$ over pairs $(\bq,\nn)\in \R^d \times P$, find a \emph{balanced} halfspace $f \colon \R^d\to\{-1,1\}$ with respect to the second marginal of $\Dpair$ that minimizes the \emph{cut error}, defined as the probability that $f(\bq)\ne f(\nn)$.
\end{quote}

Specifically, let $\opt \geq 0$ denote the minimum cut error $\Pr_{(\bq,\nn)\sim\Dpair}[f(\bq) \neq f(\nn)]$ over halfspaces $f$ which are $1/3$-balanced with respect to $P$, i.e., $\Pr_{(\bq, \nn)\sim \Dpair}[f(\nn) = b] \geq 1/3$ for $b \in \{-1,1\}$. The goal of the balanced halfspace cut problem is to produce a halfspace $f$ of constant balance and error $\approx \opt$. We present two main results for this problem.

 \begin{itemize}
     \item \textbf{Hardness of Proper Learning (Theorem~\ref{thm:strong-hardness}).} We establish a strong computational barrier: given a collection of pairs $(x_1, y_1), \dots, (x_m, y_m) \in \R^{d} \times \R^{d}$,\footnote{Here, both $m$ and $d$ are considered asymptotic parameters.} it is \nphard to find an $\Omega(1)$-balanced halfspace of small cut probability (with respect to the uniform distribution over pairs), even when $\opt = 0$.
     \item \textbf{Improper Learning via PTFs (Theorem~\ref{thm:improper-alg}).} We circumvent the above hardness with an improper learner under the distributional assumption. Provided that $\Dpair$ admits $\poly(1/\eps)$-degree polynomial approximations of halfspaces, our algorithm outputs an $\Omega(1)$-balanced $\poly(1/\eps)$-degree PTF of error $O(\sqrt{\opt + \eps})$ and runs in $d^{\poly(1/\eps)}$ time. 
 \end{itemize}

 We remark that the obtained $\sqrt{\opt}$ guarantee of Theorem~\ref{thm:improper-alg} is tight assuming \sseh. In particular, Corollary~\ref{cor:square-root-tight} (which directly follows from~\cite{raghavendra2012}) implies that there is some constant $c > 0$ such that for any $\alpha > 0$, it is \nphard (assuming \sseh) to find any $\Omega(1)$-balanced cut with error less than $c \sqrt{\alpha}$ when a $1/2$-balanced cut of error $\alpha$ exists---this hardness holds even when $\Dpair$ admits degree-1 approximations of all cuts. Despite the tightness of the approximation guarantee of Theorem~\ref{thm:improper-alg}, there remains a natural open problem: whether the running time of can be improved from $d^{\poly(1/\eps)}$ to $d^{\polylog(1/\eps)}$, or even $\poly(d,1/\eps)$.\footnote{\label{foot-open} We note that a $\poly(d,1/\eps)$-time algorithm which can (improperly) learn low cut error, $\Omega(1)$-balanced unions-of-intersections of halfspaces would directly improve on the final running time of Theorem~\ref{thm:improper-alg}  (see Remark~\ref{rm:barely-sublinear}). At the moment, we do not even know whether the following admits a $\poly(d,1/\eps)$-time algorithm: given $\Dpair$ over $\R^d \times \R^d$ whose marginals are standard Gaussian and admitting a perfect $1/2$-balanced halfspace, find a $\Omega(1)$-balanced cut of error at most $\eps$. Furthermore, we note that the \sseh-hardness of improving the $\sqrt{\opt}$ dependence does not preclude a polynomial-time algorithm whose approximation guarantee is $r(d) \cdot \opt$ for some growing function $r(\cdot)$.}

%% file: sections/tech-overview.tex

\section{Technical Overview}
\label{sec:technical-overview}

At a high level, the proof of Theorem~\ref{thm:main-intro} has three conceptual steps. First, we reduce the data structure objective to a hierarchical cut objective: a tree is useful exactly when query-neighbor pairs remain together deep into the tree. Second, we show how to learn a \emph{single} balanced geometric cut. Properly learning the best balanced halfspace is \nphard, so we instead learn a low-degree polynomial
whose threshold gives a balanced low-error cut. Third, we recursively apply this cut learner and charge the resulting tree to the perfect halfspace tree using a Dasgupta-style sparsest cut argument. The main technical point is that the recursive certificates are no longer single halfspaces, but unions-of-intersections of halfspaces, which explains the stronger polynomial approximation assumption in the final theorem.

Recall that a balanced halfspace tree routes each point along a root-to-leaf path, and that searching for the nearest neighbor of a query $\bq$ amounts to routing $\bq$ to a leaf and scanning the dataset point stored there. A \emph{perfect} balanced tree never cuts a pair $(\bq, \nn) \sim \Dpair$, so $\nn$ lies alone in the leaf $T(\bq)$ and the search is immediate; an imperfect tree separates some pairs earlier, and the search must scan every dataset point that still shares a node with $\bq$ when its path diverges from $\nn$. This motivates minimizing the following asymmetric Dasgupta objective function~\cite{dasgupta2016},\footnote{In hierarchical clustering, one is given a weighted graph $G = (V, E)$ with $w \colon E \to \R_{\geq 0}$ and Dasgupta's objective asks to find a hierarchical (tree) partition of $V$ which minimizes $\sum_{e=(u,v) \in E} w(e) |V^{\lca_T(u,v)}|$. We say the objective in our setting is asymmetric because the size of the subtree is only measured with respect to dataset points $P$.} which measures the expected weight of dataset points sharing the deepest node (i.e. least common ancestor or LCA) common to $(\bq, \nn)$,  
\[
    \cost_{\Dpair}(T) := \Ex_{(\bq,\nn)\sim\Dpair}\left[\Prx_{(\bq',\bp') \sim \calD}\left[ \bp' \in P^{\lca_T(\bq, \nn)}\right]\right] \in [0,1].
\] 

In the case that the $\nn$-marginal of $\calD$ is uniform over $P$, this cost is exactly proportional to the number of dataset points in the LCA node of $\bq$ and $\nn$. Indeed, up to the cost of routing through the tree, $\cost_{\Dpair}(T)$ dictates the cost of nearest neighbor search (see Claim~\ref{claim:cost-to-search}) and reduces the data structure design question to finding a low-cost balanced cut tree. 

\subsection{Step 1: Finding a Single Cut} 

We use a greedy, top-down approach to building the tree and hence focus on the single-node, balanced halfspace cut problem described in Section~\ref{sec:intro}. The objective is to compute a geometric cut $f \colon \R^d \to \{-1,1\}$ which competes with halfspaces (later generalized to unions-of-intersections of halfspaces) and
\begin{itemize}
\item Minimizes the cut error $\err_{\Dpair}(f) := \Prx_{(\bq, \nn)\sim\Dpair}[f(\bq)\neq f(\nn)]$, so that a query and its nearest neighbor are often deep in the tree together;
\item Satisfies a balance constraint $\bal_{\Dpair}(f) := \min_{b\in\{-1,1\}} \{ \Prx_{(\bq,\nn)\sim\Dpair}[f(\nn)=b]\} \geq \Omega(1)$, so that the dataset decreases geometrically down the tree; and
\item Maintains a low computational complexity, so the time to evaluate $f(\bq)$ (which factors into the query time) is faster than a linear scan over the dataset.
\end{itemize}
The first two requirements (low error and balance) have direct analogues in graph algorithms (i.e., the balanced separator problem~\cite{LR99, ARV09}), but there is no direct mechanism for imposing the geometric (low-complexity) constraint. We will instead use the rounding idea of Cheeger's inequality~\cite{AM85}: first, find a real-valued function $F \colon \R^d \to \R$ of low computational complexity which minimizes
\[ \Rpair(F) := \dfrac{\Ex_{(\bq, \nn)\sim\Dpair}\left[ (F(\bq) - F(\nn))^2 \right]}{\Varx_{(\bq,\nn)\sim\Dpair}[F(\nn)]}. \] 
Then, round $F$ via the transformation $F(x) \mapsto \sign(F(x) - v) \cdot (F(x) - v)^2$ and a random threshold $\btau \sim [-R, R]$ for a sufficiently large $R > 0$. The final cut becomes $f(x) = -1$ if and only if $\sign(F(x) - v) \cdot (F(x) - v)^2 \leq \tau$ for fixed $v, \tau \in \R$, which inherits the low computational complexity of $F$.\footnote{The function $f$ given by $\sign( \sign(F-v) \cdot (F-v)^2 - \tau)$ is not immediately a PTF (as claimed in Theorem~\ref{thm:improper-alg}), but since $\sign(F-v) \cdot (F-v)^2$ is monotone in $F$, there is a threshold $\theta$ where $f(x) = \sign(F(x) - \theta)$, which \emph{is} a PTF when $F$ is a polynomial.} 

The plan becomes: minimize $\Rpair(F)$ over degree-$k$ polynomials $F$. This is both efficient, as it is a clean eigenvalue computation on a $\binom{d+k}{k} \times \binom{d+k}{k}$ psd matrix, and it automatically satisfies the low-complexity constraint, since $f$ can be evaluated in $O((d+k)^k)$ time. We use the distributional assumption on $\Dpair$ to ensure that the optimal halfspace cut $f^{\star}$ has a degree-$k(\eps)$ polynomial approximation $A^{\star}$, where $k(\eps)=\poly(1/\eps)$, whose squared error is at most $\eps$ under both marginals of $\Dpair$. This implies that the minimum value of $\Rpair(F)$ over degree-$k(\eps)$ polynomials is at most $O(\opt+\eps)$; see Appendix~\ref{subsec:polynomial-approx}. Finally, the analysis of Cheeger's rounding guarantees that the \emph{sparsity} of the output cut, measuring the ratio $\err_{\Dpair}(f) / \bal_{\Dpair}(f)$, is at most $O(\sqrt{\Rpair(F)}) = O(\sqrt{\opt + \eps})$. 

\textbf{Minimizing $\err_{\Dpair}(f) / \bal_{\Dpair}(f)$ may give unbalanced cuts.} The immediate issue with the above approach is that finding a degree-$\poly(1/\eps)$ polynomial $F$ which minimizes $\Rpair(F)$ may result in a sparse cut which approximately minimizes the ratio $\err_{\Dpair}(f) / \bal_{\Dpair}(f)$ but is very far from balanced. This is often not an issue in approximation algorithms for graphs, since an iterative peeling argument~\cite{LR99} can accumulate a $\Omega(1)$-balanced separator by a union of at most $n$ sparse cuts; however, this is precisely an argument we must avoid (since it would correspond to a slow algorithm for evaluating the cut function $f$). Instead, we add one more constraint to the search for a degree-$\poly(1/\eps)$ polynomial $F$ which ensures the resulting cut is $\Omega(1)$-balanced. Instead of minimizing $\Rpair(F)$, we seek a degree-$\poly(1/\eps)$ polynomial which minimizes
\begin{align} \Ex_{(\bq,\nn)\sim\Dpair}\left[ (F(\bq) - F(\nn))^2 \right] \qquad \text{subject to} \qquad \begin{array}{ll} \text{(i)} & \displaystyle \Ex_{(\bq,\nn)\sim\Dpair}[F(\nn)] = 0 \\
       \text{(ii)} & \displaystyle \Ex_{(\bq,\nn)\sim\Dpair}[F(\nn)^2] = 1 \\
       \text{(iii)} & \displaystyle \Ex_{(\bq,\nn) \sim\Dpair}[F(\nn)^4] \leq C \end{array}, \label{eq:ideal-poly-program}
\end{align}
for a large enough (constant) $C > 1$ (concretely, we set $C$ to $\Theta(1/\beta^2)$ whenever we want to compete with a $\beta$-balanced halfspace, and we will set $\beta=1/3$ by convention). 

Note, constraints (i) and (ii) enforce the variance of $F(\nn)$ is $1$. Constraint (iii), which upper bounds the fourth moment, guarantees that the unit variance does \emph{not} simply come from a tiny fraction of $\nn$ which have extremely high $F(\nn)^2$ while the rest of the dataset points have $F$-values tightly concentrated near 0. Specifically, an upper bound on the fourth moment lets us draw the random threshold $\btau$ from a restricted interval which guarantees $\bal_{\Dpair}(f) = \Omega(1/C)$ (Lemma~\ref{lem:shift-existence}). Fortunately, the same polynomial constructions for approximating halfspaces (using the anticoncentration and concentration of marginals of $\Dpair$) guarantee that for the \emph{optimal} balanced halfspace $f^{\star}$, there is a $\poly(1/\eps)$-degree polynomial $A^{\star}$ satisfying $\Ex_{\bx}\left[ (f^{\star}(\bx) - A^{\star}(\bx))^4\right] \leq \eps$ for $\bx$ drawn from both marginals of $\Dpair$ (see Appendix~\ref{subsec:poly-approx}). This is enough to show that the minimizing objective value is $O(\opt + \eps)$, so rounding guarantees $\err_{\Dpair}(f) = O(\sqrt{\opt + \eps})$ and $\bal_{\Dpair}(f) = \Omega(1)$.

\textbf{Final Challenge: (\ref{eq:ideal-poly-program}) is not a convex program.} The goal of (\ref{eq:ideal-poly-program}) is to optimize over the $\binom{d+k}{k}$ coefficients (corresponding to all monomials) in a degree-$k$ polynomial $F \colon \R^d \to \R$. Given a vector of coefficients $\phi$ and a point $x \in \R^d$, one evaluates $F(x)$ by $\langle \phi, \psi(x)\rangle$ where \smash{$\psi \colon \R^d \to \R^{\binom{d+k}{k}}$} is the appropriate mapping which prepares the monomials with input $x$. However, as described, (\ref{eq:ideal-poly-program}) is non-convex (e.g., if $F$ satisfies all constraints, so does $-F$, but not $(F-F)/2 = 0$).\footnote{Note that it is constraint (ii) which makes the above non-convex. On the other hand, without the fourth-moment constraint (iii), (\ref{eq:ideal-poly-program}) is solved via minimizing $\Rpair(F)$ using an eigenvalue computation, then translating and rescaling so (i) and (ii) are satisfied.} To handle all constraints simultaneously, we will solve a semidefinite programming relaxation and then round the output to a polynomial.

Instead of assigning each of the $\binom{d+k}{k}$ coefficients of $F$ a real number, we assign each coefficient a vector. The prior vector of coefficients $\phi$ now becomes a collection of vectors; \smash{$\psi \colon \R^d \to \R^{\binom{d+k}{k}}$} stays the same, so the prior inner products $F(x) = \langle \phi, \psi(x)\rangle$ becomes a linear combination of vectors in $\phi$. The objective and first two constraints now become
\[  \Ex_{(\bq,\nn)\sim\Dpair}[\|F(\bq) - F(\nn)\|_2^2]; \qquad \left\| \Ex_{(\bq,\nn)\sim\Dpair}[F(\nn)]\right\|_2^2 = 0 \qquad \text{and}\qquad \Ex_{(\bq,\nn) \sim \Dpair} [ \| F(\nn) \|_2^2 ] = 1, \]
and the final constraint can be expressed with
\[ \sup_{v \in \R^{P},\, \|v\|_{2} = 1}\left\{ \sum_{p\in P} v_p \cdot \Prx_{(\bq,\nn)\sim\Dpair}[\nn=p]^{1/2} \cdot \|F(p)\|_2^2\right\} = \Ex_{(\bq,\nn)\sim\Dpair}\left[ \|F(\nn)\|_2^4 \right]^{1/2} \leq C^{1/2}. \]
By expanding squared norms $\|\cdot\|_2^2 = \langle \cdot, \cdot\rangle$, all objectives and constraints are linear (in)equalities of inner products among vectors of $\phi$, and we can find a positive semidefinite matrix $M$ encoding all inner products. The final step, to obtain the polynomial $F \colon \R^d \to \R$ (with real-valued coefficients), is to draw a Gaussian vector with covariance $M$ for the coefficients. The semidefinite program is precisely set up to satisfy (\ref{eq:ideal-poly-program}) in expectation (over the draw of the Gaussian for the coefficients), and a simple argument shows that the rounding scheme affects the objective and constraints by only a constant factor (Claim~\ref{claim:sdp-rounding}).

\subsection{Recursive Sparsest Cut}

Given the balanced cut algorithm, our high-level strategy is simple: recursively apply the algorithm to obtain a top-down ``greedy'' tree. The key challenge is to \emph{charge} the cost of this tree (specifically, the contribution of each cut) to the optimal halfspace tree $T^{\star}$. We argue this using an adaptation of an argument by~\cite{charikar2017} to which we refer the reader. Namely, we show that $T^{\star}$ induces, at each of its levels, a clustering of the dataset, and that these clusters yield sparse ``certificate'' cuts of the clusters appearing in our recursive tree.

An important subtlety is that the clusters induced by $T^{\star}$ are \emph{not} halfspaces: each is carved out by the halfspace cuts along several root-to-node paths of $T^{\star}$, so the corresponding certificate cuts are given by a \emph{union-of-intersections of halfspaces}. This is precisely why in our final result (Theorem~\ref{thm:main-formal}), we require that polynomials approximate not just halfspaces (which is sufficient for a single cut) but unions-of-intersections of halfspaces. We defer additional details to Section~\ref{sec:dasgupta}.

%% file: sections/prelims.tex

\section{Preliminaries}

\subsection{The Balanced Halfspace Cut Problem}
\label{subsec:balanced-cut-def}

For a dimension $d \in \N$, a \emph{halfspace} $f \colon \R^d \to \{-1,1\}$ is specified by a function $f(x) = \sign(\langle h, x \rangle - \theta)$, where $h \in \R^d$ and $\theta \in \R$. More generally, a degree-$k$ \emph{polynomial threshold function} $f \colon \R^d \to \{-1,1\}$ is specified by $f(x) = \sign(A(x))$, where $A \colon \R^d \to \R$ is a degree-$k$ polynomial (and note, halfspaces are degree-1 polynomial threshold functions). We let $\calP_1$ be the set of halfspaces and $\calP_k$ the set of degree-$k$ polynomial threshold functions.

Let $\Dpair$ denote a distribution over pairs $(\bx,\by)\in\R^d\times\R^d$. In the context of nearest neighbor search, we will think of $\bx$ as a query and $\by$ as its nearest neighbor in a finite dataset, but our results on the balanced halfspace problem will be for arbitrary distributions over pairs. We use the notation $\Dx$ and $\Dy$ for the $\bx$- and $\by$-marginals of $\Dpair$, respectively.

\begin{definition}[Cut Error and Balance]
    Let $f : \R^d \to \{-1,1\}$. The \emph{error} of $f$ on $\Dpair$ is: \[\err_{\Dpair}(f) := \Prx_{(\bx,\by)\sim\Dpair}[f(\bx) \ne f(\by)].\] The \emph{balance} of $f$ is defined with respect to $\by$-marginal of $\Dpair$ as: \[\bal_{\Dpair}(f) := \min \ \cbr{\Prx_{\by\sim \Dy}[f(\by)=1],\Prx_{\by\sim \Dy}[f(\by)=-1]}.\]    
\end{definition}

\begin{definition}[Sparsest Balanced Halfspace Cut]
\label{def:opt-halfspace}
    Given a distribution $\Dpair$, we use $\opt_\beta(\calP_1)$ to denote the minimum error of any halfspace of balance $\ge \beta$. Formally: \[\opt_\beta(\calP_1) := \min_{f\in\calP_1}\cbr{\err_{\Dpair}(f) \mid \bal_{\Dpair}(f) \ge \beta}.\] When clear from context, we simply write $\opt_\beta$.
\end{definition}

\begin{tcolorbox}[colback=cyan!05]
    \begin{problem}[Balanced Halfspace Cut]
    \label{problem:balanced-halfspace-cut}
        Let $\Dpair$ be a distribution over pairs $(\bx,\by)\in\R^d\times\R^d$, let $\beta\in(0,1/2]$ be a balance target, and let $\eps > 0$ be an error parameter. Given $\poly(d)$ i.i.d.\ samples from $\Dpair$, the task is to design a $\poly(d)$-time algorithm that outputs a cut $f \colon \R^d \to \{-1,1\}$ satisfying: 
        \begin{itemize}
            \item \textbf{Approximate Optimality:} $\err_{\Dpair}(f) \le \Psi(\opt_\beta(\calP_1)) + \eps$, for a non-decreasing function $\Psi$.
            \item \textbf{Balance:} $\bal_{\Dpair}(f) \ge \psi(\beta)$, for a non-decreasing function $\psi$.
            \item \textbf{Cut Complexity:} In the \emph{proper} setting, we require $f\in \calP_1$. In the \emph{improper} setting, $f$ may be a cut of higher complexity, such as a low-degree PTF.
        \end{itemize}
    \end{problem}
\end{tcolorbox}

%% file: sections/cut-hard.tex

\section{Hardness of Proper Learning: Theorem~\ref{thm:strong-hardness}}
\label{sec:balanced-cut-hard}

While one would hope for an algorithm which outputs a low-error halfspace when one exists, i.e., $\err_{\Dpair}(f) \le O(\opt_\beta) + \eps$ for an error parameter $\eps$, our first result (Theorem~\ref{thm:strong-hardness}) gives a strong computational barrier: it is \nphard to obtain a small constant additive approximation or any multiplicative approximation to the optimal halfspace cut.

\begin{theorem}
    \label{thm:strong-hardness}
    There exists a universal constant $c>0$ such that, for any $\beta\in (0,1/2]$, the following problem is \nphard. Given pairs $(x_1,y_1),\ldots,(x_m,y_m)$ in $\R^d\times\R^d$, let $\Dpair$ denote the uniform distribution over $\{(x_i,y_i)\}_{i=1}^m$. Distinguish between the following cases:
    \begin{itemize}
        \item There exists $f\in\calP_1$ such that $\bal_{\Dpair}(f)=1/2$ and $\err_{\Dpair}(f) = 0$.
        \item Every $f\in\calP_1$ with $\bal_{\Dpair}(f) \ge \beta$ satisfies $\err_{\Dpair}(f)\ge c\beta$.
    \end{itemize}
\end{theorem}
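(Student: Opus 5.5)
We will reduce from an NP-hard hypergraph or CSP problem whose yes instances have a perfect balanced 2-coloring and whose no instances are far from any balanced coloring. The natural choice is balanced (bisection-type) 2-coloring of a constant-uniformity hypergraph, or equivalently a gap version of \textsc{Max-Bisection}/\textsc{Min-Bisection} restricted to perfect instances. Concretely, we use the NP-hardness of the following problem for a constant $k$. Given a $k$-uniform hypergraph $H=(V,E)$, decide whether $V$ can be split into two equal halves with no hyperedge split. The alternative is that every $\beta$-balanced partition splits at least a $c'\beta$ fraction of hyperedges.

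The construction encodes a partition of $V=[d]$ as the sign pattern of a halfspace on the standard basis. Each vertex $i$ becomes the point $e_i\in\R^d$. Each hyperedge $\{i_1,\ldots,i_k\}$ is replaced by the pairs $(e_{i_j},e_{i_{j+1}})$ along a path, or cycle, through its vertices, each appearing with multiplicity. We also add "balance-anchoring" pairs $(e_i,e_i)$ so that the $\by$-marginal is uniform over the $e_i$. Since these pairs are never cut, they do not affect error. Any subset $S\subseteq[d]$ is realized by the halfspace $\sign(\sum_{i\in S} x_i - 1/2)$ on the basis points. Conversely, any halfspace restricted to $\{e_i\}$ induces some subset, because $f(e_i)=\sign(h_i-\theta)$ and arbitrary labelings are achievable. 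So halfspaces restricted to this point set are exactly all $2$-colorings of $V$. Balance of $f$ equals the balance of the coloring, and error is proportional to the fraction of cut path-edges.

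The remaining argument is soundness and completeness. In the yes case, the perfect bisection yields a halfspace with balance exactly $1/2$ and error $0$. In the no case, every split hyperedge contributes at least one cut pair out of $k-1$ path pairs. After normalization by the total number of pairs, including the anchoring pairs, this gives $\err \ge c\beta$ with $c$ depending only on $k$ and on the weight split between anchoring and edge pairs.

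The main obstacle is securing a source problem with exactly this shape. It must have perfect completeness with an exact bisection, and soundness that scales linearly in $\beta$ uniformly for all $\beta\in(0,1/2]$. Plain graph Min-Bisection lacks perfect completeness hardness, since deciding whether a zero-cut bisection exists is a subset-sum/partition question on connected components. I would first try exactly that route. Take instances whose connected components have sizes forming a \textsc{Partition}-hard multiset, and make each component an expander of its size. In the yes case a zero-cut bisection exists. In the no case, any partition that cuts nothing is a union of components and hence not exactly balanced.

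To obtain the $c\beta$ gap, the no case requires something stronger than exact imbalance. I would use a strongly NP-hard gap variant, such as a \textsc{3-Partition}-based instance, where component sizes are polynomially bounded. In that variant every union of components has balance bounded away from any $\beta$-balanced target, unless the partition splits some component. Splitting an expander component of size $s$ costs $\Omega(\min(|S|,s-|S|))$ edges by expansion. We then sum over components: achieving balance $\beta$ forces a total of $\Omega(\beta d)$ misplaced vertices relative to any union of components, which translates into $\Omega(\beta)$ error. Making this last accounting work uniformly in $\beta$ is the step I expect to require the most care. It may require padding with many small gadget components so that unions of components cannot approximate any balance level in $[\beta,1/2)$ without splitting.
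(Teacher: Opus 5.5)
\textbf{Gap: with your embedding the promise problem is solvable in polynomial time.} No choice of source problem can rescue this reduction unless $\mathrm{P}=\mathrm{NP}$.

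On $\{e_1,\dots,e_d\}$ every labeling is realized by a halfspace. So a zero-error halfspace is exactly a labeling that is constant on each connected component of the graph whose edges are the input pairs. Its balance is then fixed by the integer $\by$-weights of the components, i.e., the number of listed pairs whose second point lies in the component. Deciding whether the components can be grouped into two classes of weight exactly $m/2$ is subset-sum with total weight $m$. The weights are in unary, because the distribution is uniform over explicitly listed pairs, so a standard DP decides this in $O(m^2)$ time. Hence the YES case can be recognized exactly on every instance your reduction outputs.

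The same argument shows your first source problem (bisecting a $k$-uniform hypergraph with no hyperedge split) is in P. Moving to 3-\textsc{Partition} does not help either. Its strong NP-hardness concerns splitting into many groups of three, whereas a two-sided zero-cut split is 2-\textsc{Partition}, which is pseudo-polynomial and therefore easy with unary weights. You spotted this obstruction for plain Min-Bisection, but it applies to every construction in which halfspaces shatter the support.

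\textbf{Missing idea: the hardness has to come from the geometric restriction to halfspaces, not from the combinatorics of cuts.} The paper does this as follows.
\begin{itemize}
\item It reduces from bounded-occurrence 3-\textsc{Set Splitting}, which has perfect completeness and no balance constraint.
\item It encodes the partition in the signs of the normal-vector coordinates $h_1,\dots,h_n$, not in labels of points.
\item Element pairs $(e_0+e_i/2,\,-e_0+e_i/2)$ and $(e_0+e_i/3,\,e_0-e_i/3)$, when uncut, force $|h_i|\in[1.8,3.3]$.
\item Set pairs with endpoints $\pm 5e_0+e_i+e_j+e_k$ and $\pm 11e_0+e_i+e_j+e_k$ are both uncut iff $h_i+h_j+h_k$ is small in absolute value, i.e., iff the set is split. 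This coupling through the linear form is precisely what arbitrary cuts cannot express.
\end{itemize}

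Balance is also decoupled from the source problem.
\begin{itemize}
\item About $M\approx\frac{1-\beta}{\beta}(n+|\calF|)$ copies of the self-pairs $(\pm e_0/10,\pm e_0/10)$ force every $\beta$-balanced halfspace to separate $\pm e_0/10$. This normalizes the halfspace to $h_0=1$, $|\theta|\le 1/10$.
\item Symmetrizing via $(x,y)\mapsto(-x,-y)$ gives balance exactly $1/2$ in the completeness case.
\item The $c\beta$ soundness bound, uniform in $\beta$, follows because the non-centering pairs make up a $\Theta(\beta)$ fraction of all pairs, and a constant fraction of them must be cut.
\end{itemize}
This removes the need for the expander and padding bookkeeping you anticipated.
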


    \subsection{Description of the Reduction} 

We prove Theorem~\ref{thm:strong-hardness} by reducing from the \setsplit problem (equivalent to NAE-SAT without negated variables).

\begin{definition} An instance of $3$-\setsplit is given by a family $\calF$ of size-$3$ subsets of $[n]$. The task is to find a partition $[n] = U_1\sqcup U_2$ maximizing the number of sets $S\in\calF$ that are \emph{split} (i.e., $S\cap U_1\neq \emptyset$ and $S\cap U_2\neq \emptyset$).
\end{definition}

\begin{lemma}[\cite{guruswami03, charikar2005}]\label{lem:set-split-hard}
    There is a constant $B\in\N$ such that given a $3$-\setsplit instance where each element $i\in[n]$ occurs in at most $B$ sets $S\in\calF$, it is \nphard to distinguish between the following cases: \begin{itemize}
        \item There exists a partition which splits all sets.
        \item Any partition splits at most $\frac{24}{25}$-fraction of the sets.
    \end{itemize}
\end{lemma}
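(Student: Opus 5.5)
The plan is a chain of gap-preserving reductions. The chain starts from a bounded-occurrence constraint satisfaction problem with perfect completeness and a constant soundness gap. It passes to NAE-$3$-SAT, then removes negations to land in $3$-\setsplit. Each step is built to keep three properties: perfect completeness, bounded occurrence, and an explicitly computable soundness constant. The constant $\frac{24}{25}$ is what comes out of the specific gadgets of~\cite{guruswami03} (refined in~\cite{charikar2005}), so I follow their gadgets rather than arbitrary ones. I will not try to amplify, since set-splitting gaps do not amplify cheaply.

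\textbf{Step 1 (starting point).} Take the PCP-based result that it is \nphard to distinguish satisfiable instances of E$3$-SAT from instances where at most a $(1-\delta_0)$-fraction of clauses is satisfiable. Use the version where each variable occurs a bounded number of times, obtained by the standard expander-replacement argument of Papadimitriou--Yannakakis. Then apply the classical reduction from $3$-SAT to NAE-$4$-SAT: add a single global variable $z$ and map each clause $C$ to the constraint $\mathrm{NAE}(C, z)$. Split each NAE-$4$ constraint into NAE-$3$ constraints with a fresh auxiliary variable: $\mathrm{NAE}(a,b,w)$ and $\mathrm{NAE}(\bar w,c,d)$. The global $z$ breaks bounded occurrence, so I replace $z$ by per-clause copies tied together by equality constraints along a bounded-degree expander. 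Each equality is encoded with NAE-$3$ gadgets. Completeness is clear. For soundness, the expander mixing property lets one round any assignment to one in which almost all copies of $z$ agree, while losing at most a constant factor of violated constraints.

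\textbf{Step 2 (removing negations).} For each variable $x$, introduce two ground elements $x^{+}$ and $x^{-}$, and replace each literal by the corresponding element. To force $x^{+}$ and $x^{-}$ onto opposite sides, add the sets $\{x^{+},x^{-},t\}$ for a constant number of ``tester'' elements $t$ per variable. Bind the testers with further gadget sets, again along a bounded-degree expander so that occurrences remain bounded by a constant $B$. A satisfying NAE assignment extends to a partition splitting every set, which gives completeness. For soundness, take any partition that splits more than a $\frac{24}{25}$ fraction of sets and decode each $x$ by the side of $x^{+}$. I will bound the number of NAE constraints violated by this decoding by a constant multiple of the unsplit sets. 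This is a charging argument: each inconsistent pair $x^{+},x^{-}$ forces some tester set to be unsplit, and each tester set is charged by $O(1)$ pairs.

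\textbf{Main obstacle.} The qualitative statement, hardness for \emph{some} constant gap, follows from the above. The hard part is the quantitative soundness bookkeeping needed to land exactly at $\frac{24}{25}$. Generic expander gadgets lose unspecified constants. To get the stated value I would instead start from Guruswami's direct reduction. That reduction goes from bounded-degree Max-$3$-Colouring, or from a Håstad-type $3$-query PCP with perfect completeness, into set-splitting via a small explicit gadget per constraint. I would then compute the exact fraction of gadget sets that any partition must leave unsplit per violated source constraint, normalize by the number of sets per gadget, and check that the resulting soundness is at most $\frac{24}{25}$. If my own calculation gives a weaker constant, I would fall back on the cited bound: the lemma is attributed to~\cite{guruswami03, charikar2005}, and the later reduction only needs some fixed constant gap with bounded occurrence.
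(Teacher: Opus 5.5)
\textbf{The paper gives no proof of this lemma.} It imports the statement from \cite{guruswami03, charikar2005}. Your chain of reductions is a reasonable outline for a qualitative version, and Step~1 (global $z$, splitting NAE-$4$ into two NAE-$3$ constraints, expander copies of $z$ with majority decoding) is standard and fine. As written, however, the proposal has two genuine gaps.

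\textbf{First gap: the negation-removal gadget in Step~2 does not work.} A set $\{x^{+},x^{-},t\}$ is split whenever $t$ lies opposite the common side of $x^{+}=x^{-}$. With free testers nothing is forced, so your charging claim (``each inconsistent pair forces some tester set to be unsplit'') is false. Binding all testers together along an expander does not repair this. It pins every tester to one side, say side $0$, and then $x^{+}=x^{-}=1$ still splits every tester set.

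The missing ingredient is a $3$-uniform equality or not-equal gadget. For example, take the Fano plane minus one line $\{a,b,c\}$. It has splittings (colour the removed line one colour and the other four points the other), and every splitting forces $a=b=c$, because the Fano plane is not $2$-colourable. Putting $x^{+}$ and a fresh $y$ on the removed line and adding the set $\{x^{+},y,x^{-}\}$ forces $x^{-}\neq x^{+}$ with only local, bounded-occurrence sets. No testers or expanders are needed for this step. The same gadget is also needed for your Step~1 equality constraints, since encodings like $\mathrm{NAE}(x,\bar y,\bar y)$ repeat an element and do not become size-$3$ sets. With this fix, your argument yields hardness for \emph{some} constant soundness $\rho<1$ with bounded occurrence.

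\textbf{Second gap: nothing in your plan establishes the specific value $\frac{24}{25}$.} Generic PCP-plus-expander reductions lose unspecified constants, and your fallback is to cite the result. That is what the paper does, not a proof of the lemma as stated.

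As you observe, the exact constant is inessential to the paper. In Lemma~\ref{lem:split-soundness}, a soundness of $\rho$ leaves at least $(1-\rho)|\calF|$ sets unsplit. The same charging then gives $\err_{\Dpair}(f)\ge (1-\rho)\beta/(24B)$, so Theorem~\ref{thm:strong-hardness} holds with $c=(1-\rho)/(24B)$. A repaired version of your argument therefore suffices for the paper's application, but it does not prove the lemma with the constant $\frac{24}{25}$.
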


We may assume without loss of generality that every element of $[n]$ belongs to at least one set of $\calF$. In particular, $n\le 3|\calF|$. We now describe a polynomial-time reduction from $3$-\setsplit to the balanced halfspace cut problem. Given a family $\calF$ of size-$3$ subsets of $[n]$, we construct a multiset of pairs in $\R^{n+1}\times\R^{n+1}$ where $\{e_0,\ldots,e_n\}$ denotes the standard basis of $\R^{n+1}$. As we describe, the $0$-th coordinate will be a normalization and coordinates $1, \dots, n$ will be associated with the elements in $[n]$. At a high level, the reduction will relate a halfspace $f(x) = \sign(\langle h, x\rangle - \theta)$ to a partition of $[n]$ via the signs of coefficients $h_1,\dots, h_n$ which specify $h$. The construction is as follows:
\begin{itemize}
    \item \textbf{Centering Pairs.} For $T := (1-\beta)/\beta$ and $M:=\lfloor T\cdot(n+|\calF|) \rfloor + 1$, we create $M$ copies of each of the following self-pairs: $L_0 := (-e_0/10,\, -e_0/10)$ and $R_0 := (e_0/10,\, e_0/10).$ $L_0$ and $R_0$ are never cut, but they force any sufficiently balanced halfspace $f(x) = \sign(\langle h, x \rangle - \theta)$, after normalization, to have $h_0 = 1$ and $|\theta| \leq 1/10$ (see the first paragraph in the proof of Lemma~\ref{lem:split-soundness}).
    \item \textbf{Element Pairs.} For every element $i\in[n]$ we create two pairs: $L_i := (e_0+e_i/2,\, -e_0+e_i/2)$ and $R_i := (e_0+e_i/3,\,e_0-e_i/3).$ These pairs help enforce that any halfspace $f(x) = \sign(\langle h, x \rangle - \theta)$ with $h_0 = 1$ and $|\theta| \leq 1/10$ will have $|h_i|$ relatively close to $2.5$ (see Claim~\ref{claim:element-pair-cut}). 
    \item \textbf{Set Pairs.} For every set $\{i,j,k\}\in\calF$, let $e_{i+j+k} := e_i+e_j+e_k$. We create two pairs: $L_{i,j,k} := (-5e_0+e_{i+j+k},\, -11e_0+e_{i+j+k})$ and $R_{i,j,k} := (5e_0+e_{i+j+k},\,11e_0+e_{i+j+k}).$ A halfspace $f(x) = \sign(\langle h, x \rangle - \theta)$ with $|\theta| \leq 1/10$ and $|h_i|, |h_j|, |h_k|$ close to $2.5$ will induce a partition (via the sign of $h$) which splits $\{ i, j, k\}$ iff both $L_{i,j,k}$ and $R_{i,j,k}$ are uncut (see Claim~\ref{claim:set-pair-cut}).
    \item \textbf{Symmetrization Pairs.} Finally, let $\calS_0$ denote the multiset comprising all element, set, and centering pairs constructed above. We define the final multiset of pairs as $\calS = \calS_0 \cup \{(-x, -y) \mid (x,y) \in \calS_0\}$. This symmetrizes the construction to ensure $\bal_{\Dpair}(f^*) = 1/2$ in the completeness case (see Lemma~\ref{lem:split-completeness}). Now, we define $\Dpair=\Dpair(\calF)$ to be the uniform distribution over the final multiset $\calS$.
\end{itemize}

\subsubsection{Analysis of the Reduction}

We begin with the two helper claims discussed above, which guarantee that a halfspace $f(x) = \sign(\langle h, x\rangle - \theta)$, normalized so $h_0 = 1$ and $|\theta| \leq 1/10$, will have $|h_i|$ close to $2.5$ (unless pairs were cut by $f$). Then, we prove the completeness of the reduction in Lemma~\ref{lem:split-completeness} and the soundness of the reduction in Lemma~\ref{lem:split-soundness}. 

\begin{claim}
\label{claim:element-pair-cut}
    Suppose $f(z)=\sign(\ip{h}{z}-\theta)$ satisfies $h_0 = 1$ and $|\theta|\le 1/10$. Then, for every $i\in[n]$: \begin{itemize}
        \item If $L_i$ and $R_i$ are uncut by $f$, then $|h_i|\in [1.8,3.3]$.
        \item If $|h_i| \in (2.2, 2.7)$ then $L_i$ and $R_i$ are uncut by $f$.
    \end{itemize}
\end{claim}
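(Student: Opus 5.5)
The plan is a direct computation. Both pairs are evaluated coordinatewise against the normalization $h_0 = 1$, $|\theta|\le 1/10$, and each of the two bullets comes from a sign analysis of the two affine values $\langle h, x\rangle - \theta$ and $\langle h, y\rangle - \theta$ for the pair $(x,y)$.

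\textbf{Element pair $L_i$.} Write $a := h_i/2 - \theta$. Then the two values are $\langle h, e_0 + e_i/2\rangle - \theta = 1 + a$ and $\langle h, -e_0+e_i/2\rangle - \theta = -1 + a$. These two numbers differ by $2$. They have the same (strict) sign exactly when $|a| > 1$, up to the boundary cases $|a| = 1$, which I will check separately against the paper's convention for $\sign(0)$.
\begin{itemize}
\item \emph{First bullet.} If $L_i$ is uncut, then $|a|\ge 1$. Hence $|h_i| \ge 2(1-|\theta|) \ge 1.8$.
\item \emph{Second bullet.} If $|h_i| > 2.2$, then $|a| \ge |h_i|/2 - |\theta| > 1.1 - 0.1 = 1$. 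So both values are nonzero and share a sign, and $L_i$ is uncut.
\end{itemize}

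\textbf{Element pair $R_i$.} Write $b := 1-\theta \in [0.9, 1.1]$. The two values are $b + h_i/3$ and $b - h_i/3$.
\begin{itemize}
\item If $|h_i|/3 < b$, both values are strictly positive, so $R_i$ is uncut.
\item If $|h_i|/3 > b$, one value is strictly positive and the other strictly negative, so $R_i$ is cut.
\item \emph{First bullet.} Consequently, if $R_i$ is uncut, then $|h_i| \le 3b \le 3.3$.
\item \emph{Second bullet.} If $|h_i| < 2.7$, then $|h_i|/3 < 0.9 \le b$, so $R_i$ is uncut.
\end{itemize}

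\textbf{Combining.} Putting the two pairs together gives $|h_i|\in[1.8,3.3]$ whenever both pairs are uncut, and both pairs are uncut whenever $|h_i|\in(2.2,2.7)$.

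\textbf{Main obstacle.} The only delicate point is the boundary, where one of the affine values is exactly $0$. In the second bullet all inequalities are strict, so this cannot occur. In the first bullet we only need the weak conclusions $|a|\ge 1$ and $|h_i|\le 3b$. These follow under any fixed convention for $\sign(0)$, because if the strict inequalities $|a|<1$ or $|h_i|/3>b$ held, the two values would have strictly opposite signs.
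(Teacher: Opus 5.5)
Your proposal is correct and follows the paper's proof: you evaluate the two affine values for $L_i$ and $R_i$ in the same way, deriving $|h_i/2-\theta|\ge 1$ and $|h_i|\le 3(1-\theta)$ for the first bullet, and the corresponding strict inequalities for the second. The one addition is your explicit treatment of the $\sign(0)$ boundary case, which the paper leaves implicit.
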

\begin{proof}
    Fix any $i\in[n]$. The pair $L_i$ is uncut when $\sign(1+h_i/2-\theta) = \sign(-1+h_i/2-\theta),$ which implies $|h_i/2-\theta| \ge 1$ and thus $|h_i| \ge 2-2|\theta| \ge 1.8$. Similarly, the pair $R_i$ is uncut when $\sign(1+h_i/3-\theta) = \sign(1-h_i/3-\theta),$ which implies $|1-\theta| \ge |h_i|/3$ and thus $|h_i| \le |3-3\theta| \le 3.3$. We conclude that $|h_i| \in [1.8,3.3].$ For the reverse direction, suppose that $|h_i| \in (2.2,2.7)$. Then since $|h_i/2-\theta| > 1$, we have $\sign(1+h_i/2-\theta) = \sign(-1+h_i/2-\theta),$ so $L_i$ is uncut by $f$. Similarly, $|1-\theta| > |h_i/3|$, meaning $\sign(1+h_i/3-\theta) = \sign(1-h_i/3-\theta),$ so $R_i$ is uncut by $f$.
\end{proof}

\begin{claim}
\label{claim:set-pair-cut}
    Suppose $f(z)=\sign(\ip{h}{z}-\theta)$ satisfies $h_0=1$ and $|\theta|\le 1/10$. Define the partition $$U^+ := \{i\in[n] : h_i\ge 0\},\quad U^- := \{i\in[n] : h_i < 0\}.$$Then, for any $\{i,j,k\}\in\calF$ with $|h_i|,|h_j|,|h_k|\in [1.8,3.3]$, the following holds:$$L_{i,j,k}\text{ and } R_{i,j,k}\text{ are both uncut by } f \iff \{i,j,k\}\text{ is split by } U^+\sqcup U^-.$$
\end{claim}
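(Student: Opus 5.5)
Write $s := h_i+h_j+h_k$. With $h_0=1$ we have $\langle h, -5e_0+e_{i+j+k}\rangle = s-5$ and $\langle h, -11e_0+e_{i+j+k}\rangle = s-11$, so $L_{i,j,k}$ is uncut iff $\sign(s-5-\theta)=\sign(s-11-\theta)$. Since $s-11-\theta < s-5-\theta$, this holds iff $s-\theta \notin (5,11]$, up to the boundary convention of $\sign$ at zero. Symmetrically, $R_{i,j,k}$ has inner products $s+5$ and $s+11$, so it is uncut iff $s+\theta\notin[-11,-5)$, i.e.\ iff $s$ avoids roughly $(-11-\theta,-5-\theta]$. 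The plan is to reduce the claim to this one-dimensional statement about $s$ and then do a case analysis on the sign pattern of $(h_i,h_j,h_k)$, using $|h_\ell|\in[1.8,3.3]$ and $|\theta|\le 1/10$.

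\textbf{Unsplit sets.} If $\{i,j,k\}$ is not split, all three $h_\ell$ share a sign. If all are $\ge 0$, they all lie in $[1.8,3.3]$, so $s\in[5.4,9.9]$ and $s-\theta\in[5.3,10]\subset(5,11]$; hence $L_{i,j,k}$ is cut. If all are negative, $s\in[-9.9,-5.4]$, which lies strictly inside the interval on which $R_{i,j,k}$ is cut. In both cases at least one pair is cut, which proves the ``$\Rightarrow$'' direction by contrapositive.

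\textbf{Split sets.} If the set is split, the signs are two-and-one. With two nonnegative and one negative, $s\in[2\cdot1.8-3.3,\,2\cdot3.3-1.8]=[0.3,4.8]$. With one nonnegative and two negative, $s\in[-4.8,-0.3]$. Either way $|s|\le 4.8$, so $|s\pm\theta|\le 4.9<5$. Then $s-5-\theta$ and $s-11-\theta$ are both negative, and $s+5+\theta$ and $s+11+\theta$ are both positive, so both pairs are uncut. This gives ``$\Leftarrow$''.

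The only delicate point is the boundary convention for $\sign(0)$. The margins above are strictly away from $0$ (for example $5.3>5$, $4.9<5$, $10<11$), so no inner product can vanish and the convention never matters. I expect this interval bookkeeping to be the main thing to check carefully, not a real obstacle.
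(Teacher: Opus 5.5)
Your proof is correct and is essentially the paper's argument: you reduce to $s=h_i+h_j+h_k$, show $|s-\theta|\le 4.9$ in the split case, and place $s-\theta$ in $[5.3,10]$ or $[-10,-5.3]$ in the two unsplit cases, with strict margins so the convention for $\sign(0)$ never matters. One harmless slip: for $R_{i,j,k}$ the relevant quantities are $s+5-\theta$ and $s+11-\theta$, so the cut condition is on $s-\theta$, not $s+\theta$; since you bounded $|s\pm\theta|$ and all margins are symmetric under $|\theta|\le 1/10$, the conclusion is unaffected.
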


\begin{proof}
    Fix $\{i,j,k\}\in\calF$ and write $s = h_i+h_j+h_k$. The pairs $L_{i,j,k}$ and $R_{i,j,k}$ are both uncut if and only if $\sign(s-5-\theta) = \sign(s-11-\theta)$ and $\sign(s+5-\theta)=\sign(s+11-\theta).$ Assume first that $\{i,j,k\}$ is split by $U^+\sqcup U^-$. Then, since $|h_i|,|h_j|,|h_k|\in[1.8,3.3]$, we have $|s| \le 3.3+3.3-1.8 = 4.8$ which gives $|s-\theta| \le 4.9$. It follows that $s-11-\theta < s-5-\theta < 0$ and $0 < s+5-\theta < s+11-\theta,$ which proves that $L_{i,j,k}$ and $R_{i,j,k}$ are both uncut. Now, suppose that $\{i,j,k\}$ is not split by $U^+ \sqcup U^-$. Then, $h_i,h_j,h_k$ are either all nonnegative or all negative. If they are all nonnegative, then $s \ge 3\cdot 1.8 = 5.4$, so $s-\theta \ge 5.3$. Also, $s \le 3\cdot 3.3 = 9.9$, giving $s-\theta \le 10$. Hence, $s-5-\theta > 0$ but $s-11-\theta < 0,$ and hence $L_{i,j,k}$ is cut. If $h_i,h_j,h_k$ are instead all negative, then $s \le -3\cdot 1.8 = -5.4$ and $s \ge -3\cdot 3.3 = -9.9$. Thus, $s-\theta \in [-10,-5.3]$ and therefore $s+5-\theta<0$ but $s+11-\theta > 0$. Hence, $R_{i,j,k}$ is cut in this case.
\end{proof}

\begin{lemma}[Completeness]
\label{lem:split-completeness}
    If there exists a partition splitting all sets in $\calF$, then there exists $f\in\calP_1$ with $\err_{\Dpair}(f) = 0$ and $\bal_{\Dpair}(f) = 1/2.$
\end{lemma}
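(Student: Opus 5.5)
Given a partition $[n] = U_1 \sqcup U_2$ that splits every set in $\calF$, we build the halfspace explicitly. Take $\theta = 0$, $h_0 = 1$, and for $i \in [n]$ set $h_i = 2.5$ if $i \in U_1$ and $h_i = -2.5$ if $i \in U_2$. Let $f(z) = \sign(\ip{h}{z})$. The plan is to check that every pair in $\calS_0$ is uncut. Because $\theta = 0$, the function $f$ is odd off the hyperplane, i.e.\ $f(-z) = -f(z)$ whenever $\ip{h}{z} \neq 0$. So a pair $(x,y)$ is uncut exactly when its mirror $(-x,-y)$ is uncut, provided neither point lies on the hyperplane.

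\textbf{Step 1: no cut pairs.} The centering pairs are self-pairs, so they are trivially uncut; their inner products are $\pm 1/10 \neq 0$. Every $|h_i| = 2.5$ lies in $(2.2, 2.7)$, so Claim~\ref{claim:element-pair-cut} shows that $L_i$ and $R_i$ are uncut. For these pairs the inner products are $\pm 1 \pm 1.25$ and $1 \pm 2.5/3$, all nonzero. Every set $\{i,j,k\}$ is split by $U^+ = U_1$ and $U^- = U_2$, and $2.5 \in [1.8, 3.3]$, so Claim~\ref{claim:set-pair-cut} shows that $L_{i,j,k}$ and $R_{i,j,k}$ are uncut. Here $s = \pm 2.5$, so the inner products $s \pm 5$ and $s \pm 11$ are nonzero. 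By the oddness remark, the mirrored pairs are uncut as well. Hence $\err_{\Dpair}(f) = 0$.

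\textbf{Step 2: exact balance.} The multiset $\calS$ is symmetric under $(x,y) \mapsto (-x,-y)$, so the $\by$-marginal is symmetric under negation. The second coordinates are never on the hyperplane, as Step 1 showed. Negation therefore gives a measure-preserving bijection between $\{\by : f(\by) = 1\}$ and $\{\by : f(\by) = -1\}$. This yields $\Pr[f(\by) = 1] = \Pr[f(\by) = -1] = 1/2$, so $\bal_{\Dpair}(f) = 1/2$.

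The main obstacle is the sign-convention edge case: $\sign(0)$ would break both oddness and exact balance. This is why I verify, as above, that no point of $\calS$ lies on the hyperplane. With that settled, everything else is a direct application of the two earlier claims.
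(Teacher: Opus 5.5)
Your proof is correct and follows the paper's argument essentially step for step. You use the same halfspace ($h_0=1$, $h_i=\pm 2.5$ according to the partition, $\theta=0$), the same appeal to the two helper claims for the original pairs, and oddness of $f$ both for the mirrored pairs and for exact balance. Your explicit check that no support point satisfies $\langle h, z\rangle = 0$ is just a more careful version of the paper's parenthetical remark on the same point.
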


\begin{proof}
    Let $[n] = U_1\sqcup U_2$ denote the partition splitting all sets in $\calF$. Consider the halfspace $f(z) = \sign(\ip{h}{z})$ where $h\in\R^{n+1}$ is defined as follows: first, $h_0 = 1$ (the normalization), and for $i \in [n]$, $h_i = 2.5$ if $i \in U^+$ and $h_i = -2.5$ if $i \in U^-$.
    Since $2.5\in [2.2,2.7]$, Claim~\ref{claim:element-pair-cut} implies that no element pair is cut by $f$. Since every set in $\calF$ is split by $U_1\sqcup U_2$ and $2.5\in [1.8,3.3]$, Claim~\ref{claim:set-pair-cut} implies that no set pair is cut by $f$. The centering pairs are self-pairs, so they are never cut. 
    Moreover, for any original pair $(x,y)$ uncut by $f$, its negated pair $(-x,-y)$ is also uncut since $f(-z) = \sign(\ip{h}{-z}) = -f(z)$ (noting $\ip{h}{z} \ne 0$). Hence, $\err_{\Dpair}(f) = 0.$
    To see that $f$ has a balance of exactly $1/2$, observe that the $\by$-marginal distribution $\Dy$ is entirely symmetric around the origin due to the addition of the symmetrization pairs. Since $f(-y) = -f(y)$ for all points in the support, exactly half of the points in $\Dy$ evaluate to $1$ and the other half evaluate to $-1$. Thus, $\bal_{\Dpair}(f) = 1/2.$
\end{proof}

\begin{lemma}[Soundness]
\label{lem:split-soundness}
    If every partition splits at most a $24/25$-fraction of the sets in $\calF$, then every $f\in\calP_1$ with $\bal_{\Dpair}(f) \ge \beta$ satisfies $\err_{\Dpair}(f)\ge \beta / (600B).$
\end{lemma}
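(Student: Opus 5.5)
Fix $f(z)=\sign(\ip{h}{z}-\theta)$ with $\bal_{\Dpair}(f)\ge\beta$, and suppose toward contradiction that $\err_{\Dpair}(f) < \beta/(600B)$. The plan has three steps. First, use the centering pairs to normalize $f$ so that $h_0=1$ and $|\theta|\le 1/10$. Second, read off a partition of $[n]$ from the signs of $h_1,\dots,h_n$. Third, show that many unsplit sets force many cut pairs via Claims~\ref{claim:element-pair-cut} and~\ref{claim:set-pair-cut}. Throughout, write $N:=|\calS| = 2(2M+2n+2|\calF|)$. Each cut pair contributes $1/N$ to the error. The centering copies make up a $2M/(2M+2n+2|\calF|) \ge T/(T+1) = 1-\beta$ fraction (roughly) of $\Dy$; the choice of $M$ is designed exactly so this fraction exceeds $1-\beta$.

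\textbf{Normalization.} The $\by$-values of the centering pairs are $\pm e_0/10$, each appearing $2M$ times (counting the symmetrized copies). Suppose $f(e_0/10)=f(-e_0/10)$. Then all centering points receive the same label. The remaining points are strictly fewer than a $\beta$ fraction, so balance would fail. Hence $f(e_0/10)\neq f(-e_0/10)$. This gives $|\theta| \le |h_0|/10$ and $h_0\ne 0$ (ties in $\sign$ need a convention check). Rescaling $(h,\theta)$ by $1/|h_0|$, and if $h_0<0$ replacing $f$ by $-f$ (which preserves both error and balance), we may assume $h_0=1$ and $|\theta|\le 1/10$.

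\textbf{Charging unsplit sets to cut pairs.} Define $U^+,U^-$ as in Claim~\ref{claim:set-pair-cut}. By hypothesis, at least $|\calF|/25$ sets $S\in\calF$ are unsplit. For each such $S=\{i,j,k\}$, one of two things happens. If all of $|h_i|,|h_j|,|h_k|\in[1.8,3.3]$, then Claim~\ref{claim:set-pair-cut} shows that $L_{S}$ or $R_{S}$ is cut. Otherwise some $i\in S$ has $|h_i|\notin[1.8,3.3]$, and Claim~\ref{claim:element-pair-cut} shows that $L_i$ or $R_i$ is cut. Each element pair is charged by at most $B$ sets, since each element lies in at most $B$ sets. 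Each set pair is charged only by its own set. So the number of cut pairs in $\calS_0$ is at least $|\calF|/(25B)$.

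\textbf{Converting to an error bound.} Since $N \le 4M+4n+4|\calF|$, the error is at least $|\calF|/(25B N)$. With $M \approx T(n+|\calF|)$, $T\le 1/\beta$ and $n\le 3|\calF|$, we get $N \lesssim 4(T+1)(n+|\calF|)+4 \le 4(1/\beta)\cdot 4|\calF| + O(1)$. This yields $\err \ge \beta/(25\cdot 16 B\cdot(1+o(1)))$, and the constant $600$ absorbs the $+1$ in $M$ and the rounding.

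\textbf{Main obstacle.} The delicate step is the normalization. It requires the exact counting showing that the centering mass alone exceeds $1-\beta$, so that an imbalanced treatment of $\pm e_0/10$ is impossible, together with handling the degenerate case $\ip{h}{z}=\theta$ under the $\sign$ convention. Everything after that is direct bookkeeping with the two claims.
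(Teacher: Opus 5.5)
Your proposal is correct and follows the paper's proof step for step: the centering pairs force $f(e_0/10)\neq f(-e_0/10)$ (otherwise the balance is at most $(n+|\calF|)/(n+|\calF|+M)<\beta$), which yields the normalization $h_0=1$, $|\theta|\le 1/10$; each of the at least $|\calF|/25$ unsplit sets charges a cut pair of $\calS_0$ via Claims~\ref{claim:element-pair-cut} and~\ref{claim:set-pair-cut}, with each element pair charged at most $B$ times; and $|\calS|=4(n+|\calF|+M)\le 24|\calF|/\beta$. The only loose spot is the final constant: rather than a $(1+o(1))$ factor, note that $4\le 8|\calF|/\beta$ (since $|\calF|\ge 1$ and $\beta\le 1/2$), so $16|\calF|/\beta+4\le 24|\calF|/\beta$ and the bound $\beta/(25\cdot 24B)=\beta/(600B)$ follows exactly.
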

\begin{proof}
    Let $f(z) = \sign(\ip{h}{z}-\theta)$. We first show that $f$ must separate the two original centering pairs $L_0$ and $R_0$. Suppose instead that $f(e_0/10) = f(-e_0/10).$ Then, the copies of $L_0$, $R_0$, and their negations all evaluate to the same sign. The total number of points from the centering pairs in the marginal $\Dy$ is $4M$. The total number of points in $\Dy$ overall is $4(n+|\calF| + M)$. This implies that the fraction of points on the minority side of $f$ is bounded by the fraction of non-centering points:
    $$\bal_{\Dpair}(f) \le \frac{4(n+|\calF|)}{4(n+|\calF| + M)} = \frac{n+|\calF|}{n+|\calF| + M}.$$
    Since $T := (1-\beta) / \beta$ and $M := \lfloor T\cdot(n+|\calF|) \rfloor + 1$ during the reduction, we ensure $M > \frac{1-\beta}{\beta}(n+|\calF|)$. Substituting this gives
    $$\bal_{\Dpair}(f) < \frac{n+|\calF|}{n+|\calF| + \frac{1-\beta}{\beta}(n+|\calF|)} = \beta,$$
    which 
    would contradict the condition that $\bal_{\Dpair}(f) \ge \beta$. Hence, $f(e_0/10)\neq f(-e_0/10)$. In particular, $h_0\ne 0$, so by rescaling $(h,\theta)$ we may assume that $h_0 = 1$ and $|\theta| \le 1/10$. 
    
    Now, define the partition $U^+ := \{i\in[n] : h_i\ge 0\}$ and $U^- :=\{i\in[n] : h_i < 0\}.$ By assumption, this leaves at least $|\calF|/25$ sets unsplit. Fix any unsplit set $\{i,j,k\}\in\calF$. We claim that one of the eight original pairs $L_i,\,R_i,\,L_j,\,R_j,\,L_k,\,R_k,\,L_{i,j,k},\,R_{i,j,k}$ must be cut by $f$. Indeed, if $L_i,R_i,L_j,R_j,L_k,R_k$ are all uncut by $f$, then Claim~\ref{claim:element-pair-cut} yields $|h_i|,|h_j|,|h_k|\in[1.8,3.3]$. Then, since $\{i,j,k\}$ is unsplit by $U^+ \sqcup U^-$, Claim~\ref{claim:set-pair-cut} implies that either $L_{i,j,k}$ or $R_{i,j,k}$ is cut. Therefore, each unsplit set cuts at least one of the original pairs in $\calS_0$. A set pair can be charged at most once, whereas an element pair can be charged at most $B$ times. Therefore, the number of cut pairs is at least $|\calF| / (25B)$. 
    Meanwhile, the total number of pairs in $\Dpair$ is $P = 4(n+|\calF|+M)$. Using $M \le \frac{1-\beta}{\beta}(n+|\calF|) + 1$, we have:
    $$P \le 4(n+|\calF|) + 4\left(\frac{1-\beta}{\beta}(n+|\calF|) + 1\right) = \frac{4}{\beta}(n+|\calF|) + 4.$$
    Using $n\le 3|\calF|$, and noting that $|\calF|\ge 1$ and $\beta \le 1/2$, we can bound the total number of pairs by $P \le 16|\calF|/\beta + 4 \le 24|\calF| / \beta$.
    Therefore, $$\err_{\Dpair}(f) \ge \frac{|\calF| / (25B)}{24|\calF| / \beta} = \frac{\beta}{600B},$$ as desired.
\end{proof}

\begin{proof}[Proof of Theorem~\ref{thm:strong-hardness}]
    Lemma~\ref{lem:set-split-hard} together with Lemmas~\ref{lem:split-completeness} and~\ref{lem:split-soundness} give the theorem, with $c = 1 / (600B)$.
\end{proof}

%% file: sections/cut-algo.tex

\section{An Improper Learning Algorithm: Theorem~\ref{thm:improper-alg}}
\label{sec:balanced-cut-algo}

Theorem~\ref{thm:strong-hardness} rules out polynomial-time proper learning for the balanced halfspace cut problem (Problem~\ref{problem:balanced-halfspace-cut}). We therefore consider a relaxation in which the learner is allowed to be improper and will output a low-degree PTF instead of a halfspace. Our algorithmic framework naturally extends beyond halfspaces, to any function class $\calF$ of cuts $f : \R^d\to\{-1,1\}$, which we will refer to as a cut class. We will take advantage of this generality in Section~\ref{sec:dasgupta}.

\begin{definition}[$\calF$-Approximation Degree]
\label{def:general-approx-deg}
    Let $\calF$ be a class of cuts $f : \R^d \to \{-1,1\}$. We say a distribution $\Dpair$ over $\R^d \times \R^d$ has \emph{$\calF$-approximation degree} $k(\cdot) \colon \R \to \N$ if for every $\eps > 0$ and every $f \in \calF$, there exists a polynomial $A \colon \R^d \to \R$ of degree at most $k(\eps)$ such that for all $\ell \in \{1,2,4\}$
    \[
    \Ex_{\bx \sim \Dx}\big[|A(\bx)-f(\bx)|^\ell\big] \le \eps \qquad \text{and} \qquad \Ex_{\by \sim \Dy}\big[|A(\by)-f(\by)|^\ell\big] \le \eps.
    \]
\end{definition}

We are particularly interested in settings where the distribution $\Dpair$ satisfies $k(\eps) = \poly(1/\eps)$ for the function class $\calF$ being the set of halfspaces $\calP_1$ (and will later generalize to certain functions of halfspaces). This restricts the class of input distributions $\Dpair$ that we can handle but corresponds to common assumptions made in learning theory. We refer the reader to Appendix~\ref{sec:poly-approx}, where we show that when the two marginals of $\Dpair$ ($\Dx$ and $\Dy$) satisfy standard anticoncentration and subgaussianity conditions (generalizing properties of the standard Gaussian), then $k(\eps) = \Ot(1/\eps^2)$ for halfspaces.

\begin{theorem}
\label{thm:improper-alg}
    There exists a randomized algorithm, \ptfcut, which receives sample access to a distribution $\Dpair$ over pairs $\R^d \times \R^d$, a balance parameter $\beta \in (0, 1/2]$ and an error parameter $\eps \in (0,1)$, and satisfies:
    \begin{itemize}
        \item Let $\calF$ be a cut class and $k(\cdot) \colon \R \to \N$ be the $\calF$-approximation degree of $\Dpair$. With probability at least $9/10$, \ptfcut outputs a PTF $\bg \in \calP_{k(\eps)}$ satisfying
        \[\err_{\Dpair}(\bg) \le O\del{\sqrt{(\opt_\beta(\calF)+\eps)/\beta}} \qquad\text{and}\qquad\bal_{\Dpair}(\bg) \ge \Omega(\beta^2).\]
        \item \ptfcut uses $O((d+k(\eps))^{k(\eps)} (1/\eps^2 + 1/\beta^4))$ samples from $\Dpair$ and runs in time $\poly((d+k(\eps))^{k(\eps)}, 1/(\beta\eps))$.
    \end{itemize}
\end{theorem}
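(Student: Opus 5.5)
Let $k := k(\eps)$, let $N := \binom{d+k}{k}$, and let $\psi\colon\R^d\to\R^N$ be the monomial feature map, so that a degree-$k$ polynomial is $F(x)=\langle \phi,\psi(x)\rangle$. The plan follows the three stages of the technical overview: (1) show the program (\ref{eq:ideal-poly-program}) with $C=\Theta(1/\beta^2)$ has a feasible point of value $O(\opt_\beta(\calF)+\eps)$; (2) solve an SDP relaxation on empirical moments and round it to a real polynomial satisfying the constraints up to constants; (3) apply Cheeger-style threshold rounding with a threshold range restricted by the fourth moment.

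\textbf{Step 1 (feasible certificate).} Take $f^\star\in\calF$ with $\bal(f^\star)\ge\beta$ and $\err(f^\star)=\opt_\beta(\calF)$, and let $A^\star$ be its degree-$k$ approximation from Definition~\ref{def:general-approx-deg}. Set $F^\star := (A^\star-\mu)/\sigma$, where $\mu=\E_{\Dy}[A^\star]$ and $\sigma^2=\Var_{\Dy}[A^\star]$. Since $\Var_{\Dy}[f^\star]=1-(\E f^\star)^2\ge 4\beta(1-\beta)\ge 2\beta$, and the $\ell=2$ approximation bound with $\eps\lesssim\beta$ gives $\sigma^2=\Theta(\Var f^\star)=\Omega(\beta)$, we can bound each term. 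For the objective, $(A^\star(\bx)-A^\star(\by))^2\le 3\bigl((A^\star-f^\star)(\bx)^2+(f^\star(\bx)-f^\star(\by))^2+(f^\star-A^\star)(\by)^2\bigr)$, so $\E[(F^\star(\bx)-F^\star(\by))^2]\le O((\opt+\eps)/\beta)$. For the fourth moment, $\E_{\Dy}[(A^\star-\mu)^4]\le O(1+\eps)$ using the $\ell=4$ bound together with $|f^\star-\mu|\le 2$, so $\E[F^{\star 4}]\le O(1/\beta^2)=:C$. (If $\eps\gtrsim\beta$ the claimed error bound exceeds $1$ and is trivial, apart from the balance guarantee; that case is handled by the same rounding.)

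\textbf{Step 2 (SDP and Gaussian rounding).} Draw $m=O(N(1/\eps^2+1/\beta^4))$ samples and form the empirical matrices $Q=\widehat\E[(\psi(\bx)-\psi(\by))(\psi(\bx)-\psi(\by))^\top]$, $\Sigma=\widehat\E[\psi(\by)\psi(\by)^\top]$, and $m_1=\widehat\E[\psi(\by)]$. Over PSD matrices $M\in\R^{N\times N}$ (so that $M=\E[\phi\phi^\top]$ for vector coefficients), minimize $\langle Q,M\rangle$ subject to $m_1^\top M m_1=0$, $\langle\Sigma,M\rangle=1$, and the fourth-moment constraint. The fourth-moment constraint is written as $\widehat\E[(\psi(\by)^\top M\psi(\by))^2]\le C$; this is convex (a norm of affine functions of $M$), equivalently the $\sup_v$ formulation of the overview, expressed as an SOC constraint over the empirical support. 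The matrix $M^\star=\phi^\star\phi^{\star\top}$ from Step 1 is feasible, so the optimum is $\le O((\opt+\eps)/\beta)$. Round by drawing $\bphi\sim N(0,M)$ and setting $F=\langle\bphi,\psi\rangle$. Then $F(\by)$ is Gaussian in $\bphi$ with variance $s(\by):=\psi(\by)^\top M\psi(\by)$, and the empirical mean $\widehat\E[F(\by)]$ is identically $0$. Moreover $\E_{\bphi}\widehat\E[F^2]=1$, $\E_{\bphi}\widehat\E[F^4]=3\,\widehat\E[s^2]\le 3C$, and $\E_{\bphi}$ of the objective is $\langle Q,M\rangle$. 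By Markov's inequality the objective and the fourth moment are within constant factors with constant probability. For the second moment lower bound, $\widehat\E[F^2]$ is a Gaussian quadratic form with mean $1$ and variance $\le 2\,\widehat\E[s^2]\le 2C$; Paley--Zygmund then gives $\widehat\E[F^2]\ge 1/2$ with probability $\Omega(1/C)$. I would repeat $O(C)$ times, keeping the best candidate. I expect this to be the main obstacle: it needs exactly the tradeoff reflected in $\bal=\Omega(\beta^2)$ and the $1/\beta^4$ samples. Finally, uniform convergence from empirical to true moments over degree-$k$ polynomials (a dimension-$N$ linear class, with fourth moments controlled by $C$) accounts for the sample count.

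\textbf{Step 3 (threshold rounding).} Given $F$ with mean $0$, variance $\ge 1/2$, $\E F^4\le O(C)$, and energy $E=\E[(F(\bx)-F(\by))^2]$, run the sweep from the overview on $G=\sign(F-v)(F-v)^2$ with $\btau$ uniform on $[-R,R]$, where $R=\Theta(1)$ and $v$ is chosen by the restricted-interval argument (as in Lemma~\ref{lem:shift-existence}). The error bound is $\Pr_{\btau}[\text{cut}]=\E|G(\bx)-G(\by)|/(2R)\le\E[|F(\bx)-F(\by)|(|F(\bx)-v|+|F(\by)-v|)]/(2R)\le \sqrt{E}\cdot O(1)$ by Cauchy--Schwarz, with second moments bounded by the fourth moment. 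This yields $\err\le O(\sqrt{(\opt+\eps)/\beta})$. For balance, by Paley--Zygmund on $F^2$ (ratio $(\E F^2)^2/\E F^4=\Omega(1/C)$), a constant fraction $\Omega(\beta^2)$ of the mass of $\by$ lies on each side of $v$ at distance $\Omega(1)$. Hence every threshold $\btau$ with $|\btau|\le R$ for a suitably small constant $R$ leaves $\Omega(\beta^2)$ mass on each side. Conditioning on this event and taking the threshold with the best ratio, estimated on fresh samples, gives the PTF $\sign(F-\theta)$ of degree $k$. The running time is dominated by the $N\times N$ SDP, i.e.\ $\poly(N,1/(\beta\eps))$.
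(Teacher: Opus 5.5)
Your overall architecture matches the paper's: approximation certificate, SDP with Gaussian rounding, Cheeger-style threshold rounding, then generalization. However, the step you flag as the main obstacle does not go through as written.

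\textbf{The Gaussian-rounding step (main gap).} You bound the variance of the empirical second moment $Z:=\frac1m\sum_i F(\by_i)^2$ by $2C$. Paley--Zygmund then gives success probability only $\Omega(1/C)=\Omega(\beta^2)$. Repeating $O(C)$ times does not fix this, because a single draw must \emph{simultaneously} satisfy the Markov events for the energy and the fourth moment. At constant thresholds those events fail with constant probabilities, which swamp $\Omega(1/C)$.

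To make the union bound work you would need Markov thresholds of order $C$. That gives energy $O((\opt_\beta(\calF)+\eps)/\beta^3)$ and kurtosis $O(1/\beta^4)$, i.e.\ error $O(\sqrt{(\opt_\beta(\calF)+\eps)/\beta^3})$ and balance $\Omega(\beta^4)$, which is not the theorem.

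The missing observation is that $\E_{\bzeta}[Z^2]$ is bounded by an absolute constant. Since $F(\by_i)\sim\calN(0,s_i)$ with $s_i=\psi(\by_i)^\top M\psi(\by_i)$, Cauchy--Schwarz and Gaussian fourth moments give $\E_{\bzeta}[F(\by_i)^2F(\by_j)^2]\le 3s_is_j$. Hence $\E_{\bzeta}[Z^2]\le 3\langle\Sigma,M\rangle^2=3$; equivalently, $\mathrm{Var}(Z)=2\,\mathrm{tr}((\Sigma M)^2)\le 2$. Paley--Zygmund then gives $Z>1/10$ with probability at least $1/4$, independent of $\beta$, as in Claim~\ref{claim:sdp-rounding}.

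Your diagnosis of where the $\beta$-dependence comes from is also off. The $\Omega(\beta^2)$ balance comes from $\kurt=O(1/\beta^2)$ feeding $\bal\ge\Omega(1/\kurt)$ in Lemma~\ref{lem:shift-existence}. The $1/\beta^4$ samples come from requiring uniform convergence of $\bal$ to accuracy $c_0\beta^2$. Neither comes from the Gaussian rounding.

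\textbf{Sample versus population.} Your Step-1 polynomial is normalized on $\Dy$, not on the sample, so $M^\star$ is not literally feasible for the empirical SDP. You need to re-center and rescale on the sample, controlling the empirical energy and fourth moment by Markov and the empirical mean and second moment by Chebyshev (Claim~\ref{claim:ideal-to-empirical}).

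At the end, ``uniform convergence of moments over degree-$k$ polynomials'' is not available distribution-free. A point of mass far below $1/m$, placed far away, makes $\E[F^2]$ arbitrarily larger than its empirical value for some linear $F$. Estimating thresholds on fresh samples only selects among thresholds; it does not show a good one exists for the population. The paper avoids both problems: it sweeps thresholds on the same sample where $F$'s moments are controlled, then applies VC uniform convergence to $\err$ and $\bal$ of degree-$k$ PTFs (Lemma~\ref{lem:uniform-convergence}).

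\textbf{Balance claim in Step 3.} Paley--Zygmund on $F^2$ only gives $\Omega(1/C)$ mass at distance $\Omega(1)$ from $v$ on \emph{some} side, not on each side. For example, let $F=\sqrt{(1-p)/p}$ with probability $p=1/C$ and $F=-\sqrt{p/(1-p)}$ otherwise. Then $F$ has mean $0$, variance $1$ and fourth moment about $C$. With $v$ the median (as in Lemma~\ref{lem:shift-existence}), every negative $\btau$ leaves one side empty.

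There are two ways to repair this:
\begin{itemize}
\item sweep only one-sided thresholds on the side of the median carrying more second moment, as the paper does; or
\item choose $v$ near $0$ via a quantile argument that uses the variance lower bound.
\end{itemize}
As written, the balance claim is unjustified.
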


Theorem~\ref{thm:improper-alg} gives the improper learner which outputs a low-degree PTF. It is useful to think of the balance parameter $\beta$ as a constant (it will be set to $1/3$ in our nearest neighbor application) and $\eps$ as a small additive error. In this case, \ptfcut outputs a $\Omega(1)$-balanced PTF whose degree is at most $k(\eps)$ and achieves error $O(\sqrt{\opt_{1/3}(\calF) + \eps})$. In Corollary~\ref{cor:square-root-tight}, we show that the square-root dependence on $\opt_\beta(\calF)$ in the error term is optimal assuming the Small Set Expansion Hypothesis (\sseh), even when $\Dpair$ has $\calF$-approximation degree $k(\eps) = 1$.\footnote{See Footnote~\ref{foot-open}.}

\subsection{Proof of Theorem~\ref{thm:improper-alg}}

In this subsection, we prove Theorem~\ref{thm:improper-alg} assuming Lemma~\ref{lem:sdp-algorithm} which we prove in Subsection~\ref{subsec:sdp-algorithm}. We adopt the following notation regarding samples from $\Dpair$. For a parameter $m \in \N$ which will be our sample complexity, we let $\calbS \sim \Dpair^m$ denote the collection of $m$ independent samples from $\Dpair$. We abuse notation to let $\calbS$ denote the uniform distribution over these samples. We use $\Sx$ and $\Sy$ to denote the $\bx$- and $\by$-marginals of $\calbS$. We will make use of the following two notions, the energy and kurtosis of a function with respect to a distribution over pairs.

\begin{definition}[Energy]
\label{def:rayleigh-quotient}
    Given $F : \R^d\to\R$, we define the following \emph{energy function} with respect to $\Dpair$:
    \[ \Rpair(F) := \frac{\Ex_{(\bx,\by)\sim\Dpair}[(F(\bx)-F(\by))^2]}{\Var_{\by\sim\Dy}[F(\by)]}. \]
\end{definition}

\begin{definition}[Kurtosis]
    Let $\calM$ be a distribution over $\R^d$. Let $F : \R^d\to\R$ with $\mu := \Ex_{\by\sim\calM}[F(\by)]$. The \emph{kurtosis} of $F$ is given by the following ratio: \[\kurt(F;\,\calM) := \frac{\Ex_{\by\sim\calM}[(F(\by)-\mu)^4]}{\Var_{\by\sim\calM}[F(\by)]}\]
\end{definition}

We will divide the proof of Theorem~\ref{thm:improper-alg} into the following steps. \begin{itemize}
    \item \textbf{Low energy implies sparse cut.} Consider a fixed set of samples $\calbS \sim \Dpair^m$. In Lemma~\ref{lem:shift-existence}, we show that any function $F \colon \R^d \to \R$ with energy $\calR_{\calbS}(F)$ and constant kurtosis on $\Sy$ (the second marginal of $\calbS$) admits a threshold cut with error $\approx \sqrt{\calR_{\calbS}(F)}$ and constant balance. The proof follows from a Cheeger-style rounding argument.
    \item \textbf{Existence of low-energy polynomial over the distribution.} In Lemma~\ref{lem:polynomial-existence}, we use the definition of $\calF$-approximation degree to show the existence of a degree-$k(\eps)$ polynomial $A^\star$ with constant kurtosis and $\Rpair(A^\star) \le O(\opt_{\beta}(\calF) + \eps)$.
    \item \textbf{Computing a low-energy polynomial over samples.} This step is the core algorithmic contribution. Lemma~\ref{lem:sdp-algorithm} gives an algorithm, based on a semidefinite programming relaxation followed by a rounding algorithm, to find a polynomial $A$ of degree $k(\eps)$ with $\calR_{\calbS}(A) \le O(\opt_{\beta}(\calF) + \eps)$ and constant kurtosis on $\calbS_{Y}$ for $\calbS \sim \Dpair^m$.
    \item \textbf{Generalizing back to the distribution.} We conclude that some threshold cut of $A$ has error $O(\sqrt{\opt_{\beta}(\calF)+\eps})$ and constant balance on $\calbS$. In Lemma~\ref{lem:uniform-convergence}, we give a uniform convergence argument showing that the same holds for the true distribution $\Dpair$.
\end{itemize}

\begin{restatable}[Cheeger-style Rounding, proof in Appendix~\ref{subsec:cheeger-rounding}]{lemma}{shiftExistence}
\label{lem:shift-existence}
    For any $F : \R^d\to\R$ and any distribution $\calS$ over pairs in $\R^d \times \R^d$, there is a threshold $\theta\in\R$ such that the PTF $g_{F, \theta} \colon \R^d \to \{-1,1\}$ given by $g_{F, \theta}(x) = \sign(F(x) - \theta)$ satisfies
    \[\err_{\calS}(g_{F,\theta}) \le O\!\del{\sqrt{\calR_{\calS}(F)}} \quad\text{and}\quad \bal_{\calS}(g_{F,\theta}) \ge \Omega\del{\frac{1}{\kurt(F;\,\calS_{Y})}}.\]
\end{restatable}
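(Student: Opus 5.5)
Normalize first. The statement is invariant under affine maps $F\mapsto aF+b$ with $a\neq 0$, since energy, kurtosis and the family of threshold cuts are all unchanged. So assume $\Ex_{\by\sim\calS_Y}[F(\by)]=0$ and $\Var_{\calS_Y}[F]=1$. Write $R:=\calR_\calS(F)=\Ex[(F(\bx)-F(\by))^2]$ and $K:=\kurt(F;\calS_Y)=\Ex[F(\by)^4]$. By Cauchy--Schwarz, $K\ge 1$.

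Following the technical overview, choose a median-like center $v$ and the transformation $G(x):=\sign(F(x)-v)\,(F(x)-v)^2$. Pick a threshold $\btau$ uniformly from an interval $[-L,L]$ with $L=\Theta(K)$. The cut is $\sign(G(x)-\btau)$, which equals $g_{F,\theta}$ for $\theta=v+\sign(\btau)\sqrt{|\btau|}$ because $G$ is monotone in $F$. The proof then compares the expected error with the expected imbalance over $\btau$ and applies averaging.

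\textbf{Error bound.} A pair $(x,y)$ is cut only if $\btau$ lies between $G(x)$ and $G(y)$, so $\Pr_{\btau}[\text{cut}]\le |G(x)-G(y)|/(2L)$. Use the standard inequality $|G(a)-G(b)|\le |a-b|\,(|a-v|+|b-v|)$. Then Cauchy--Schwarz gives
\[\Ex_{\btau}[\err]\le \frac{1}{2L}\sqrt{R}\cdot\sqrt{\Ex[(|F(\bx)-v|+|F(\by)-v|)^2]}.\]
The second moment under $\by$ is $O(1+v^2)$. Under $\bx$, write $|F(\bx)-v|\le |F(\bx)-F(\by)|+|F(\by)-v|$, which gives $O(R+1+v^2)$. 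If $R$ exceeds a constant, the claim is trivial, since error $\le 1\le O(\sqrt R)$ and any threshold with balance $\Omega(1/K)$ works, for example the one from the argument below. So assume $R=O(1)$ and $|v|=O(1)$. Then $\Ex_{\btau}[\err]=O(\sqrt R/L)$.

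\textbf{Balance: choice of $v$.} This is the main obstacle: showing that a uniformly random $\btau$ in the window has balance $\Omega(1/K)$ with constant probability. Choose $v$ to be a median of $F(\by)$. Mean zero and variance one give $|v|\le 1$ via Chebyshev, since a median is within one standard deviation of the mean. Then for $\tau\in[-L,L]$ the minority side has mass at least $\min\{\Pr[G(\by)\le\tau],\Pr[G(\by)>\tau]\}$. Since $v$ is a median, at each $\tau$ the side containing the median region has mass $\ge 1/2$. The other side has mass $\ge \Pr[G(\by)\ge |\tau|]$ (or the analog for negative $\tau$).

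Integrate: $\int_0^L \Pr[G(\by)>t]\,dt=\Ex[\min(G_+,L)]$, where $G_+$ is the positive part of $G(\by)$, and similarly for the negative side. We have $\Ex[|G(\by)|]=\Ex[(F-v)^2]\ge \Var F=1$. The truncation loss is at most $\Ex[|G|\mathbf 1_{|G|>L}]\le \Ex[G^2]/L=O(K)/L\le 1/4$ for $L=CK$. The lower-bound step needs mass on both sides of the median, not just in total. To get it, I would use that the mean-zero condition forces $\Ex[(F-v)_+]$ and $\Ex[(F-v)_-]$ to both be $\Omega(1/\sqrt{K})$-ish, via Paley--Zygmund with the fourth moment. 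Then $\Ex[\min(G_\pm,L)]\ge \Omega(1/K)$ on each side, possibly after shrinking the window to the relevant half.

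\textbf{Combining.} Set $D(\tau):=\min(\text{mass}_\pm)$. The averaging gives $\Ex_{\btau}[D(\btau)]\ge c/K$ and $\Ex_{\btau}[\err]\le C'\sqrt R/K$. Hence $\Ex_{\btau}\bigl[\err-(C'\sqrt R/c)\,D\bigr]\le 0$, so some $\tau$ has $\err(\tau)\le (C'\sqrt R/c)\,D(\tau)\le O(\sqrt R)$ with $D(\tau)>0$. This ratio guarantee alone does not force $D(\tau)\ge\Omega(1/K)$. To fix this, restrict to thresholds with $D(\tau)\ge c/(2K)$, which have $\btau$-probability $\Omega(1/K)$ because $D\le 1/2$. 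Then apply Markov to the error restricted to that event: $\Ex[\err\mid\text{good}]\le O(\sqrt R/K)/\Omega(1/K)=O(\sqrt R)$. This yields $\theta$ with both guarantees.
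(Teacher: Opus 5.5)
Your argument works, but one paragraph in it is wrong and should be deleted. That paragraph claims the balance bound ``needs mass on both sides of the median,'' and supports this by asserting that mean zero forces $\Ex[(F-v)_+]$ and $\Ex[(F-v)_-]$ both to be $\Omega(1/\sqrt{K})$. The assertion is false. Take $F(\by)=-1/\sqrt{2}$ with probability $2/3$ and $F(\by)=\sqrt{2}$ with probability $1/3$. This has mean $0$, variance $1$, $K=3/2$, and unique median $v=-1/\sqrt{2}$, so $(F-v)_-\equiv 0$.

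The claim is also not needed, because the sign of $\tau$ determines the minority side:
\begin{itemize}
\item For $\tau\ge 0$, the side $\{G\le\tau\}$ contains $\{F\le v\}$, which has mass at least $1/2$, so $D(\tau)=\Pr[G(\by)>\tau]$.
\item For $\tau<0$, the side $\{G>\tau\}$ contains $\{F\ge v\}$, so $D(\tau)=\Pr[G(\by)\le\tau]$.
\end{itemize}
Hence $\int_{-L}^{L}D(\tau)\,d\tau=\Ex[\min\{|G(\by)|,L\}]\ge 3/4$. This is exactly the total bound you had already derived, and it gives $\Ex_{\btau}[D(\btau)]\ge 3/(8L)=\Omega(1/K)$. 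With that in place, the rest of your proof goes through:
\begin{itemize}
\item Your error estimate, via $\bigl|s|s|-t|t|\bigr|\le|s-t|(|s|+|t|)$ and Cauchy--Schwarz, is correct.
\item Your final step is correct: the event $D(\btau)\ge c/(2K)$ has probability at least $c/K$ because $D\le 1/2$, and Markov on the conditional error then finishes.
\end{itemize}

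The paper's appendix proof uses the same median-shift idea but is one-sided and never conditions.
\begin{itemize}
\item It keeps only whichever of $F^{+}=\max\{F-\nu,0\}$ or $F^{-}$ has the larger second moment, and rescales it to $G$ with $\Ex[G^2]=1$.
\item It shows $\mathrm{Var}[G]\ge 1/2$, since $G=0$ on at least half the mass, so that $\calR(G)\le 4\calR(F)$.
\item It controls the error through the auxiliary quantity $\calS(G)=\Ex[|G(\bx)^2-G(\by)^2|]\le 8(\sqrt{\calR(G)}+\calR(G))$.
\item It thresholds $G^2$ at a uniform point of $[0,L]$. Here $L$ is the level at which $\Pr[G(\by)^2>L]$ drops below $1/(4\Ex[G(\by)^4])$, not a fixed multiple of $K$.
\end{itemize}
This adaptive window means every threshold inside it automatically has upper-tail mass $\Omega(1/K)$, while $\Ex[\min\{G^2,L\}]\ge 1/2$. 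So a single ratio-averaging step yields both guarantees at once.

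Your two-sided version with a fixed window $L=\Theta(K)$ is essentially the rounding sketched in the technical overview. It avoids the $F^{\pm}$ split, the variance lower bound for $G$, and the quantity $\calS$. The cost is the extra conditioning step needed to turn an averaged balance bound into a pointwise one. Both routes give error $O(\sqrt{\calR_{\calS}(F)})$ and balance $\Omega(1/K)$.
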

    
\begin{lemma}[Low-Energy and Kurtosis Polynomial Finder, proof in Subsection~\ref{subsec:sdp-algorithm}]
\label{lem:sdp-algorithm}
    There is a randomized algorithm that receives $\calbS \sim \Dpair^m$ where $\Dpair$ has $\calF$-approximation degree $k(\cdot)$, parameters $\beta \in (0,1/2]$ and $\eps \in (0, \beta/8)$, and outputs in $\poly((d+k(\eps))^{k(\eps)}, m, 1/\eps)$ a degree-$k(\eps)$ polynomial $\bA$ satisfying
        \[
    \calR_{\calbS}(\bA) \le O(1/\beta)\cdot (\opt_\beta(\calF)+\eps)
    \quad\text{and}\quad
    \kurt(\bA;\,\Sy) \le O(1/\beta^2).
    \]
    with probability at least $1/10$.
\end{lemma}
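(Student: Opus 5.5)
The plan is to realize the SDP relaxation of program~(\ref{eq:ideal-poly-program}) described in the technical overview, feed it the empirical distribution $\calbS$, and round it with a Gaussian. Fix $k := k(\eps)$, write $N := \binom{d+k}{k}$, and let $\psi \colon \R^d \to \R^N$ be the monomial feature map. A vector-valued degree-$k$ polynomial $F(x) = \Phi\psi(x)$, with $\Phi \in \R^{N\times N}$, is encoded by the psd matrix $M = \Phi^\top \Phi$, since $\langle F(x), F(x')\rangle = \psi(x)^\top M \psi(x')$. I would solve the following SDP over $M \succeq 0$:
\begin{itemize}
\item minimize $\Ex_{\calbS}[\psi(\bx)-\psi(\by)]^\top M [\psi(\bx)-\psi(\by)]$;
\item subject to $\mu^\top M \mu = 0$ with $\mu := \Ex_{\Sy}\psi(\by)$, and $\Ex_{\Sy}[\psi(\by)^\top M\psi(\by)] = 1$;
\item and the fourth-moment constraint $\sum_{y \in \mathrm{supp}(\Sy)} w_y\, (\psi(y)^\top M \psi(y))^2 \le C$, where $w_y$ is the empirical weight and $C = \Theta(1/\beta^2)$.
\end{itemize}
The last constraint is convex: it is a convex quadratic in the linear functionals $\psi(y)^\top M\psi(y)$, equivalently an $\ell_2$-norm bound as in the overview, i.e.\ a second-order cone constraint. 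So the program is solvable to additive accuracy $\eps$ in time $\poly(N, m, 1/\eps)$ by the ellipsoid method. (Boundedness of the feasible region needs a standard truncation on $\|M\|$, which I would not worry about.)

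\textbf{Feasibility with small value.} Take the $\beta$-balanced $f^\star \in \calF$ achieving $\opt_\beta(\calF)$ and its degree-$k$ approximator $A$ from Definition~\ref{def:general-approx-deg}. Set $A^\star := (A - \Ex A)/\sigma$ with $\sigma^2 = \Var A$. I would prove this in the spirit of Lemma~\ref{lem:polynomial-existence}: since $\Var_{\Dy} f^\star \ge 4\beta(1-\beta) \ge 2\beta$ and the $\ell_2$-error is at most $\sqrt\eps$ with $\eps < \beta/8$, we get $\sigma^2 = \Omega(\beta)$. Then
\[ \Ex_{\Dpair}[(A(\bx)-A(\by))^2] \le 3\bigl(4\opt_\beta + 2\eps\bigr), \]
so $\Rpair(A^\star) = O((\opt_\beta+\eps)/\beta)$. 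The fourth moment satisfies $\Ex[(A-\Ex A)^4] = O(1)$ because $f^\star$ is $\pm1$ and the $\ell_4$ error is at most $\eps$, so the kurtosis is $O(1/\beta^2)$.

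This concerns $\Dpair$, whereas the SDP is over $\calbS$. With $m$ samples, Markov/Chebyshev applied to the nonnegative quantities (numerator, variance, fourth moment) shows that each empirical quantity is within a constant factor of its expectation with constant probability. That gives a feasible $M^\star = a a^\top$, where $a$ is the coefficient vector of $A^\star$ after renormalizing on $\calbS$, with objective $O((\opt_\beta+\eps)/\beta)$.

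\textbf{Rounding.} Draw $\bg \sim N(0, M)$ and output $\bA(x) = \langle \bg, \psi(x)\rangle$. Then:
\begin{itemize}
\item $\Ex[\text{empirical mean of } \bA]^2 = \mu^\top M\mu = 0$, so the empirical mean is $0$ almost surely;
\item $\Ex\,\Ex_{\calbS}(\bA(\bx)-\bA(\by))^2$ equals the SDP value;
\item $\Ex\,\Ex_{\Sy}\bA(\by)^2 = 1$;
\item by Gaussian hypercontractivity, $\Ex_{\bg}\bA(y)^4 = 3(\psi(y)^\top M\psi(y))^2$, so $\Ex_{\bg}\Ex_{\Sy}\bA^4 \le 3C$.
\end{itemize}

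The main obstacle is the denominator: we need $\Ex_{\Sy}\bA^2 \ge 1/2$ with constant probability, which requires anticoncentration of a Gaussian quadratic form. Paley--Zygmund applies to $Z := \Ex_{\Sy}\bA(\by)^2$, since $\Ex Z = 1$ and $\Ex Z^2 \le 3$ (Gaussian fourth moment, as $Z$ is a psd quadratic form in $\bg$). This gives $\Pr[Z \ge 1/2] \ge 1/12$. Since the numerator and fourth moment are each at most $20\times$ their means with probability at least $1 - 1/20$ each by Markov, a union bound still leaves probability at least about $1/12 - 1/10$. That is negative, so I would tighten the Markov thresholds to about $100\times$ and adjust the constants, leaving probability at least $1/10 \cdot$ const. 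In the worst case I would amplify by repetition, since each candidate's empirical energy and kurtosis can be checked on $\calbS$ directly. On the good event, $\calR_{\calbS}(\bA) = O((\opt_\beta+\eps)/\beta)$ and $\kurt(\bA;\Sy) \le O(C) = O(1/\beta^2)$, and repeating $O(1)$ times and keeping a verified candidate yields success probability at least $1/10$.
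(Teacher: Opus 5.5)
Your proposal follows the paper's proof essentially step for step. It uses the same SDP, with the mean, second-moment and convex fourth-moment constraints. Feasibility comes from the normalized approximating polynomial, transferred to $\calbS$ via Markov/Chebyshev; the Chebyshev lower bound on the empirical variance needs $m = \Omega(1/\beta^2)$, as the paper also assumes. Rounding is by $N(0,M)$, with Paley--Zygmund on the empirical second moment.

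Your probability bookkeeping can be fixed without repetition: take the Paley--Zygmund threshold to be $1/10$, which gives $\approx 0.27$, as the paper does. If you do repeat instead, only the Gaussian draw can be resampled (not $\calbS$), and each candidate must be checked against the computed SDP value, since $\opt_\beta(\calF)$ is unknown.
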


\begin{restatable}[Uniform Convergence, proof in Appendix~\ref{subsec:proof-uniform-convergence}]{lemma}{uniformConvergence}
\label{lem:uniform-convergence}
    For any $\eps \in (0,1)$ and $k \in \N$, let $\calbS \sim\Dpair^m$ where $m = O(((d+k)^k+\log(1/\delta))/ \eps^2)$. Then, with probability $1-\delta$, every $g\in\calP_k$ satisfies \[|\err_{\calbS}(g) - \err_{\Dpair}(g)| \le \eps\quad\text{and}\quad |\bal_{\calbS}(g) - \bal_{\Dpair}(g)| \le \eps.\]
\end{restatable}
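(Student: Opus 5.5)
The plan is to reduce both statements to classical VC uniform convergence, using the monomial feature map $\psi \colon \R^d \to \R^N$ with $N = \binom{d+k}{k} \le (d+k)^k$. Every $g \in \calP_k$ has the form $g(x) = \sign(\langle \phi, \psi(x)\rangle)$ for a coefficient vector $\phi \in \R^N$. So the class $\{\{x : g(x) = 1\} : g \in \calP_k\}$ consists of preimages under $\psi$ of halfspaces through the origin in $\R^N$. I will fix one convention for $\sign(0)$ throughout. Allowing both the strict and non-strict versions only raises the VC dimension to at most $N+1$. Hence this class, and the class of complements, have VC dimension $O(N)$.

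\textbf{Balance.} We have $\Pr_{\by \sim \Sy}[g(\by)=b] = \Pr_{(\bx,\by)\sim\calbS}[(\bx,\by) \in E_b]$, where $E_b = \R^d \times g^{-1}(b)$. The event $E_b$ depends only on $\by$, so the class $\{E_b\}$ again has VC dimension $O(N)$ on pairs. I will apply the standard VC uniform convergence bound $m = O((D + \log(1/\delta))/\eps^2)$, with $D$ the VC dimension, using the chaining version so that no $\log(1/\eps)$ factor appears. This gives, with probability $1-\delta/2$, $|\Pr_{\Sy}[g=b] - \Pr_{\Dy}[g=b]| \le \eps$ simultaneously for all $g$ and both $b$. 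The map $(a,c) \mapsto \min\{a,c\}$ is $1$-Lipschitz in $\ell_\infty$, so $|\bal_{\calbS}(g) - \bal_{\Dpair}(g)| \le \eps$ follows.

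\textbf{Error.} Here the relevant class of events on $\R^d \times \R^d$ is $\calE = \{ E_g := \{(x,y) : g(x) \ne g(y)\} : g \in \calP_k\}$. I would not linearize $\sign(A(x)) \ne \sign(A(y))$ as $A(x)A(y) < 0$. That rewrites the event as a threshold over the tensor features $\psi(x)\otimes\psi(y)$ of dimension $N^2$, which would give the wrong sample complexity. Instead I will bound the growth function directly.

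Fix $m$ pairs $(x_i,y_i)$. The trace of $E_g$ on them is determined by the pattern of $g$ on the $2m$ points $x_1,\dots,x_m,y_1,\dots,y_m$. By Sauer--Shelah for a class of VC dimension $O(N)$, the number of such patterns is at most $(2em/N)^{O(N)}$. This growth-function bound implies $\mathrm{VC}(\calE) = O(N)$. A constant-factor loss is fine, because $m \log m \lesssim N \log m$ forces $m = O(N)$ only up to constants. The standard VC theorem then gives, with probability $1-\delta/2$, $|\err_{\calbS}(g) - \err_{\Dpair}(g)| \le \eps$ for all $g$, with $m = O((N + \log(1/\delta))/\eps^2)$.

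A union bound over the two events and $N \le (d+k)^k$ complete the proof. The only step requiring care is the error class, specifically the growth-function bound for $\calE$ that avoids the $N^2$-dimensional tensor lifting. Everything else is a direct application of the VC theorem.
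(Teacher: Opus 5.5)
Your proposal is correct and matches the paper's proof essentially step for step. The paper also uses the monomial lift to get VC dimension $O(N)$ for $\{g^{-1}(1)\}$. It handles balance through the $\by$-marginal via $\bal = 1/2 - |1/2 - \Pr[\by \in g^{-1}(1)]|$, which is equivalent to your Lipschitz remark. It bounds the cut class by the same Sauer--Shelah count of labelings of the $2m$ endpoints, then applies the VC uniform convergence bound to both classes. One cosmetic fix: from $2^m \le (2em/N)^{O(N)}$ you get $m/N \le O(\log(2e\,m/N))$, and that is what forces $m = O(N)$. Your parenthetical ``$m\log m \lesssim N\log m$'' is not what Sauer gives.
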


\begin{proof}[Proof of Theorem~\ref{thm:improper-alg} assuming above lemmas]
    Set $k = k(\eps)$ to simplify notation, and notice it suffices to consider the case where $\eps$ is a small enough constant factor of $\beta$ (otherwise, the required guarantee on $\err_{\Dpair}(\bg)$ becomes $1$ and is trivially satisfied). We repeat the following procedure $T = O(\log(1/\delta))$ times for a success probability $1-\delta$, and output the overall minimum:
    \begin{enumerate}
        \item Draw a sample $\calbS \sim \Dpair^m$ of size $m = O((d+k)^k \log T/\min\{\eps, \beta^2\}^2)$ (where we will use Lemma~\ref{lem:uniform-convergence} with error guarantee $\min\{ \eps, c_0 \beta^2 \}$ for a small enough constant $c_0 > 0$) and error probability $\delta/(2T)$, establishing the desired sample complexity.
        \item Run the algorithm from Lemma~\ref{lem:sdp-algorithm} on $\calbS$ to obtain a degree-$k$ polynomial $\bA \colon \R^d \to \R$.
        \item Evaluate $\bA(\bx)$ and $\bA(\by)$ for all $(\bx,\by) \in \calbS$. Sort these $2m$ values to obtain $2m+1$ candidate thresholds.
        \item Iterate over these thresholds to find $\theta \in \R$ and let $g_{\bA, \theta}\colon \R^d \to \{-1,1\}$ be $g_{\bA,\theta}(x) = \sign(\bA(x) - \theta)$ minimizing $\err_{\calbS}(g_{\bA, \theta})$ subject to $\bal_{\calbS}(g_{\bA,\theta}) \ge \Omega(\beta^2)$.
    \end{enumerate}

Finally, we output the degree-$k$ PTF $\bg_{\bA, \theta}$ achieving the minimal empirical error $\err_{\calbS}(\bg_{\bA, \theta})$ over all $T$ runs. To analyze the correctness of the algorithm, we first assume that for all $T$ repetitions of the execution above, the conclusions of Lemma~\ref{lem:uniform-convergence} are satisfied (and note this occurs except with probability at most $\delta/2$ via a union bound). Then, we consider a particular repetition among the $T$ where the conclusions of Lemma~\ref{lem:sdp-algorithm} are satisfied, which occurs except with probability $\delta/2$. We henceforth analyze the execution where conclusions of Lemma~\ref{lem:sdp-algorithm} hold.

By Lemma~\ref{lem:sdp-algorithm}, the degree-$k$ polynomial $\bA$ returned by Step 2 satisfies $\calR_{\calbS}(\bA) \le O(1/\beta) \cdot (\opt_\beta(\calF) + \eps)$  and $\kurt(\bA; \Sy) \le O(1/\beta^2)$. 
Applying Lemma~\ref{lem:shift-existence}, there exists a threshold $\theta \in \R$ such that 
\[\err_{\calbS}(g_{\bA, \theta}) \le O\del{\sqrt{(\opt_\beta(\calF) + \eps) / \beta}} \quad \text{and} \quad \bal_{\calbS}(g_{\bA, \theta}) \ge \Omega(\beta^2).\] 
Since Step 4 checks all relevant thresholds, the PTF returned $g_{\bA, \theta}$ satisfies these bounds. By Lemma~\ref{lem:uniform-convergence}, the above guarantee can be translated, from $\err_{\calbS}(g_{\bA, \theta})$ and $\bal_{\calbS}(g_{\bA,\theta})$ to $\err_{\Dpair}(g_{\bA, \theta})$ and $\bal_{\Dpair}(g_{\bA, \theta})$, up to an additive error $\eps$ (which is less than $\sqrt{\eps / \beta}$) for the error guarantee and $c_0\beta^2$ for arbitrarily small $c_0$ on the balance guarantee. Since Lemma~\ref{lem:uniform-convergence} always holds, the output PTF $\bg$ satisfies $\bal_{\Dpair}(\bg) = \Omega(\beta^2)$, and $\err_{\Dpair}(\bg)$ is at most the minimum over all $T$ repetitions of $\err_{\calbS}(g_{\bA, \theta}) + \eps$, satisfies the desired guarantees since at least one of the $T$ repetitions is appropriately bounded.
\end{proof}

\subsection{Proof of Lemma~\ref{lem:sdp-algorithm}}
\label{subsec:sdp-algorithm}

The proof of Lemma~\ref{lem:sdp-algorithm} proceeds by solving and then rounding an appropriate semidefinite program. The semidefinite program is a relaxation of an optimization problem over the space of polynomials; hence, we will first define the idealized optimization problem over polynomials to establish that there exists a feasible solution for the subsequent semidefinite programming relaxation. We use the shorthand $k := k(\eps)$.

\begin{definition}\label{def:Pi}
Let $\calA_k$ denote the set of polynomials $\R^d \to \R$ of degree at most $k$. For a parameter $C > 1$, and a distribution over pairs $\calS$, we let $\Pi(\calS, C)$ be given by 
\begin{align*}
    \min_{A \in \calA_k} \Ex_{(\bx,\by)\sim\calS}[ (A(\bx) - A(\by))^2 ] \qquad \text{s.t.} \qquad\begin{array}{ll} \text{(i)} & \Ex_{\by\sim \calS_Y}[A(\by)] = 0 \\
            \text{(ii)} & \Ex_{\by\sim \calS_{Y}}[A(\by)^2] = 1 \\
            \text{(iii)} & \Ex_{\by\sim \calS_{Y}}[A(\by)^4] \leq C \end{array} .
\end{align*}
\end{definition}

The subsequent lemma, which we prove in Appendix~\ref{subsec:polynomial-approx}, directly implies that the optimal solution to $\Pi(\Dpair, c /\beta^2)$ (for a large enough constant factor $c > 0$) is at most $O(1/\beta) \cdot (\opt_{\beta}(\calF) + \eps)$. The translation comes from the fact that constraints (i) and (ii) in Definition~\ref{def:Pi} normalize the polynomial to variance under $\Dpair$ being $1$ (which fixes the denominator in both $\Rpair(A^{\star})$ and $\kurt(A^{\star}, \Dy)$ to $1$).

\begin{restatable}[Low-Energy and Kurtosis Polynomial, proof in Appendix~\ref{subsec:polynomial-approx}]{lemma}{polynomialExistence}  
\label{lem:polynomial-existence}
    Whenever $\Dpair$ has $\calF$-approximation degree $k(\cdot)$, for any $\beta \in (0,1/2]$ and $\eps\in(0, \beta/8)$, there exists a degree-$k(\eps)$ polynomial $A^\star$ over $\R^d$ satisfying \[\calR_{\Dpair}(A^\star) = O(1/\beta) \cdot (\opt_\beta(\calF)+\eps)\quad\text{and}\quad\kurt(A^\star;\,\Dy) \le O(1/\beta^2).\]
\end{restatable}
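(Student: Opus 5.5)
Let $f^\star\in\calF$ attain $\opt_\beta(\calF)$, so $\err_{\Dpair}(f^\star)=\opt_\beta$ and $\bal_{\Dpair}(f^\star)\ge\beta$. The plan is to let $A$ be the degree-$k(\eps)$ polynomial given by Definition~\ref{def:general-approx-deg} for $f^\star$, satisfying $\Ex|A-f^\star|^\ell\le\eps$ for $\ell\in\{1,2,4\}$ under both marginals. We then take $A^\star:=A$ itself; both the energy and the kurtosis are invariant under affine rescaling, so no centering is needed. The proof reduces to three estimates: an upper bound on the numerator of $\Rpair(A)$, a lower bound on $\Var_{\Dy}[A]$, and an upper bound on the centered fourth moment of $A$ under $\Dy$.

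\textbf{Numerator.} Write $A(\bx)-A(\by)=(A-f^\star)(\bx)+(f^\star(\bx)-f^\star(\by))-(A-f^\star)(\by)$. By $(a+b+c)^2\le 3(a^2+b^2+c^2)$,
\[
\Ex_{\Dpair}[(A(\bx)-A(\by))^2]\le 3\eps+3\cdot 4\,\opt_\beta+3\eps=O(\opt_\beta+\eps).
\]
Here we use that $(f^\star(\bx)-f^\star(\by))^2=4\cdot\mathbf 1[f^\star(\bx)\neq f^\star(\by)]$, and that the $\ell=2$ approximation holds under both $\Dx$ and $\Dy$.

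\textbf{Variance lower bound.} Let $p=\Pr_{\Dy}[f^\star=1]\in[\beta,1-\beta]$. Then $\Var_{\Dy}[f^\star]=4p(1-p)\ge 4\beta(1-\beta)\ge 2\beta$. By the triangle inequality for standard deviations (Minkowski in $L^2$, centered),
\[
\sqrt{\Var[A]}\ge\sqrt{\Var[f^\star]}-\sqrt{\Var[A-f^\star]}\ge\sqrt{2\beta}-\sqrt{\eps}.
\]
Since $\eps<\beta/8$, we have $\sqrt\eps<\sqrt{\beta}/\sqrt8$, so $\Var_{\Dy}[A]\ge c\beta$ for an absolute constant $c>0$. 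Dividing the numerator bound by this gives $\Rpair(A)=O(1/\beta)(\opt_\beta+\eps)$.

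\textbf{Kurtosis.} Let $\mu=\Ex_{\Dy}[A]$. By Minkowski in $L^4(\Dy)$,
\[
\|A-\mu\|_4\le\|A-f^\star\|_4+\|f^\star\|_4+|\mu|\le\eps^{1/4}+1+(1+\eps)=O(1),
\]
using $|\mu|\le \Ex|f^\star|+\Ex|A-f^\star|\le 1+\eps$ (this is where the $\ell=1$ approximation enters). Hence $\Ex_{\Dy}[(A-\mu)^4]=O(1)$. The definition of $\kurt$ in the paper divides by the variance rather than its square, which would give only $O(1)/(c\beta)$. If the intended normalization is the standard one, $\Var^2$, we get $O(1)/(c\beta)^2=O(1/\beta^2)$, matching the claim. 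In either reading the bound $O(1/\beta^2)$ holds, since $1/\beta\le 1/\beta^2$ for $\beta\le 1/2$.

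The only delicate step is the variance lower bound. It is where the condition $\eps<\beta/8$ is needed, because it keeps the approximation error from destroying the variance contributed by the balance of $f^\star$. The rest is routine Minkowski and Cauchy--Schwarz bookkeeping.
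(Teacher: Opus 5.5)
Your proof is correct and follows the paper's argument essentially step for step. It uses the same three-term split of the energy numerator, the same triangle inequality for standard deviations giving $\Var_{\Dy}[A]\ge c\beta$, and an $O(1)$ centered fourth-moment bound (via Minkowski in $L^4$ where the paper uses $(a+b)^4\le 8a^4+8b^4$). The only real difference is that the paper normalizes to $A^\star=(\tilde A-\E[\tilde A])/\sqrt{\Var[\tilde A]}$, which makes the variance-versus-squared-variance ambiguity in the kurtosis definition moot, whereas you skip normalization and check both readings. Your aside that kurtosis is affine-invariant is false for the literal divide-by-variance definition, but since you never rescale it does no harm.
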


\begin{claim}
\label{claim:ideal-to-empirical}
    Let $A^\star$ be an optimal solution to $\Pi(\Dpair, C)$ with objective value $\eta$. Then if $m \ge 4000C$, with probability $0.9$ over $\calbS \sim \Dpair^m$, there exists a feasible solution $A$ to $\Pi(\calbS, 500C)$ whose objective value is $O(\eta)$.
\end{claim}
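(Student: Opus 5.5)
The plan is to take $A^\star$ itself, re-center and re-normalize it on the empirical $\by$-marginal, and show that with probability $0.9$ the three empirical quantities are within constant factors of their population values. Concretely, $A^\star$ satisfies $\Ex_{\Dy}[A^\star]=0$, $\Ex_{\Dy}[(A^\star)^2]=1$, $\Ex_{\Dy}[(A^\star)^4]\le C$, and $\Ex_{\Dpair}[(A^\star(\bx)-A^\star(\by))^2]=\eta$. Write $\hat\mu := \Ex_{\Sy}[A^\star(\by)]$ and $\hat\sigma^2 := \Ex_{\Sy}[(A^\star(\by)-\hat\mu)^2]$, and set $A := (A^\star-\hat\mu)/\hat\sigma$. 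This is still a degree-$k$ polynomial, and it satisfies constraints (i) and (ii) of $\Pi(\calbS,500C)$ exactly. It remains to bound the empirical objective and fourth moment.

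Each relevant empirical average is a nonnegative or mean-controlled average of $m$ i.i.d.\ terms, so I will control them with Markov or Chebyshev, each with failure probability $1/40$, and take a union bound.

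\begin{itemize}
\item \emph{Objective numerator.} $\Ex_{\calbS}[(A^\star(\bx)-A^\star(\by))^2]$ has expectation $\eta$, so by Markov it is at most $40\eta$.
\item \emph{Fourth moment.} $\Ex_{\Sy}[(A^\star)^4]$ has expectation at most $C$, so by Markov it is at most $40C$.
\item \emph{Mean.} $\hat\mu$ has mean $0$ and variance $1/m$, so by Chebyshev $|\hat\mu|\le \sqrt{40/m}\le 0.1$, using $m\ge 4000C \ge 4000$.
\item \emph{Second moment.} $\Ex_{\Sy}[(A^\star)^2]$ has mean $1$ and variance at most $\Ex[(A^\star)^4]/m\le C/m$. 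By Chebyshev it deviates from $1$ by at most $\sqrt{40C/m}\le 0.1$. This is where $m\ge 4000C$ is really used.
\end{itemize}

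On the intersection of these events, $\hat\sigma^2 = \Ex_{\Sy}[(A^\star)^2]-\hat\mu^2 \ge 0.9-0.01 \ge 0.8$. The objective of $A$ is then unchanged by the shift and divided by $\hat\sigma^2$, giving at most $40\eta/0.8 = 50\eta = O(\eta)$. For the fourth moment, use $(a-b)^4\le 8(a^4+b^4)$ to get
\[
\Ex_{\Sy}[A^4]\le \frac{8\,(40C+\hat\mu^4)}{\hat\sigma^4}\le \frac{8\,(40C+0.0001)}{0.64}\le 500C,
\]
since $C>1$. So $A$ is feasible for $\Pi(\calbS,500C)$ with objective $O(\eta)$. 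The total failure probability is at most $4/40=0.1$.

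The only delicate step is the second-moment concentration. The normalization can only be controlled through the fourth moment, which is why $m$ must scale with $C$. The remaining step is bookkeeping of constants so that they land under $500C$; if they do not, I would rebalance the four failure probabilities.
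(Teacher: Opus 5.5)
Your proof is correct and is essentially the paper's argument: the same re-centered and re-normalized $A^\star$, and the same four Markov/Chebyshev events, each with failure probability $1/40$. There is one arithmetic slip. After loosening $\hat\sigma^2 \ge 0.89$ to $0.8$, your final bound is $8(40C+10^{-4})/0.64 = 500C + 0.00125$, which slightly exceeds $500C$. Keep the bound $\hat\sigma^4 \ge 0.89^2 = 0.7921$ that you already derived; this gives roughly $404C \le 500C$, as in the paper.
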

\begin{proof}
    Consider the random variables $\hat{\boldeta}, \hat{\bmu}, \hat{\bs}_{2}$ and $\hat{\bs}_4$ defined as a function of the draw $\calbS \sim \Dpair^m$ given by 
    \[\hat\boldeta := \Ex_{(\bx,\by)\sim \calbS}\left[ (A^\star(\bx)-A^\star(\by))^2\right] \quad\hat\bmu := \Ex_{\by\sim \Sy}\left[A^\star(\by)\right]\quad\hat\bs_2 := \Ex_{\by \sim \Sy}\left[ A^\star(\by)^2 \right] \quad\hat\bs_4 := \Ex_{\by\sim \calbS_{Y}}\left[A^\star(\by)^4\right], \]
    and consider the following four events, (I) $\hat{\boldeta} \leq 40 \eta$, (II) $\hat{\bs}_4 \leq 40C$, (III) $|\hat{\bmu}| \leq 0.1$ and (IV) $|\hat{\bs}_2 - 1| \leq 0.1$. We will show that all four events hold with probability at least $0.9$, but when they do, we now show that the polynomial $A \colon \R^d \to \R$ given by $A(x) = (A^{\star}(x) - \hat{\bmu}) / (\hat{\bs}_2 - \hat{\bmu}^2)^{1/2}$ is a feasible solution to $\Pi(\calbS, 500C)$ with objective at most $45 \eta$.
    \begin{align*}
        \Ex_{(\bx,\by) \sim \calbS}\left[ (A(\bx) - A(\by))^2\right] &= \frac{1}{\hat{\bs}_2 - \hat{\bmu}^2} \Ex_{(\bx,\by)\sim \calbS}\left[ (A^{\star}(\bx) - A^{\star}(\by))^2 \right] \leq \frac{1}{0.9 - 0.1^2} \cdot \hat{\boldeta} \leq \frac{40 \eta}{0.89} \leq 45 \eta. \\
        \Ex_{\by \sim \Sy}\left[ A(\by) \right] &= \frac{1}{(\hat{\bs}_2 - \hat{\bmu}^2)^{1/2}} \left( \Ex_{\by \sim \Sy}\left[A^{\star}(\by) \right] - \hat{\bmu} \right) = 0 \\
        \Ex_{\by \sim \Sy}\left[ A(\by)^2\right] &= \frac{1}{\hat{\bs}_2 - \hat{\bmu}^2} \Ex_{\by \sim \Sy}\left[ (A^{\star}(\by) - \hat{\bmu})^2 \right] = \frac{1}{\hat{\bs}_2 - \hat{\bmu}^2} \left(\Ex_{\by \sim \Sy}\left[ A^{\star}(\by)^2 \right] - \hat{\bmu}^2 \right) = 1 \\
        \Ex_{\by \sim \Sy}\left[ A(\by)^4\right] &= \frac{1}{(\hat{\bs}_2 - \hat{\bmu}^2)^2} \Ex_{\by \sim \Sy}\left[ (A^{\star}(\by) - \hat{\bmu})^4 \right] \leq \frac{8}{(0.9 - 0.1^2)^2} \cdot \left(\Ex_{\by \sim \Sy}\left[ A^{\star}(\by)^4\right] + \hat{\bmu}^4 \right) \leq 500 C,
    \end{align*}
    where the last line used $(a - b)^4 \leq 8a^4 + 8b^4$. These establish that $A$ is feasible for $\Pi(\calbS, 500C)$ and has objective value $\le 45 \eta$. For the events: (I) occurs except with probability at most $1/40$, since $\hat{\boldeta}$ is non-negative and its expectation is $\eta$ (since $\calbS$ is drawn independently from $\Dpair$); similarly, (II) occurs except with probability $1/40$ since $\hat{\bs}_4$ is non-negative with expectation at most $C$. For (III), note that $\hat{\bmu}$ has expectation $\Ex_{\by\sim \Dy}[A^{\star}(\by)] = 0$, and its variance is $(1/m) \Ex_{\by\sim\Dy}[A^{\star}(\by)^2] = 1/m$. By Chebyshev's inequality, (III) occurs except with probability $100/m \le 1/40$. Similarly, for (IV), the expectation of $\hat{\bs}_2$ is $\Ex_{\by \sim \Dy}[A^{\star}(\by)^2]=1$, and its variance at most $C/m$, so Chebyshev's inequality implies (IV) holds except with probability $100C/m \le 1/40$.
\end{proof}

To define the SDP relaxation for $\Pi(\cdot,\cdot)$, we first introduce some useful notation. For any $\alpha \in \Z_{\geq 0}^d$, we let $x^{\alpha}$ to refer to the monomial $\prod_{i=1}^d x_i^{\alpha_i}$ and $|\alpha| = \sum_{i=1}^d \alpha_i$ to the degree of the monomial. Note that there are $N = \binom{d+k}{k}$ monomials $\alpha \in \Z_{\geq 0}^d$ of degree at most $k$. We let $\phi_k(z) \in \R^{N}$ denote the embedding $\phi_k(z)_{\alpha} = z^{\alpha}$ among $|\alpha| \leq k$, and note that every degree-$k$ polynomial $A \colon \R^d \to \R$ can be uniquely represented by $A(x) = w_A^{\intercal} \phi_k(x)$, where $w_A \in \R^N$ denotes the vector of coefficients in each of the monomials.

\begin{definition}[SDP Relaxation for $\Pi(\cdot, \cdot)$]
     For the collection of sampled pairs $\calbS = \{(\bx_i,\by_i)\}_{i=1}^m$, define the vectors in $\R^N$ given by $\bV_i := \phi_k(\bx_i) - \phi_k(\by_i)$ and $\bY_i := \phi_k(\by_i)$ and $\bar{\bY} := \frac{1}{m}\sum_{i=1}^m {\bY_i}$. We let $\Pi_{\textup{sdp}}(\calbS, C)$ be
     \begin{align*}
         \min_{M \in \R^{N\times N}} \frac{1}{m} \sum_{i=1}^m \bV_i^{\intercal} M \bV_i \qquad \text{s.t.} \qquad \begin{array}{ll} \text{(i)} & \bar{\bY}^{\intercal} M \bar{\bY} = 0 \\
                            \text{(ii)} & (1/m) \sum_{i=1}^m \bY_i^{\intercal} M \bY_i = 1 \\
                            \text{(iii)} & (1/m) \sum_{i=1}^m (\bY_i^{\intercal} M \bY_i)^2 \leq C \\
                            \text{(iv)} & M \succeq 0 \end{array}. 
     \end{align*}       
\end{definition}

\begin{claim}
\label{claim:sdp-solvable}
    There is a $\poly((d+k)^k, m, \log(1/\eps))$-time algorithm which receives $\calbS \sim \Dpair^m$ and outputs a psd matrix $M \in \R^{N \times N}$ which is feasible for $\Pi_{\textup{sdp}}(\calbS, 500c/\beta^2)$ for a large enough constant $c > 1$. As long as $m \geq 4000 c/\beta^2$, with probability at least $0.9$ over $\calbS \sim \Dpair^m$,
    \[ \frac{1}{m} \sum_{i=1}^m \bV_i^{\intercal} M \bV_i \leq O(1/\beta) \cdot (\opt_{\beta}(\calF) + \eps). \]
\end{claim}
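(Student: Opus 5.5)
We prove Claim~\ref{claim:sdp-solvable} in two parts: feasibility of the SDP with a good value (via Lemma~\ref{lem:polynomial-existence} and Claim~\ref{claim:ideal-to-empirical}), and efficient solvability of $\Pi_{\textup{sdp}}$ by the ellipsoid method.

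\textbf{Step 1: a good feasible point exists.} Apply Lemma~\ref{lem:polynomial-existence} with the given $\beta$ and $\eps<\beta/8$. It yields a degree-$k$ polynomial $A^\star$ with $\Rpair(A^\star)=O(1/\beta)(\opt_\beta(\calF)+\eps)$ and $\kurt(A^\star;\Dy)\le O(1/\beta^2)$. Center and scale $A^\star$ so that $\Ex_{\Dy}[A^\star]=0$ and $\Var_{\Dy}[A^\star]=1$; this leaves both quantities unchanged. The rescaled polynomial is then feasible for $\Pi(\Dpair,c/\beta^2)$ with objective $\eta=O(1/\beta)(\opt_\beta(\calF)+\eps)$. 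Here I read the kurtosis ratio as normalizing the fourth central moment by the squared variance, which is what constraint (iii) needs.

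Claim~\ref{claim:ideal-to-empirical} with $C=c/\beta^2$ and $m\ge 4000c/\beta^2$ then gives, with probability $0.9$, a polynomial $A(x)=w_A^\intercal\phi_k(x)$ feasible for $\Pi(\calbS,500c/\beta^2)$ with empirical objective $O(\eta)$. Set $M_0:=w_Aw_A^\intercal\succeq 0$. Since $\bY_i^\intercal M_0\bY_i=A(\by_i)^2$, $\bV_i^\intercal M_0\bV_i=(A(\bx_i)-A(\by_i))^2$ and $\bar\bY^\intercal M_0\bar\bY=(\Ex_{\Sy}A)^2$, constraints (i)--(iv) of $\Pi_{\textup{sdp}}(\calbS,500c/\beta^2)$ translate exactly from (i)--(iii) of $\Pi$. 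So the SDP optimum is $O(1/\beta)(\opt_\beta(\calF)+\eps)$.

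\textbf{Step 2: solving the SDP.} The objective and constraints (i), (ii) are linear in $M$. Constraint (iii) is a convex quadratic in $M$, since it is a sum of squares of linear functionals. Equivalently, it is the second-order cone constraint $\|(\bY_i^\intercal M\bY_i)_i\|_2\le\sqrt{mC}$, matching the overview's $\sup_v$ formulation. The feasible region is therefore convex with an efficient separation oracle: $M\succeq 0$ is separated by an eigenvector, and (iii) by its gradient.

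Because $\bar\bY^\intercal M\bar\bY=0$ with $M\succeq0$ is a degenerate face, I would run ellipsoid inside $M\succeq0$ with (i) relaxed to $\bar\bY^\intercal M\bar\bY\le\eps'$. Constraint (ii) together with $M\succeq 0$ bounds $\mathrm{tr}(M\cdot\frac1m\sum_i\bY_i\bY_i^\intercal)$, which bounds $M$ on the span of the samples. Restricting to that span, the ellipsoid method finds an $\eps'$-approximately feasible, $\eps'$-approximately optimal $M$ in time $\poly(N,m,\log(1/\eps'))$.

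\textbf{Step 3: exact feasibility.} To restore feasibility, project $M$ onto the orthogonal complement of the centering direction. Concretely, conjugate by the projection that subtracts the empirical mean, which in coefficient space means replacing $\phi_k(y)$ by $\phi_k(y)-\bar\bY$. This forces (i) exactly and leaves every $\bV_i$-term unchanged. Then rescale to restore (ii), which perturbs (iii) and the objective only by factors $1\pm O(\eps')$. Taking $\eps'=\poly(\eps)$ gives the stated $\log(1/\eps)$ dependence.

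The main obstacle is this approximate-to-exact cleanup and the boundedness and bit-complexity bookkeeping for ellipsoid. The degenerate equality (i) and the possible unboundedness of $M$ outside the sample span are exactly the places where a careless appeal to ``SDPs are solvable in polynomial time'' fails. Handling them through the reparametrization above, which works in the span of the $\bY_i-\bar\bY$, is what I would try first.
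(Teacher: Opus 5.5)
Your proposal is correct. Step~1 is exactly the paper's argument: Lemma~\ref{lem:polynomial-existence}, normalization, Claim~\ref{claim:ideal-to-empirical} with $C=c/\beta^2$, and the rank-one lift $w_Aw_A^\intercal$.

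The difference is in the solvability part. The paper linearizes (iii) with auxiliary variables $t_i$ and $2\times 2$ PSD blocks, then cites an interior-point SDP solver to additive error and says the error is absorbed. You observe, correctly, that (i) together with $M\succeq 0$ forces $M\bar{\bY}=0$, since $\bar{\bY}\neq 0$. This confines the feasible set to a proper face of the PSD cone, so there is no strictly feasible point and an additive-error solver does not literally return a feasible $M$. Your cleanup step is what justifies the word ``feasible'' in the claim. Claim~\ref{claim:sdp-rounding} relies on this to get $\bmu_{\bzeta}=0$ exactly.

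Read concretely, your cleanup is right. It is the congruence $M\mapsto QMQ^\intercal$ with $Q=I-e_0\bar{\bY}^\intercal$, where $e_0$ indexes the constant monomial. Since $\bar{\bY}_0=1$ and $(\bV_i)_0=0$, one gets
\[
Q^\intercal\bar{\bY}=0,\qquad Q^\intercal\bV_i=\bV_i,\qquad Q^\intercal\bY_i=\bY_i-\bar{\bY}.
\]
Hence (i) holds exactly, the objective is unchanged, and the left side of (ii) drops by exactly $\bar{\bY}^\intercal M\bar{\bY}\le\eps'$. Note that $Q$ is an oblique projection along $e_0$. The orthogonal projection off $\bar{\bY}$ would change the $\bV_i$-terms.

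Two small corrections:
\begin{itemize}
\item The fourth-moment constraint degrades by more than you state. Since $\|M^{1/2}(\bY_i-\bar{\bY})\|\le\|M^{1/2}\bY_i\|+\sqrt{\eps'}$, Minkowski's inequality bounds the new left side of (iii) by $(C^{1/4}+\sqrt{\eps'})^4$. That is a factor $1+O(\sqrt{\eps'})$, not $1\pm O(\eps')$. This is harmless, but it lands you in $\Pi_{\textup{sdp}}(\calbS,(1+o(1))\cdot 500c/\beta^2)$, so you need to enlarge $c$ slightly.
\item For boundedness, you cannot compress $M$ onto the span of the $\bY_i$ (or of the $\bY_i-\bar{\bY}$) alone. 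The $\bV_i$ involve $\phi_k(\bx_i)$, so that compression changes the objective and can destroy the witness $w_Aw_A^\intercal$. Compress instead onto $W=\mathrm{span}\{\phi_k(\bx_i),\phi_k(\by_i)\}_i$. This is lossless because every vector appearing in the program lies in $W$. Then bound $M$ in the remaining directions of $W$ by adding the cap $\frac1m\sum_i\bV_i^\intercal M\bV_i\le K/\beta$, with $K$ the explicit constant from Step~1. This cap does not cut off $w_Aw_A^\intercal$ on the good event.
\end{itemize}
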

\begin{proof}
    We first verify that condition (iii) can be formulated as a semidefinite program. We linearize this by introducing auxiliary variables $t_1, \dots, t_m \in \R$ and replacing the constraint with $(1/m)\sum_{i=1}^m t_i \le C$ and $t_i \ge (\bY_i^\top M\, \bY_i)^2$ for all $i \in [m]$. Indeed, each of the latter $m$ inequalities can be written as \[\begin{pmatrix} t_i & \bY_i^\top M\, \bY_i \\ \bY_i^\top M\, \bY_i & 1 \end{pmatrix} \succeq 0.\] The existence of a $\poly((d+k)^k, m, \log(1/\eps))$-time algorithm to solve $\Pi_{\textup{sdp}}(\calbS, 500c/\beta^2)$ to additive error $\eps$ (which can be absorbed in the asymptotic notation) follows from standard SDP solvers~\cite{jiang2020}. To argue the desired quality, we first note that by Lemma~\ref{lem:polynomial-existence} and Claim~\ref{claim:ideal-to-empirical}, with probability at least $0.9$ over $\calbS \sim \Dpair^m$, there is a feasible polynomial $A \colon \R^d \to \R$ to $\Pi(\calbS, 500c / \beta^2)$ whose objective value is $O(1/\beta) \cdot (\opt_{\beta}(\calF) + \eps)$. Then, if we consider the rank-$1$ $N \times N$ matrix $M^{\star} = w_Aw_A^{\intercal}$ where $w_A \in \R^N$ is the vector of coefficients of monomials of $A$, then
    \begin{align*}
        \frac{1}{m} \sum_{i=1}^m \bV_i^{\intercal} M^{\star} \bV_i &= \frac{1}{m} \sum_{i=1}^m (w_A^{\intercal} \bV_i)^2 = \frac{1}{m} \sum_{i=1}^m (w_A^{\intercal} \phi_k(\bx_i) - w_A^{\intercal} \phi_k(\by_i))^2 = \Ex_{(\bx,\by)\sim\calbS}\left[ (A(\bx) - A(\by))^2 \right],\\
        \bar{\bY}^{\intercal} M^{\star} \bar{\bY} &= \left( \frac{1}{m} \sum_{i=1}^m w_{A}^{\intercal} \bY_i\right)^2 = \left( \frac{1}{m} \sum_{i=1}^m A(\by_i) \right)^2 = \Ex_{\by \sim \Sy}\left[ A(\by) \right]^2 = 0 \\
        \frac{1}{m} \sum_{i=1}^m \bY_i^{\intercal} M^{\star} \bY_i &= \frac{1}{m} \sum_{i=1}^m (w_{A}^{\intercal} \phi_k(\by_i))^2 = \Ex_{\by \sim \Sy}\left[ A(\by)^2 \right] = 1, \\
        \frac{1}{m} \sum_{i=1}^m (\bY_i^{\intercal} M^{\star} \bY_i)^2 &= \frac{1}{m} \sum_{i=1}^m (w_{A}^{\intercal} \phi_k(\by_i))^4 = \Ex_{\by \sim \Sy}\left[ A(\by)^4 \right] \leq 500 c/\beta^2,
    \end{align*}
    which gives the desired bound. 
\end{proof}

\begin{claim}
\label{claim:sdp-rounding}
    Let $M \in \R^{N \times N}$ be a psd matrix satisfying the conclusions of Claim~\ref{claim:sdp-solvable}. If $\bzeta\sim\calN(0,M)$, then with probability $1/8$ over $\bzeta$, the polynomial $A_{\bzeta}(x) := \bzeta^\top\phi_k(x)$ satisfies \[\calR_{\calbS}(A_{\bzeta}) = O(1/\beta) \cdot (\opt_{\beta}(\calF) + \eps) \quad\text{and}\quad\kurt(A_{\bzeta};\,\Sy) \le O(1/\beta^2).\] 
\end{claim}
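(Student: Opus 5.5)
We compute the relevant moments of $A_{\bzeta}$ over $\calbS$ as Gaussian functionals of $\bzeta$, and show that each deviates from its expectation only by constant factors with constant probability. For each sample $i$, the values $A_{\bzeta}(\bx_i)-A_{\bzeta}(\by_i)=\bzeta^\top\bV_i$ and $A_{\bzeta}(\by_i)=\bzeta^\top\bY_i$ are centered Gaussians with variances $\bV_i^\top M\bV_i$ and $\bY_i^\top M\bY_i$. Set $\eta := \frac1m\sum_i \bV_i^\top M \bV_i = O(1/\beta)(\opt_\beta(\calF)+\eps)$. We track four random quantities:
\[ \mathbf{E} := \tfrac1m\textstyle\sum_i(\bzeta^\top\bV_i)^2,\quad \bmu := \bzeta^\top\bar{\bY},\quad \bs_2 := \tfrac1m\textstyle\sum_i(\bzeta^\top\bY_i)^2,\quad \bs_4 := \tfrac1m\textstyle\sum_i(\bzeta^\top\bY_i)^4. \]
Their expectations are $\eta$, $0$ (indeed $\bmu \equiv 0$ almost surely, since $\Var(\bmu)=\bar{\bY}^\top M\bar{\bY}=0$ by (i)), $1$ by (ii), and $3\cdot\frac1m\sum_i(\bY_i^\top M\bY_i)^2\le 3C$ by (iii), where $C=500c/\beta^2$.

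Since $\bmu=0$, we have $\Var_{\Sy}[A_{\bzeta}]=\bs_2$. It follows that $\calR_{\calbS}(A_{\bzeta})=\mathbf E/\bs_2$ and $\kurt(A_{\bzeta};\Sy)=\bs_4/\bs_2^{2}$; the kurtosis denominator should be the squared variance for scale invariance, and this is how Lemma~\ref{lem:shift-existence} uses it. By Markov's inequality, $\Pr[\mathbf E> 8\eta]\le 1/8$ and $\Pr[\bs_4>24C]\le 1/8$. It therefore remains to lower bound $\bs_2$ by a constant with probability bounded away from zero.

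\textbf{Main obstacle: anti-concentration of $\bs_2$.} Here $\bs_2$ is a nonnegative quadratic form in a Gaussian vector, with mean $1$. To show it is not too small, I would use Paley--Zygmund, which requires a bound on $\Ex[\bs_2^2]$. Expanding and applying Cauchy--Schwarz to the Gaussian fourth moments gives
\[ \Ex[\bs_2^2]=\frac1{m^2}\sum_{i,j}\Ex[(\bzeta^\top\bY_i)^2(\bzeta^\top\bY_j)^2]\le \frac{3}{m^2}\sum_{i,j}\sigma_i^2\sigma_j^2 = 3, \qquad \sigma_i^2:=\bY_i^\top M\bY_i. \]
Paley--Zygmund then yields $\Pr[\bs_2\ge 1/2]\ge (1/4)\cdot(1/3)=1/12$.

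That constant is too small to combine with the two Markov events of probability $1/8$ each. I would sharpen the bound by using $\Pr[\bs_2\ge\theta]\ge(1-\theta)^2/3$ with small $\theta$, and by loosening the Markov thresholds: for instance $\Pr[\mathbf E>40\eta]\le1/40$ and $\Pr[\bs_4>120C]\le 1/40$. Taking $\theta=0.05$ gives $\Pr[\bs_2\ge\theta]\ge 0.3$, so all three events hold simultaneously with probability at least $0.3-0.05\ge 1/8$.

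On this event, $\calR_{\calbS}(A_{\bzeta})\le 40\eta/\theta = O(\eta)$ and $\kurt(A_{\bzeta};\Sy)\le 120C/\theta^2=O(1/\beta^2)$, which is exactly the claim.
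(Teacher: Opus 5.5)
Your proof is correct and follows the paper's argument essentially step for step. Both use Markov's inequality on the energy and on the empirical fourth moment (via $\Ex[(\bzeta^\top\bY_i)^4]=3(\bY_i^\top M\bY_i)^2$), get $\bmu\equiv 0$ from constraint (i), bound $\Ex[\bs_2^2]\le 3$ by Cauchy--Schwarz, and finish with Paley--Zygmund for $\bs_2$ plus a union bound. The differences are cosmetic: the paper takes the Paley--Zygmund threshold $1/10$ (probability at least $1/4$) and leaves the Markov constants implicit, whereas you fix all of them explicitly. Your reading of $\kurt$ with the squared variance in the denominator is also the one that the appendix's affine-invariance argument actually requires.
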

\begin{proof}
    Consider the following random variables, where we consider a fixed setting of $\calbS = \{ (\bx_i, \by_i) \}_{i=1}^m$ and define the random variables with respect to the randomness in $\bzeta\sim \calN(0, M)$: 
    \[\boldeta_{\bzeta} := \frac{1}{m}\sum_{i=1}^m (A_{\bzeta}(\bx_i) - A_{\bzeta}(\by_i))^2 \quad \bmu_{\bzeta} := \frac{1}{m}\sum_{i=1}^m A_{\bzeta}(\by_i)\quad \bs_{2,\bzeta} := \frac{1}{m}\sum_{i=1}^m A_{\bzeta}(\by_i)^2\quad \bs_{4,\bzeta} := \frac{1}{m}\sum_{i=1}^m A_{\bzeta}(\by_i)^4.\]
    Note that $\boldeta_{\bzeta}$ is non-negative and has expectation exactly $(1/m) \sum_{i=1}^m \bV_i^{\intercal} M \bV_i$, which is at most $O(1/\beta) \cdot (\opt_{\beta}(\calF) + \eps)$ by Claim~\ref{claim:sdp-solvable} and hence $\boldeta_{\bzeta}$ is of the same order with high constant probability, by Markov's inequality. We will also have that $\bmu_{\bzeta} = 0$ with probability $1$. To see this, note
    \begin{align*}
        \Ex_{\bzeta}[\bmu_{\bzeta}^2] = \Ex_{\bzeta}\left[\del{\frac{1}{m}\sum_{i=1}^m A_{\bzeta}(\by_i)}^2\right] = \Ex_{\bzeta}\left[\left(\frac{1}{m} \sum_{i=1}^m \bzeta^{\intercal} \bY_i \right)^2\right] = \Ex_{\bzeta}[\bar{\bY}^\top \bzeta \bzeta^\top \bar{\bY}] = \bar{\bY}^\top M\, \bar{\bY} = 0,
    \end{align*}
    and therefore $\bmu_{\bzeta}$ is always $0$. To upper bound $\bs_{4,\bzeta} = O(1/\beta^2)$ with high constant probability, we use the fact that if $u\sim\calN(0,\sigma^2)$ then $\E[u^4] = 3\sigma^4$ to compute the expectation of $\bs_{4,\bzeta}$ below and apply Markov's inequality: 
    \[\Ex_{\bzeta}[\bs_{4,\bzeta}] = \frac{1}{m} \sum_{i=1}^m \Ex_{\bzeta}[A_{\bzeta}(\by_i)^4] = \frac{1}{m}\sum_{i=1}^m \E_{\bzeta}[(\bzeta^\top \bY_i)^4] = \frac{1}{m}\sum_{i=1}^m 3\cdot (\bY_i^\top M\, \bY_i)^2 \le 3 \cdot 500 c/ \beta^2.\] 
    Lastly, $\bs_{2,\bzeta}$ is a non-negative random variable which we show below has expectation $1$ and second moment at most $3$. 
    \begin{align*}
        \Ex_{\bzeta}[\bs_{2,\bzeta}] &= \Ex_{\bzeta}\left[\frac{1}{m}\sum_{i=1}^m A_{\bzeta}(\by_i)^2 \right] = \frac{1}{m}\sum_{i=1}^m \bY_i^\top M\, \bY_i = 1,
    \end{align*}
     To compute the second moment, we first note that for any $i \in [m]$, the fact that $A_{\bzeta}(\by_i) = \bzeta^{\intercal} \bY_i$ means that $A_{\bzeta}(\by_i)$ is, itself, distributed as a Gaussian with mean $0$ and variance $\bY_i^{\intercal} M \bY_i$. We thus obtain
     \begin{align*}
        \Ex_{\bzeta}[\bs_{2,\bzeta}^2] = \frac{1}{m^2}\sum_{i=1}^m \sum_{j=1}^m \E_{\bzeta}[A_{\bzeta}(\by_i)^2 \cdot A_{\bzeta}(\by_j)^2] &\leq \frac{1}{m^2} \sum_{i=1}^m \sum_{j=1}^m \left( \Ex_{\bzeta}[A_{\bzeta}(\by_i)^4] \Ex_{\bzeta}[A_{\bzeta}(\by_j)^4]\right)^{1/2} \\
                &\leq \frac{3}{m^2} \sum_{i=1}^m \sum_{j=1}^m (\bY_i^{\intercal} M \bY_i) \cdot (\bY_j^{\intercal} M \bY_j) = 3 \left(\frac{1}{m} \sum_{i=1}^m \bY_i^{\intercal} M \bY_i \right)^2 = 3.
     \end{align*}
    Applying the Paley-Zygmund inequality, we will conclude $\bs_{2,\bzeta} > 1/10$ with probability at least $1/4$. Via a union bound, with probability at least $1/8$ all events hold, in which case we obtain an upper bound on both $\calR_{\calbS}(A_{\bzeta})$ and $\kurt(A_{\bzeta}; \calbS_{Y})$, only losing a constant factor (from the re-normalization of the variance of $A_{\bzeta}(\by)$ over $\by \sim \calbS_{Y}$ potentially decreasing to $1/10$ and the fourth moment increasing by a constant factor).
\end{proof}

\begin{proof}[Proof of Lemma~\ref{lem:sdp-algorithm}]
    The probability that the conditions of Claim~\ref{claim:sdp-solvable} and Claim~\ref{claim:sdp-rounding} succeed is at least $0.9\cdot 1/8 > 1/10$, since $\bzeta$ is drawn independently from $\calbS$. The runtime of the algorithm is $\poly((d+k)^k, m, 1/\eps)$.
\end{proof}

\subsection{Tightness of Theorem~\ref{thm:improper-alg}}

We note that Theorem~\ref{thm:improper-alg} is essentially tight, in the sense that the square-root dependence in the error guarantee cannot be improved. In Corollary~\ref{cor:square-root-tight}, we show that the square-root dependence on $\opt_\beta(\calF)$ in the error term is optimal assuming the Small Set Expansion Hypothesis (\sseh), even when every cut can be written as a degree-$1$ polynomial (i.e., when $\calF$ is the class of all cuts and $\Dpair$ has $\calF$-approximation degree $k(0) = 1$). To establish this, we appeal to the following result of~\cite{raghavendra2012} regarding the hardness of the balanced cut problem in graphs.

\begin{lemma}[Corollary~IV.6 of~\cite{raghavendra2012}]
\label{lem:graph-hardness}
    Let $G=(V,E)$ be a graph. There exists a constant $c>0$ such that for any
    $\alpha>0$, it is \nphard (assuming \sseh) to distinguish between the following
    cases:
    \begin{itemize}
        \item There exists a cut $(S,V\setminus S)$ with $|S|=|V|/2$ such that
        \[
        \frac{|E(S,V\setminus S)|}{|E|} \le \alpha + o(\alpha).
        \]
    
        \item Every cut $(S,V\setminus S)$ with $|S|\in [\frac{|V|}{10}, 
        \frac{9|V|}{10}]$ satisfies
        \[
        \frac{|E(S,V\setminus S)|}{|E|} \ge c\sqrt{\alpha}.
        \]
    \end{itemize}
\end{lemma}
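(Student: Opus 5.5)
Lemma~\ref{lem:graph-hardness} is quoted from~\cite{raghavendra2012}, so I would not reprove the underlying reduction. The plan is to read off their balanced-separator / min-bisection hardness theorem and check that its normalization matches the one used here.

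The starting point is the main reduction of~\cite{raghavendra2012}. From a Gap-Small-Set-Expansion instance (with parameters $\eta, \delta$ chosen as functions of $\alpha$) it produces a regular weighted graph $G$ (degree $D$) such that:
\begin{itemize}
\item \emph{Completeness:} some bisection $S$ has edge expansion $\Phi(S) = |E(S,V\setminus S)|/(D|S|) \le \alpha + o(\alpha)$.
\item \emph{Soundness:} every set with volume between $|V|/10$ and $9|V|/10$ has expansion at least $c'\sqrt{\alpha}$.
\end{itemize}
The soundness half is proved by a noise-stability / invariance argument. For a noise rate $\alpha$, the best cut of the Gaussian noise graph (the Gaussian analogue of a balanced cut) has boundary $\Theta(\sqrt{\alpha})$. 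The theorem lifts this Gaussian fact through the SSE-based gadget.

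The only work needed here is the change of normalization, from expansion $\Phi(S)$ to the edge fraction $|E(S,V\setminus S)|/|E|$. Since $G$ is $D$-regular, $|E| = D|V|/2$, so
\[
\frac{|E(S,V\setminus S)|}{|E|} = \frac{2|S|}{|V|}\,\Phi(S).
\]
In the completeness case $|S| = |V|/2$, so this fraction equals $\Phi(S) \le \alpha + o(\alpha)$. In the soundness case $|S|\in[|V|/10, 9|V|/10]$, we may apply the bound to the smaller side, whose size is at least $|V|/10$. This gives a fraction of at least $(1/5)\cdot c'\sqrt{\alpha}$, and we take $c = c'/5$. If the construction yields weighted or multigraph edges, I would replace $G$ by its unweighted blow-up with rational weights scaled to integers; this changes nothing in the ratios.

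The main obstacle is bookkeeping rather than mathematics. We must confirm that Corollary~IV.6 is stated for regular graphs, or for volume rather than cardinality, and that the soundness balance window it guarantees contains $[1/10, 9/10]$. If instead the corollary is stated with volume-balance, I would use regularity to identify volume with cardinality. If the window is narrower, I would shrink the constants, since the SSE-hardness is robust to any constant balance window.
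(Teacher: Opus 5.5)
Your proposal is correct and takes the same route as the paper, which gives no argument beyond citing Corollary~IV.6 of~\cite{raghavendra2012}. Your explicit change of normalization, $|E(S,V\setminus S)|/|E| = 2(|S|/|V|)\,\Phi(S)$ on a $D$-regular graph, applied to the smaller side in the soundness case to get $c = c'/5$, is exactly the bookkeeping the paper leaves implicit. As with the cited result, the conclusion is only meaningful for sufficiently small $\alpha$, despite the statement's ``for any $\alpha>0$''.
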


\begin{corollary}
\label{cor:square-root-tight}
Given a set of pairs $S = \{(x_1,y_1),\ldots,(x_m,y_m)\} \subset \R^d \times \R^d$, let $\Dpair$ be the uniform distribution over $S$, and let $\calF$ be the class of all functions $f : \R^d\to\{-1,1\}$. Even if $\Dpair$ has $\calF$-approximation degree $k(0) = 1$, there exists a constant $c>0$ such that for any $\alpha>0$, it is \nphard (assuming \sseh) to distinguish between the following cases:
\begin{itemize}
    \item There exists $f\in\calF$ such that
    \[
    \err_{\Dpair}(f) \le \alpha + o(\alpha).
    \]

    \item Every cut $f \in \calF$ with $\bal_{\Dpair}(f) \ge 1/10$ satisfies
    \[
    \err_{\Dpair}(f) \ge c\sqrt{\alpha}.
    \]
\end{itemize}
\end{corollary}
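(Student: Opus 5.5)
We will prove Corollary~\ref{cor:square-root-tight} by a direct embedding of the graph instance of Lemma~\ref{lem:graph-hardness} into a pair distribution. Given a graph $G=(V,E)$ with $V=[n]$, we place vertex $v$ at the standard basis vector $e_v\in\R^n$. For each edge $\{u,v\}\in E$ we create the two pairs $(e_u,e_v)$ and $(e_v,e_u)$, and let $\Dpair$ be uniform over these $2|E|$ pairs. This reduction is polynomial time. With this choice, a cut $f\colon\R^n\to\{-1,1\}$ restricted to $\{e_v\}$ is exactly a set $S=\{v: f(e_v)=1\}$. Moreover, $\err_{\Dpair}(f)=|E(S,V\setminus S)|/|E|$.

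\textbf{Approximation degree.} We first check that $\Dpair$ has $\calF$-approximation degree $k(0)=1$. The support of both marginals lies in $\{e_1,\dots,e_n\}$, and every $f\in\calF$ agrees there with the linear polynomial $A(x)=\sum_v f(e_v)x_v$. Hence all three moment errors in Definition~\ref{def:general-approx-deg} vanish, which gives $k(\eps)=1$ for all $\eps\ge 0$.

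\textbf{Balance.} The $\by$-marginal is not uniform on $V$: it weights each vertex by $\deg(v)/(2|E|)$. So the balance in our instance is the degree-weighted volume fraction, not $|S|/|V|$. We will handle this by first reducing Lemma~\ref{lem:graph-hardness} to regular graphs. The first attempt is to observe that the instances produced by the \sseh-based reduction of~\cite{raghavendra2012} can be taken regular, since they come from noise-graph/unique-games constructions that are regular. Once $G$ is regular, $\bal_{\Dpair}(f)=\min(|S|,|V\setminus S|)/|V|$.

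\textbf{Transferring the two cases.} Completeness follows immediately: a bisection with edge fraction $\alpha+o(\alpha)$ gives an $f$ with the same error. For soundness, $\bal_{\Dpair}(f)\ge 1/10$ means $|S|\in[|V|/10,9|V|/10]$, so by Lemma~\ref{lem:graph-hardness} we get $\err_{\Dpair}(f)\ge c\sqrt{\alpha}$. Note that $f$'s values off the support are irrelevant.

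\textbf{Main obstacle.} The main obstacle is the balance mismatch when regularity is not available. The fallback is to attach to each vertex $v$ a self-pair $(e_v,e_v)$ with weight $\lambda$ large relative to the maximum degree. This makes the $\by$-marginal nearly uniform while leaving cut pairs unchanged. The cost is that the error is rescaled by the factor $|E|/(|E|+\lambda n)$, which would alter $\alpha$ non-uniformly. For this reason we prefer the regularity route, stated as a property of the hard instances.
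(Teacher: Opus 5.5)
Your proposal follows the paper's proof: vertices go to standard basis vectors, exact linear interpolation on the support gives $\calF$-approximation degree $k(0)=1$, $\err_{\Dpair}(f)$ equals the fraction of cut edges, and the two cases are transferred directly from Lemma~\ref{lem:graph-hardness}. Your treatment of balance is more careful than the paper's. You include both orientations of each edge and use that the hard instances of~\cite{raghavendra2012} are regular, so that the $\by$-marginal is uniform on $V$; the paper simply asserts $\bal_{\Dpair}(f_S)=\min\{|S|/n,1-|S|/n\}$, even though the $\by$-marginal of its edge distribution weights each vertex by its (in-)degree, so it implicitly relies on the same regularity you state explicitly.
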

\begin{proof}
Let $G=(V,E)$ be a graph with $|V|=n$. Embed each vertex $i\in V$ as the
standard basis vector $e_i\in\R^n$. Let $\Dpair$ be the uniform distribution
over pairs $(e_i,e_j)$ corresponding to edges $(i,j)\in E$. Let $\calF$ be the class of all Boolean functions $f : \{e_1, \dots, e_n\} \to \{-1,1\}$.

For any cut $(S,V\setminus S)$, define the corresponding function $f_S \in \calF$ as
\[
f_S(e_i) =
\begin{cases}
1 & i\in S,\\
-1 & i\notin S .
\end{cases}
\]
Because the support vectors $\{e_1,\dots,e_n\}$ form the standard basis, they are linearly independent. Consequently, \emph{every} Boolean assignment $f \in \calF$ can be interpolated exactly by a linear function (a degree-$1$ polynomial). This implies that $\Dpair$ trivially has $\calF$-approximation degree $k(0)=1$.

Under this embedding, the error of any function $f_S \in \calF$ corresponds exactly to the fraction of cut edges:
\[
\err_{\Dpair}(f_S)
=
\Pr_{(i,j)\sim E}[f_S(e_i)\neq f_S(e_j)]
=
\frac{|E(S,V\setminus S)|}{|E|},
\]
and its balance is $\bal_{\Dpair}(f_S)=\min\{|S|/n, 1 - |S|/n\}$. Therefore, the two cases in the statement of the corollary map exactly to the two cases in Lemma~\ref{lem:graph-hardness}.
\end{proof}

%% file: sections/dasgupta.tex

\newcommand{\TestMass}{\textsf{\textup{TestMass}}\xspace}

\section{Learning Balanced Halfspace Cut Trees} \label{sec:dasgupta}

In this section, we use the \ptfcut algorithm from Theorem~\ref{thm:improper-alg} to build cut trees for nearest neighbor search. Throughout the section, we will use the following notation:
\begin{itemize}
    \item $P\subset\R^d$ refers to a size-$n$ dataset, and $\Dq$ is a distribution over \emph{queries} $\bq \in \R^d$.
    \item $\Dpair$ refers to the induced distribution over nearest neighbor pairs $(\bq,\nn)\in \R^d\times P$, given by 
    \[\bq\sim\Dq\quad\text{and}\quad\nn = \NN(\bq) := \argmin_{p \in P} \| \bq - p\|_2.\]
    \item $\Dnn$ refers to the $\nn$-marginal of $\Dpair$.
\end{itemize}

\begin{definition}[Cut Tree]
\label{def:cut-tree}
A cut tree $T$ for $\R^d$ is a rooted binary tree where each internal node $u$ is labeled with a cut $f_u : \R^d \to \{-1,1\}$, inducing a ``top-down'' hierarchical partition of $\R^d$. 
\begin{itemize}
    \item For a node $u$ in $T$, $\calR_{T}(u) \subset \R^d$ denotes the set of points $x \in \R^d$ that, walking down $T$ and evaluating cuts, arrive at $u$. We let $P^u = P \cap \calR_{T}(u)$.
    \item We say $T$ is $\beta$-balanced if for every internal node $u$ with child $v$, $\Prx_{\bp \sim \Dnn}[ \bp \in P^{v}] \geq \beta \Prx_{\bp \sim \Dnn}[ \bp \in P^u]$.
    \item We say $T$ is perfect if, conditional on $\bq,\nn\in\calR_T(u)$, one has $\Pr[f_u(\bq)=f_u(\nn)] = 1$.
    \item We say a cut $f_u \colon \R^d \to \{-1,1\}$ has complexity $\tau$ if evaluating $f_u$ may be done in time at most $\tau$, and we say the tree has complexity $\tau$ if all internal cuts have complexity at most $\tau$.
\end{itemize}
\end{definition}

We begin by computing a cost function for cut trees similar to the Dasgupta cost function for hierarchical clustering~\cite{dasgupta2016}.

\begin{definition}[Cost of a Cut Tree]
\label{def:tree-cost}
    The \emph{cost} of $T$ with respect to $\Dpair$ is defined as: 
    \[
    \cost_{\Dpair}(T) := \Ex_{(\bq,\nn)\sim\Dpair}\left[\Prx_{\bp \sim \Dnn}\left[ \bp \in P^{\lca_T(\bq, \nn)}\right]\right] \in [0,1].
    \] 
    where $\lca_T(\bq, \nn)$ is the least common ancestor of the leaves (in $T$) to which $\bq$ and $\nn$ are routed.
\end{definition}

One way to interpret $\cost_{\Dpair}(T)$ (useful for nearest neighbor search) is to first consider the case $\Dnn$ is uniform on $P$. In this case, $n \cdot \cost_{\Dpair}(T)$ is the expected number of comparisons required to find $\nn$ by an ``omniscient'' search algorithm, i.e., one that stops its traversal at the deepest node containing $\nn$.

\begin{claim}
\label{claim:cost-to-search}
    Let $T$ be a $\beta$-balanced cut tree of complexity $\tau$ and depth $d_T$. For any $\delta \in (0,1)$, there exists an algorithm with the following guarantees: 
    \begin{itemize}
        \item Given a query $\bq\sim\Dq$, the algorithm returns $\NN(\bq)$ with probability at least $1-\delta$ over the draw of $\bq$.\footnote{Even though the above search algorithm assumed knowledge of an upper bound for $\cost_{\calD}(T)$, one may obtain such estimates for a fixed tree $T$ via a few additional samples from $\Dpair$, given that $T$ is a halfspace or PTF tree.}
        \item The algorithm runs in time $O(\tau d_T) + O(nd/\delta^2\cdot \cost_{\calD}(T))$.
    \end{itemize}
\end{claim}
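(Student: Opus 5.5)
The natural algorithm routes $\bq$ down $T$ to a leaf and then backs up to an ancestor $u$ whose region is small enough to scan. I will fix a threshold $t := \cost_{\Dpair}(T)/\delta$ (or $\cost/\delta^2$, whichever the accounting needs; see below). The algorithm returns the closest point of $P^u$ to $\bq$, where $u$ is the deepest ancestor on the path of $\bq$ with $\Pr_{\bp\sim\Dnn}[\bp\in P^u]\ge t$. If even the root fails this, it simply stops at the root and scans all of $P$. The routing costs $O(\tau d_T)$. During routing we also precompute, for every node, the $\Dnn$-mass of $P^u$ and the list $P^u$ itself; this is stored data and does not count against query time.

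\textbf{Correctness.} Let $w=\lca_T(\bq,\nn)$, so that $\nn\in P^w$. The algorithm succeeds whenever $u$ is an ancestor of $w$ (or equals $w$), since then $P^w\subseteq P^u$ and the closest scanned point is $\NN(\bq)$. The masses are monotone along the path, so this happens whenever $\Pr[\bp\in P^w]\le t$. Markov's inequality on the nonnegative variable $\Pr_{\bp}[\bp\in P^{\lca_T(\bq,\nn)}]$ gives
\[\Pr\bigl[\Pr_{\bp}[\bp\in P^{w}]> t\bigr]\le \cost_{\Dpair}(T)/t=\delta.\]
Hence the failure probability is at most $\delta$.

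\textbf{Scan cost.} This is the step I expect to be the main obstacle. The scan costs $d\,|P^u|$, but the threshold only controls the $\Dnn$-mass of $u$, not its cardinality, and for non-uniform $\Dnn$ these differ. I would handle it in two stages.

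\emph{Mass control.} Because $T$ is $\beta$-balanced, the selected node $u$ has mass at most $t/\beta$: the child of $u$ on the path has mass below $t$ and at least $\beta$ times the mass of $u$. Thus the mass of $u$ lies in $[t,t/\beta)$.

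\emph{From mass to count.} To bound the number of points, I would modify the scan so it only examines points that are nearest neighbors with non-negligible probability. Points $p$ with $\Pr[\nn=p]<\delta/n$ have total mass at most $\delta$ and can be discarded at an extra $\delta$ loss in correctness. Every surviving point has mass at least $\delta/n$, so
\[|P^u\cap\{\text{surviving}\}|\le \frac{t/\beta}{\delta/n}=\frac{n\cost_{\Dpair}(T)}{\beta\delta^2}.\]
The scan therefore takes $O(nd\cost_{\Dpair}(T)/\delta^2)$ time, treating $\beta$ as a constant.

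Rescaling $\delta$ by a factor of $2$ absorbs the two failure sources, the Markov event and the discarded light points, and yields the claimed bounds. If $\beta$ cannot be absorbed into the constant, I would instead drop the $1/\beta$ factor by choosing $u$ as the deepest node of mass at least $t$, which has mass at most the parent's. I would check which of these bookkeeping choices matches the stated $1/\delta^2$ exactly.
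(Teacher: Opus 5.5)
Your correctness argument (Markov's inequality on the $\Dnn$-mass of $P^{\lca_T(\bq,\nn)}$, plus discarding points of mass below $\delta/n$) is sound, and it uses the same two failure sources as the paper. The gap is in the scan cost. Your upper bound on the mass of $u$, namely below $t/\beta$, relies on the child of $u$ on $\bq$'s path. No such child exists when $u$ is the leaf reached by $\bq$, and that is exactly what happens whenever this leaf has mass at least $t$.

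Leaves can be far heavier than $t/\beta$ while the cost stays small. A leaf of mass $m$ forces $\cost_{\Dpair}(T)\ge m^2$, and this can be tight. Concretely, take the following instance:
\begin{itemize}
\item $\Dnn$ is uniform and queries coincide with their nearest neighbors;
\item $T$ is a $1/2$-balanced tree whose leaves are singletons, except for one leaf $L$ holding $\sqrt{\rho}\,n$ points.
\end{itemize}
Then $\cost_{\Dpair}(T)\approx\rho$ for $\rho\gg 1/n$. Every query routed to $L$ makes your rule select $u=L$, since $\sqrt{\rho}\ge\rho/\delta$ once $\rho\le\delta^2$. The algorithm then scans all $\sqrt{\rho}\,n$ points at cost $\sqrt{\rho}\,nd$, which exceeds any constant multiple of $nd\rho/\delta^2$ as $\rho/\delta^4\to 0$. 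So the claimed worst-case time bound fails for your algorithm as written.

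Your $\beta$-free fallback does not rescue this. ``The deepest node of mass at least $t$'' is the same node as before, and ``mass at most the parent's'' gives no upper bound in terms of $t$.

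The repair is to take the other endpoint of the threshold: let $u$ be the shallowest node on $\bq$'s path with mass strictly below $t$, and abort if none exists.
\begin{itemize}
\item Masses are non-increasing along the path. So whenever $w=\lca_T(\bq,\nn)$ has mass below $t$, $w$ lies at or below $u$, and $\nn\in P^w\subseteq P^u$.
\item If no such $u$ exists, then even $\bq$'s leaf has mass at least $t$, hence so does $w$, and you are already inside the Markov failure event.
\item The node $u$ now contains fewer than $tn/\delta=n\cost_{\Dpair}(T)/\delta^2$ non-light points, with no dependence on $\beta$ and no leaf exception.
\end{itemize}

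The paper's proof avoids node masses entirely. It stores only points of mass at least $\delta/(2n)$ and scans a fixed budget of the first $4n\cost_{\Dpair}(T)/\delta^2$ of them in bottom-up order from $\bq$'s leaf. If $\nn$ is missed, every scanned point lies in $P^{\lca_T(\bq,\nn)}$, so that set has mass above $2\cost_{\Dpair}(T)/\delta$, which Markov bounds by probability $\delta/2$. The fixed budget makes the running time worst-case by construction and uses neither balance nor precomputed node masses. Your repaired version needs all node masses but gives a cleaner structural description of the scanned set.
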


\begin{proof} 
    We store the tree $T$ and each leaf $u$ will store the subset of dataset nodes $P^{u}$ which appear in $\Dnn$ with probability at least $\delta / (2n)$. The search algorithm proceeds by (i) routing a query $\bq \sim \Dq$ down the tree (taking time at most $O(\tau \log(n/\delta)/\beta)$), and then (ii) scanning the first $4n \cdot \cost_{\Dpair}(T) / \delta^2$ dataset points in the bottom-up ordering of the tree (taking time at most $O(n d \cdot \cost_{\Dpair}(T) / \delta^2)$), and returning the closest point $\hat{\bp}$ among those scanned. To analyze the correctness of the algorithm, we upper bound the probability over $(\bq,\nn)\sim \Dpair$ that $\hat{\bp} \neq \nn$. Note, for a fixed draw $(\bq,\nn)$ this can happen for two reasons:
    \begin{itemize}
        \item Either $\nn$ appears in $\Dnn$ with probability at most $\delta / (2n)$, which occurs overall with probability at most $\delta/2$ since there are at most $n$ points,
        \item Or, $\nn$ does not appear within the first $4n \cdot \cost_{\Dpair}(T) / \delta^2$ points in the bottom-up order (all of which appear with probability at least $\delta / (2n)$ under $\Dnn$).
    \end{itemize}
    To upper bound the second event, consider the bottom-up order starting at the leaf to which $\bq$ is routed. The fact that $\nn$ does not appear among the first $4n \cdot \cost_{\Dpair}(T) / \delta^2$ points means these points are contained within $P^{\bu}$ for $\bu = \lca_{T}(\bq, \nn)$, and appear with probability at least $\delta / (2n)$ under $\Dnn$; so the probability that $\bp \sim \Dnn$ lies in $P^{\bu}$ is larger than $2\cost_{\Dpair}(T) / \delta$. Hence, 
    \begin{align*}
        \Prx_{(\bq,\nn)\sim\Dpair}\left[ \hat{\bp} \neq \nn \right] &\leq \delta/2 + \Prx_{(\bq,\nn)\sim \Dpair}\left[ \Prx_{\bp\sim\Dnn}\left[ \bp \in P^{\lca_T(\bq,\nn)}\right] > 2\cost_{\Dpair}(T)/\delta \right] \leq \delta,
    \end{align*}
    where the last line uses Markov's inequality and the definition of $\cost_{\Dpair}(T)$.
\end{proof}

Claim~\ref{claim:cost-to-search} shows that balanced, low-complexity cut trees $T$ with small cost under $\Dpair$ yield efficient nearest neighbor search algorithms. In the remainder of the section, we focus on learning a balanced PTF tree whose cost competes with that of the optimal halfspace tree. This can be viewed as a hierarchical generalization of the improper learning task in Section~\ref{sec:balanced-cut-algo}.

\begin{tcolorbox}[colback=cyan!05]
    \begin{problem}[Learning Balanced Halfspace Cut Trees in Realizable Setting]
    \label{problem:balanced-halfspace-tree}
        Given sample access to $\Dpair$ and an error parameter $\eps \in (0, 1)$, assume there exists a perfect $1/3$-balanced halfspace cut tree $T^\star$. The goal is to efficiently learn a low-complexity $\Omega(1)$-balanced tree $T$ satisfying 
        \[\cost_{\Dpair}(T) \le \eps.\]
    \end{problem}
\end{tcolorbox}

\subsection{Reduction to Balanced Halfspace Cut}

We present a black-box reduction from Problem~\ref{problem:balanced-halfspace-tree} to a generalization of the balanced halfspace cut problem. As we will see, the algorithm from Section~\ref{sec:balanced-cut-algo} will not change; we will increase the degree of the polynomial in order to compete against disjoint unions of intersections of halfspaces. First, we will need terminology for functions of a small number of halfspaces.

\begin{definition}
    A function $g : \R^d \to \{-1,1\}$ is a \emph{$J$-wise intersection of halfspaces} if there exist $J$ halfspaces $f_1, \dots, f_J : \R^d \to \{-1, 1\}$ such that for all $z \in \R^d$, $g(z) = 1$ if and only if $f_j(z) = 1$ for all $j \in [J]$.
\end{definition}

\begin{definition}
    A function $h : \R^d\to\{-1,1\}$ is an \emph{$(R,J)$-halfspace function} if there exist $R$ functions $g_1, \dots, g_R : \R^d \to \{-1,1\}$ which are $J$-wise intersection of halfspaces and disjoint, i.e., for all $z \in \R^d$, $g_r(z) = 1$ for at most one $r \in [R]$, and $h(z) = 1$ if and only if $g_r(z) = 1$ for some $r \in [R]$.
\end{definition}

To formalize the reduction, we introduce a \emph{balanced cut oracle}, which we will later instantiate using the \ptfcut algorithm from Theorem~\ref{thm:improper-alg}. Below, the query and nearest neighbor distribution is generalized (since we will make multiple different calls), and cuts will compete against optimal $(R, J)$-halfspace cuts.

\begin{definition}[Balanced Cut Oracle]
\label{def:cut-oracle}
For parameters $0<\beta_1 \leq \beta_0 \leq 1/2$, $R, J \in \N$, and error tolerance $\eps_{c} > 0$. An algorithm $\calO$ is a \emph{$(\beta_1,\beta_0, R, J, \eps_c)$-balanced cut oracle} if it satisfies the following guarantees:
    \begin{itemize}
        \item On input distribution $\Dpair'$ on $\R^d \times \R^d$, it outputs a cut $g \colon \R^d \to \{-1,1\}$ which satisfies $\bal_{\Dpair'}(g) \geq \beta_1$.
        \item For every $(R,J)$-halfspace function $h \colon \R^d \to \{-1,1\}$ with $\bal_{\Dpair'}(h) \geq \beta_0$,\footnote{We will always use $\beta_0 = 1/3$.} 
        \[ \err_{\Dpair'}(g) \leq O( \sqrt{\err_{\Dpair'}(h)}) + \eps_c.\]
    \end{itemize}
\end{definition}

The algorithm will call the balanced cut oracle $\calO$ until a node satisfies a termination condition, at which point we mark it a leaf. The recursion stops at node $u$ when its routing region $\calR_T(u)$ has negligible mass with respect to $\Dnn$. We utilize a lightweight testing procedure to efficiently estimate this probability from samples of $\Dpair$, via a simple Chernoff bound (deferred to Appendix~\ref{sec:missing-dasgupta}).

\begin{restatable}[Marginal Mass Tester]{lemma}{marginalTester}
\label{lem:marginal-tester}
    There exists an algorithm \TestMass that takes sample access to $\Dnn$, a node $u$ in a cut tree $T$, a threshold $\gamma \in (0, 1)$, and a failure probability $\delta \in (0, 1)$, and satisfies the following guarantees:
    \begin{itemize}
        \item If $\Pr_{\bp \sim \Dnn}[\bp \in P^u] \le \gamma$, it outputs \textsf{\textup{Small}} with probability at least $1-\delta$.
        \item If $\Pr_{\bp \sim \Dnn}[\bp \in P^u] \ge 2\gamma$, it outputs \textsf{\textup{Large}} with probability at least $1-\delta$.
    \end{itemize}
    The algorithm uses $O(\log(1/\delta) / \gamma)$ samples $\bp$ from $\Dnn$ and evaluates whether $\bp \in P^u$ by walking down $T$.
\end{restatable}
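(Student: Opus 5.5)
The tester draws independent samples from $\Dnn$ and compares the empirical mass of $P^u$ against the midpoint threshold $\tfrac{3}{2}\gamma$. Concretely, it draws $s = \lceil c\log(1/\delta)/\gamma\rceil$ samples $\bp_1,\dots,\bp_s \sim \Dnn$ for a sufficiently large constant $c$. For each sample it decides whether $\bp_i \in P^u$ by routing $\bp_i$ down $T$ from the root, evaluating the cuts along the way, and checking whether it reaches $u$. Since $\bp_i \in P$, this is the same as checking $\bp_i \in \calR_T(u)$. Let $\bZ = \sum_i \mathbf{1}[\bp_i \in P^u]$. The tester outputs \textsf{\textup{Large}} if $\bZ \ge \tfrac{3}{2}\gamma s$ and \textsf{\textup{Small}} otherwise. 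The sample count is $O(\log(1/\delta)/\gamma)$, as the statement requires.

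For correctness, write $\mu = \Pr_{\bp\sim\Dnn}[\bp\in P^u]$. Then $\bZ$ is a sum of $s$ independent Bernoulli$(\mu)$ variables with mean $\mu s$.

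\textbf{Case $\mu \ge 2\gamma$.} The tester errs only if $\bZ < \tfrac{3}{2}\gamma s \le \tfrac34 \mu s$. The multiplicative Chernoff lower tail bounds this by $\exp(-\mu s/32) \le \exp(-\gamma s/16)$, which is at most $\delta$ once $c \ge 16$.

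\textbf{Case $\mu \le \gamma$.} Here I use monotonicity: $\Pr[\bZ \ge t]$ is nondecreasing in $\mu$, so it suffices to treat $\mu = \gamma$ (coupling Bernoulli$(\mu)$ below Bernoulli$(\gamma)$ makes this rigorous). The upper-tail Chernoff bound with deviation $1/2$ gives $\Pr[\bZ \ge \tfrac32\gamma s] \le \exp(-(1/4)\gamma s/3) = \exp(-\gamma s/12)$. This is at most $\delta$ for $c \ge 12$.

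The only real subtlety is this second case. Chernoff bounds are usually stated relative to the true mean $\mu s$, which may be far below $\gamma s$, so the comparison to the larger $\mu=\gamma$ is what makes the bound depend on $\gamma$ rather than $\mu$. With the coupling argument in place, the rest is routine.
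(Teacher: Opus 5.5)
Your proposal is correct and matches the paper's proof. Both draw $\Theta(\log(1/\delta)/\gamma)$ samples, threshold the empirical fraction at $1.5\gamma$, and apply multiplicative Chernoff bounds in the two cases; your monotonicity/coupling step for $\mu \le \gamma$ (equivalently, a Chernoff upper-tail bound stated with an upper bound on the mean) just makes explicit a step the paper leaves implicit in its $\exp(-\Omega(\gamma m))$ bound.
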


\begin{definition}[Conditional Mixture $\Dpair^{(u)}$]
    Let $u$ be a node in a cut tree $T$ where $\bp \sim \Dnn$ has $\bp \in P^{u}$ with non-zero probability. The distribution $\calD^{(u)}$ is supported on pairs $\calR_T(u) \times P^u$ and given by first drawing $(\bq, \nn) \sim \Dpair$ conditioned on $\nn \in P^u$; if $\bq \in \calR_T(u)$, output $(\bq,\nn)$ and otherwise, output $(\nn,\nn)$. We let $\Dnn^{(u)}$ be the second marginal of $\Dpair^{(u)}$, equivalently $\bp \sim \Dnn$ conditioned on $\bp \in \calR_T(u)$. 
\end{definition}

Given a balanced cut oracle and the marginal mass tester, we can construct a low-cost tree via the following greedy procedure:

\begin{tcolorbox}[colback=red!05]
\begin{center}
    \textbf{Algorithm:} $\GreedyTree(P, \Dpair, \delta \mid \calO)$
\end{center}
\textbf{Input:} A dataset $P\subset\R^d$, a balanced cut oracle $\calO$, sample access to a pair distribution $\Dpair$,  and a failure probability $\delta \in (0, 1)$. 
\begin{enumerate}
    \item Initialize the tree with a single root node $u$ where $\calR_T(u) = \R^d$.
    \item While there exists an active node $u$: \begin{enumerate}
        \item \textbf{Test Mass:} Call \TestMass on $\Dnn$ for $\calR_T(u)$ with parameters $\gamma$ and $\delta' = \Theta(\delta\beta_1\gamma)$. If \TestMass returns \textsf{Small}, mark $u$ as inactive (leaf).
        \item \textbf{Balanced Cut:} Otherwise, invoke $\calO$ on $\Dpair^{(u)}$ to obtain a balanced cut $g_u : \R^d \to \{-1,1\}$.
        \item \textbf{Recurse:} Create children $v$ and $v'$ corresponding to the regions $\calR_T(u) \cap g_u^{-1}(-1)$ and $\calR_T(u) \cap g_u^{-1}(1)$, respectively. Mark $v,v'$ as active and $u$ as inactive.
    \end{enumerate}
    \item Return the constructed tree $\bT$.
\end{enumerate}
    
\end{tcolorbox}

\begin{claim}\label{claim:greedy-tree-runtime}
    Suppose $\calO$ has balance parameters $\beta_1 \leq \beta_0 = 1/3$. With probability at least $1-\delta$, all invocations of \TestMass succeed and $\GreedyTree$ terminates after making at most $O(1/(\beta_1\gamma))$ calls to $\calO$ and $O(1/(\beta_1\gamma))$ calls to \TestMass. Excluding the time and samples required by $\calO$, the algorithm runs in time $O\left(n\tau\log(1/(\beta_1\gamma)) + \log(1/(\delta\beta_1\gamma))/\gamma^2\right)$ and uses $O\left(\log(1/(\delta\beta_1\gamma))/\gamma^2\right)$ samples.
\end{claim}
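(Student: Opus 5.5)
Conditioned on every call to \TestMass succeeding, the argument is a counting argument on the tree; afterwards we union bound over the failure probabilities. Assume all \TestMass calls answer correctly. Then every node $u$ that is split (i.e., receives an oracle call) has $\Pr_{\bp\sim\Dnn}[\bp\in P^u] > \gamma$, since otherwise \TestMass would have answered \textsf{Small}. Each call to $\calO$ on $\Dpair^{(u)}$ returns a cut with $\bal_{\Dpair^{(u)}}(g_u)\ge\beta_1$. Because the second marginal of $\Dpair^{(u)}$ is $\Dnn$ conditioned on $P^u$, each child $v$ of $u$ carries mass at least $\beta_1\Pr[\bp\in P^u]$. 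Hence along any root-to-node path the mass of a split node decays by a factor of at least $(1-\beta_1)$ per level. Split nodes therefore have depth at most $D:=O(\log(1/\gamma)/\beta_1)$, and each child of a split node has mass at least $\beta_1\gamma$.

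To count the oracle calls, charge them to the leaves. The children of split nodes form the non-root nodes. Any node whose mass is at least $2\gamma$ is declared \textsf{Large} and split, so every leaf has mass $<2\gamma$. More usefully, every leaf $\ell$ is a child of a split node, so $\ell$ has mass at least $\beta_1\gamma$. The leaves partition $P$ (up to the root case), so their masses sum to at most $1$, which gives at most $1/(\beta_1\gamma)$ leaves. A binary tree with $L$ leaves has $L-1$ internal nodes, so there are at most $O(1/(\beta_1\gamma))$ oracle calls. \TestMass is invoked once per node, giving $2L-1=O(1/(\beta_1\gamma))$ calls.

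For the probability, note that whenever all tests succeed, the bound above caps the number of tests at $K=O(1/(\beta_1\gamma))$. Set $\delta'=\Theta(\delta\beta_1\gamma)$ with a small enough constant so that $K\delta'\le\delta$. The subtle point is that the number of tests is random. We handle this with a standard stopping argument: consider the first $K$ tests performed in any order. If all of them succeed, the structural bound shows that no further tests occur. So a union bound over at most $K$ tests suffices, giving success probability at least $1-\delta$. This is the step that requires the most care.

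For resources, each \TestMass call uses $O(\log(1/\delta')/\gamma)$ samples, giving $O(\log(1/(\delta\beta_1\gamma))/(\beta_1\gamma^2))$ samples in total. We absorb $\beta_1$ as in the statement, treating $\beta_1$ as a constant. Alternatively, one can note that the stated bound tracks only $\gamma$-dependence up to the $\beta_1$ factors within the log. Each sample is evaluated by walking down the tree at cost $\tau$ per level. The time also includes $O(n\tau D)$ for routing the dataset points of $P$ through the depth-$D=O(\log(1/(\beta_1\gamma)))$ tree, so that each $P^u$ is maintained. Combining these gives the claimed running time and sample bound.
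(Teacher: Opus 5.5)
Your proposal is correct and follows the paper's proof essentially step for step. Conditioned on correct tests, every split node has mass above $\gamma$, so every non-root node, and in particular every leaf, has mass at least $\beta_1\gamma$, giving $O(1/(\beta_1\gamma))$ nodes; the union bound over the first $O(1/(\beta_1\gamma))$ tests, which if all correct forces termination, is exactly the paper's argument, as is absorbing the leftover $1/\beta_1$ factors. One bookkeeping remark, which the paper's proof also glosses over: if each \textsf{TestMass} sample is checked by walking down the tree at cost $\tau$ per level, the sampling term picks up an extra $\tau\cdot O(\log(1/(\beta_1\gamma)))$ factor, so to get the stated $\tau$-free term you should decide $p\in P^u$ by looking it up in the partition of $P$ you already maintain.
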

\begin{proof}
    First, consider an idealized process where every invocation of \TestMass is correct (i.e., satisfies the guarantees of Lemma~\ref{lem:marginal-tester}). Then, for every leaf node, it's parent's call to \TestMass output \textsf{Large}. Hence, the parent node $u$ had $\Prx_{\by \sim \Dnn}[\by \in P^u] \geq \gamma$ (since all invocations of \TestMass were correct), and from the cut oracle $\calO$, both children $v$ has $\Prx_{\by \sim \Dnn}[\by \in P^v] \geq \beta_1 \gamma$. Thus, in this idealized process, the collection of leaves form disjoint regions of $\R^d$ with mass under $\Dnn$ which is at least $\beta_1\gamma$, and hence, there are at most $1/(\beta_1\gamma)$ leaves. Since we build a binary tree, the total number of nodes is at most $2/(\beta_1 \gamma)$. Since each node uses invocation of \TestMass, the total number of invocations in the idealized process is at most $2/(\beta_1\gamma)$. Hence, a non-idealized process whose first $3/(\beta_1\gamma)$ invocations of \TestMass are correct produces the same tree as the idealized process. By a union bound and setting of $\delta'$ in \GreedyTree, the event that the first $3/(\beta_1\gamma)$ invocations satisfy the correctness guarantee holds with probability at least $1-\delta$, and hence the algorithm terminates with $O(1/(\beta_1\gamma))$ calls to $\TestMass$.
    
    To bound the runtime, observe that at any internal node $u$, the algorithm evaluates the cut $g_u$ on all $|P^u|$ points. Since $g_u$ has complexity $\tau$, this partitioning step takes $O(|P^u| \tau)$ time. Because the tree is $\beta_1$-balanced, its maximum depth is bounded by $\log_{1/(1-\beta_1)}(1/(\beta_1\gamma)) = O(\log(1/(\beta_1\gamma))/\beta_1)$. Thus, the total time spent partitioning is $O(n \tau \log(1/(\beta_1\gamma))/\beta_1)$. Finally, by Lemma~\ref{lem:marginal-tester}, the $O(1/(\beta_1\gamma))$ calls to \TestMass require a total of $O(\log(1/\delta')/(\beta_1\gamma^2)) = O(\log(1/(\delta\beta_1\gamma))/(\beta_1\gamma^2))$ time and samples, which yields the final bounds.
\end{proof}

\begin{lemma}
\label{lem:greedy-reduction}
    Let $\calO$ be a $(\beta_1, 1/3, R, J, \eps_c)$-balanced cut oracle for $R = O(1/(\beta_1\gamma))$ and $J = O(\log(1/(\beta_1\gamma))/\beta_1)$. Then, with probability at least $1-\delta$, $\GreedyTree(P, \Dpair, \delta\mid\calO)$ returns a tree $\bT$ satisfying
    \[ \cost_{\Dpair}(\bT) \leq 2\gamma + O(\eps_c/\beta_1), \]
    assuming the existence of a perfect $1/3$-balanced halfspace cut tree $T^\star$.
\end{lemma}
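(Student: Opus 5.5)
We condition on the event of Claim~\ref{claim:greedy-tree-runtime}: every call to \TestMass is correct, $\bT$ is $\beta_1$-balanced, it has at most $O(1/(\beta_1\gamma))$ nodes, and its depth is $O(\log(1/(\beta_1\gamma))/\beta_1)$. We then split $\cost_{\Dpair}(\bT)$ according to where $\lca_{\bT}(\bq,\nn)$ lies.

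\emph{Leaf term.} If $(\bq,\nn)$ reach the same leaf $u$, then \TestMass declared $u$ \textsf{Small}. By correctness, $\Pr_{\bp\sim\Dnn}[\bp\in P^u]<2\gamma$, so these pairs contribute at most $2\gamma$ in total.

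\emph{Internal-node term.} Otherwise the LCA is an internal node $u$, and $g_u$ separates $\bq$ from $\nn$ while both still lie in $\calR_{\bT}(u)$. Write $w_u := \Pr_{\bp\sim\Dnn}[\bp\in P^u]$. Such a pair is drawn from $\Dpair^{(u)}$ (with weight $w_u$) and is cut by $g_u$, so the contribution of $u$ is at most $w_u\cdot w_u\cdot \err_{\Dpair^{(u)}}(g_u)$. Self-pairs $(\nn,\nn)$ are never cut, so conditioning only lowers the error. It therefore suffices to show
\[ \sum_{u \text{ internal}} w_u^2\, \err_{\Dpair^{(u)}}(g_u) \le O(\eps_c/\beta_1). \]

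\emph{Certificate cuts from $T^\star$ (main step).} For each internal $u$ we construct an $(R,J)$-halfspace function $h_u$ with $\bal_{\Dpair^{(u)}}(h_u)\ge 1/3$ and $\err_{\Dpair^{(u)}}(h_u)=0$. The oracle then gives $\err_{\Dpair^{(u)}}(g_u)\le \eps_c$. This is the adaptation of \cite{charikar2017}.

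Cut $T^\star$ at the first level $\ell$ at which every node $t$ satisfies $\Pr_{\Dnn}[P^t\cap P^u]\le \frac{2}{3}w_u$. This level can be taken with depth $J$ because $T^\star$ is $1/3$-balanced, so masses shrink by a factor $2/3$ per level. The nodes of $T^\star$ at this level partition $\R^d$ into regions, each an intersection of at most $J$ halfspaces. Each such cluster carries at most a $2/3$ fraction of $P^u$'s mass, so a greedy bin-packing grouping of the clusters gives a union $h_u$ with $\Dnn^{(u)}$-mass in $[1/3,2/3]$.

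Since $T^\star$ is perfect, any pair $(\bq,\nn)$ with $\nn\in P^u$ lies in a single cluster, so $h_u(\bq)=h_u(\nn)$. (If $\bq\notin\calR_{\bT}(u)$, the pair was replaced by $(\nn,\nn)$, which is trivially uncut.) Hence $\err(h_u)=0$. Only clusters meeting $P^u$ matter, and a cluster with mass below $\beta_1\gamma$ can be merged into the complement side. This caps the number of pieces at $R=O(1/(\beta_1\gamma))$; the exact constants depend on how the levels of $T^\star$ are truncated.

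The \textbf{main obstacle} is this certificate. We need a single level depth bounded by $J$, and the disjoint regions must really be intersections of halfspaces along root-to-node paths. If some node at that level still has mass above $2/3$, we descend only inside it. Descending selectively mixes depths but stays within $J$ halfspaces per region.

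\emph{Summation.} Once $\err_{\Dpair^{(u)}}(g_u)\le\eps_c$ for every internal $u$, the sum reduces to $\eps_c\sum_u w_u^2 \le \eps_c\sum_u w_u$. By $\beta_1$-balance, the nodes at each depth are disjoint, so each level contributes at most $1$. The masses also decay like $(1-\beta_1)^{\text{depth}}$, so the nodes on any root-to-leaf chain have $\sum w_u^2$ bounded by a geometric series with ratio $(1-\beta_1)$. Summing gives $\sum_u w_u^2\le \sum_{\text{depth } t}\max_{u \text{ at depth } t} w_u \le 1/\beta_1$. The total cost is therefore at most $2\gamma+O(\eps_c/\beta_1)$, holding with probability $1-\delta$.
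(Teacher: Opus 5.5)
Your leaf term is correct. So is your exact accounting of an internal node's contribution as $w_u^2\,\err_{\Dpair^{(u)}}(g_u)$, and so is the per-depth bound $\sum_u w_u^2\le\sum_t(1-\beta_1)^t\le 1/\beta_1$, which plays the role of the paper's geometric-decay step. The zero error of $h_u$ via perfection of $T^\star$, the bin-packing for balance when every piece has relative mass at most $2/3$, and $J=O(\log(1/\gamma))$ are also fine. The gap is the bound on the number of pieces of $h_u$, which you yourself leave unresolved.

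\textbf{Where the count fails.} With a uniform-level truncation at a depth $\ell$ as large as about $\log_{3/2}(1/w_u)$, the number of cells of $T^\star$ meeting $P^u$ can be as large as $2^{\ell}\approx w_u^{-\log_{3/2}2}\approx w_u^{-1.71}$. For $w_u\approx\gamma$ this exceeds $R=O(1/(\beta_1\gamma))$, and an arbitrary greedy grouping may put most of those cells on the side $h_u=1$. Your repair, sending every cluster of mass below $\beta_1\gamma$ to the complement, is unjustified as stated. Nothing you prove prevents the light clusters from carrying most of $P^u$'s mass, and then the heavy clusters cannot form a side of relative mass $\ge 1/3$. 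Your sentence about descending into a node of mass above $2/3$ ``at that level'' also contradicts your definition of $\ell$.

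\textbf{Missing ingredient.} What is missing is a lower bound on the mass of each piece. The paper gets it by truncating adaptively by \emph{global} $\Dnn$-mass, taking the leaves of $T^\star_{\le w_u/2}$. Each such leaf has a parent of mass $>w_u/2$, so by $1/3$-balance each leaf has mass $\ge w_u/6$. Hence there are at most $6/w_u\le 6/\gamma$ pieces in total, each of relative mass $\le 1/2$, and any grouping is an $(R,J)$-halfspace function.

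\textbf{Rescuing your construction.} Since $\ell$ is the first admissible level, some node at level $\ell-1$ has intersected mass $>\frac{2}{3}w_u$. Its heavier child therefore has intersected mass in $(\frac{1}{3}w_u,\frac{2}{3}w_u]$. The indicator of that single cell already has balance $\ge 1/3$ under $\Dpair^{(u)}$ and zero error, so $R=1$ suffices; largest-first greedy finds exactly this cell. Equivalently, at most one node per depth can have intersected mass $>\frac{2}{3}w_u$. Descending only into that node gives a heavy path with $O(\log(1/\gamma))$ hanging pieces, well within $R$.
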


Before proving Lemma~\ref{lem:greedy-reduction}, we introduce useful notation for a cut tree $T$.
\begin{itemize}
    \item Let $\calI(T)$ and $\calL(T)$ denote the set of internal nodes and leaf nodes of $T$, respectively.
    \item For any $u\in\calI(T)$, let $g_u : \R^d \to \{-1,1\}$ denote the cut function that splits the points at $u$.
    \item For $\eta \in (0,1)$, let $T_{\le \eta}$ denote the tree formed by maximally contracting the leaves of $T$ so that every resulting leaf node $u$ satisfies $\Prx_{\bp \sim\Dnn}[\bp \in P^{u}] \leq \eta$.
\end{itemize}
\begin{proof}[Proof of Lemma~\ref{lem:greedy-reduction}]
We assume for the analysis the conclusions of Claim~\ref{claim:greedy-tree-runtime}, which implies that for every $u \in \calI(\bT)$, letting $\eta_u$ denote the probability $\bp \sim \Dnn$ satisfies $\bp \in P^{u}$, we have $\eta_u \geq \gamma$ (because $\TestMass$ outputs $\mathsf{Large}$ for every internal node). For $u \in \calI(\bT)$, consider the partition of $\R^d$ into parts $\calR_{T^{\star}}(u^{\star}_1), \dots, \calR_{T^{\star}}(u^{\star}_r)$ induced by the leaves $u_1^{\star}, \dots, u_r^{\star}$ of $T^{\star}_{\leq \eta_u /2}$. Importantly, we have that:
\begin{itemize}
    \item The number of parts $r$ is at most the number of leaves in $T^{\star}_{\leq \eta_u/2}$, which is at most $6/\eta_u$, since $T^{\star}$ is $1/3$-balanced, so each leaf of $T^{\star}_{\leq \eta_u/2}$ contains at least $\eta_u/6$ fraction of $\Dnn$.
    \item Each part $\calR_{T^{\star}}(u^{\star}_j)$ is defined by an intersection of at most $O(\log(1/\eta_u))$ halfspaces.
\end{itemize}
Letting $\xi_{1}, \dots, \xi_r \in [0,1]$ be given by
\[ \xi_j := \Prx_{\bp \sim \Dnn^{(u)}}\left[ \bp \in \calR_{T^{\star}}(u^{\star}_j) \right], \]
we have $\sum_{j=1}^r \xi_j = 1$, since these partition $\R^d$ and hence $\Dnn$, and in addition, that every $\xi_j \leq 1/2$ since \smash{$\Dnn^{(u)}$} is an $\eta_u$-fraction of $\Dnn$, and each \smash{$\calR_{T^{\star}}(u^{\star}_j)$} at most $\eta_u / 2$-fraction of $\Dnn$. We can always assemble the parts $\xi_1, \dots, \xi_r$ into $(S, [r] \setminus S)$ such that both $\sum_{j \in S} \xi_j$ and $\sum_{j \notin S} \xi_j$ are at least $1/3$ (via running a greedy largest-to-smallest assignment). Hence, this gives us a Boolean function $h_u \colon \R^d \to \{-1,1\}$ where $h_u(x) = 1$ iff $x \in \calR_{T^{\star}}(u^{\star}_j)$ for $j \in S$, which is a disjoint union of at most $r \leq R$ parts, each of which is an intersection of at most $J$ halfspaces; it satisfies $\bal_{\Dpair^{(u)}}(h_u) \geq 1/3$, and hence the output function $g_u \colon \R^d \to \{-1,1\}$ has $\bal_{\Dpair^{(u)}}(g_u) \geq \beta_1$ and 
\begin{align} 
\err_{\Dpair^{(u)}}(g_u) \leq \eps_c.  \label{eq:apprx}
\end{align}
We can now upper bound $\cost_{\Dpair}(\bT)$ by the contributions coming from the leaf nodes and internal nodes. The leaf node contributions correspond to $(\bq,\nn)\sim\Dpair$ having both $\bq,\nn \in \calR_{\bT}(u)$, and are at most
\begin{align*}
    \sum_{u \in \calL(\bT)} \Prx_{\bp \sim \Dnn}\left[ \bp \in P^u \right] \cdot \Prx_{(\bq,\nn)\sim\Dpair}\left[ \bq,\nn \in \calR_{\bT}(u)\right] \leq 2\gamma \sum_{u\in\calL(\bT)}\Prx_{(\bq,\nn)\sim\Dpair}\left[ \bq,\nn \in \calR_{\bT}(u) \right] \leq 2\gamma,
\end{align*}
since the leaves of $\bT$ form a partition of $\R^d$, and since $\TestMass$ declared $\mathsf{Small}$, must contain at most $2\gamma$-fraction of $\Dnn$. The internal node contributions require more care, and are at most
\begin{align}
\sum_{u \in \calI(\bT)} \eta_u \Prx_{(\bq, \nn) \sim \Dpair}\left[ u = \lca_{\bT}(\bq,\nn) \right] \leq \sum_{u \in \calI(\bT)} \eta_u \Prx_{(\bq,\nn)\sim\Dpair}\left[ \bq,\nn \in \calR_{\bT}(u)\right] \cdot \err_{\Dpair^{(u)}}(g_u),\label{eq:ha}
\end{align}    
since $u = \lca_{\bT}(\bq,\nn)$ whenever $\bq, \nn \in \calR_{\bT}(u)$ and are cut by $g_u$---furthermore, note the fact $(\bq,\nn) \sim \Dpair$ is conditioned on $\bq,\nn \in \calR_{\bT}(u)$ means the probability $g_u(\bq) \neq g_u(\nn)$ is equal to the probability $g_u(\bq) \neq g_u(\nn)$ when $(\bq,\nn) \sim \Dpair^{(u)}$. We may re-write (\ref{eq:ha}) as an expectation, to apply (\ref{eq:apprx}),
\begin{align}
\Ex_{(\bq,\nn)\sim \Dpair}[ \sum_{\substack{u \in \calI(\bT) \\ (\bq,\nn)\in \calR_{\bT}(u)}} \eta_u \cdot \err_{\calD^{(u)}}(g_u)] &\leq \Ex_{(\bq,\nn)\sim \Dpair}[ \sum_{\substack{u \in \calI(\bT) \\ (\bq,\nn)\in \calR_{\bT}(u)}} \eta_u \eps_c] \label{eq:hah2}
\end{align}
Notice that for a fixed $(\bq,\nn)$, the values $\eta_u$ over all $u \in \calI(\bT)$ where $(\bq,\nn) \in \calR_{\bT}(u)$ geometrically decay at rate at least $1-\beta_1$ and are at most $1$, meaning their sum is $O(1/\beta_1)$. Hence, the summation in (\ref{eq:hah2}) is upper bounded by $O(\eps_c / \beta_1)$, completing the proof.
\end{proof}
\begin{remark}
The same argument can be adapted to show that even if $T^\star$ is not perfect, $\cost_{\Dpair}(\bT) \le 2\gamma + O(\eps_c/\beta_1) + O(1/\beta_1)\cdot\sqrt{\cost_{\Dpair}(T^\star)}$. We omit this result because even a perfect halfspace tree $T^\star$ need not have subconstant cost (e.g., if some point in $P$ has constant mass under $\Dnn$, then $\cost_{\Dpair}(T^\star) \ge \Omega(1)$). Alternatively, if $\Dnn$ is assumed to be uniform over $P$, then any perfect tree $T^\star$ has $\cost_{\Dpair}(T^\star) = 1/n$.
\end{remark}

%% file: sections/final-learning.tex

\subsection{Final NNS Guarantee}

\begin{definition}
\label{def:gen-halfspace-approx-deg}
    Let $\Dpair'$ be a distribution over $\R^d \times P$, and let $R,J\in\N$. The \emph{$(R,J)$-halfspace approximation degree} is a function $k(\eps; R, J; \Dpair') \colon [0, 1] \to \N$ encoding the $\calF_{R,J}$-approximation degree of $\Dpair'$ (as per Definition~\ref{def:general-approx-deg}) where $\calF_{R,J}$ is the class of all $(R,J)$-halfspace functions $f \colon \R^d \to \{-1,1\}$.
\end{definition}

\begin{claim}\label{cl:conditioning}
    Let $\Dpair$ be a distribution over $\R^d \times P$ with $(R, J)$-halfspace approximation degree $k(\eps;R,J;\Dpair)$. If $u$ is a node in a cut tree $T$ where $\bp \sim \Dnn$ has $\bp\in P^{u}$ with probability at least $\gamma$, then $k(\eps; R,J;\Dpair^{(u)}) \leq k(\gamma\eps/2;R,J;\Dpair)$.
\end{claim}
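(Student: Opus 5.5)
The plan is to use, for each $(R,J)$-halfspace function $f$, the same polynomial $A$ that works for $\Dpair$ at accuracy $\gamma\eps/2$, and to show it works for $\Dpair^{(u)}$ at accuracy $\eps$. Concretely, fix $f\in\calF_{R,J}$ and $\eps$. By Definition~\ref{def:gen-halfspace-approx-deg}, there is a polynomial $A$ of degree at most $k(\gamma\eps/2;R,J;\Dpair)$ such that for every $\ell\in\{1,2,4\}$ both $\Ex_{\bq\sim\Dq}[|A(\bq)-f(\bq)|^\ell]\le\gamma\eps/2$ and $\Ex_{\nn\sim\Dnn}[|A(\nn)-f(\nn)|^\ell]\le\gamma\eps/2$. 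It then suffices to bound the two marginals of $\Dpair^{(u)}$ by $\eps$.

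\textbf{Second marginal.} The marginal $\Dnn^{(u)}$ is $\Dnn$ conditioned on $\nn\in P^u$, an event of probability $\eta_u\ge\gamma$. Writing $E(x):=|A(x)-f(x)|^\ell\ge 0$, we get
\[\Ex_{\Dnn^{(u)}}[E(\nn)] = \frac{\Ex_{\Dnn}[E(\nn)\mathbf{1}[\nn\in P^u]]}{\eta_u}\le \frac{\gamma\eps/2}{\gamma}=\eps/2.\]

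\textbf{First marginal.} This is the step needing the most care, and the factor $2$ is there to absorb it. A draw of the first coordinate of $\Dpair^{(u)}$ is obtained by conditioning $(\bq,\nn)\sim\Dpair$ on $\nn\in P^u$. It outputs $\bq$ on the event $\bq\in\calR_T(u)$ and $\nn$ otherwise. Bound $E$ of the output by $E(\bq)+E(\nn)$, which is valid by nonnegativity of $E$. Conditioning on an event of probability at least $\gamma$ inflates each expectation by at most $1/\gamma$, so
\[\Ex[E(\text{first coord})]\le \frac{1}{\eta_u}\Big(\Ex_{\Dpair}[E(\bq)\mathbf{1}[\nn\in P^u]]+\Ex_{\Dpair}[E(\nn)\mathbf{1}[\nn\in P^u]]\Big)\le \frac{\gamma\eps/2+\gamma\eps/2}{\gamma}=\eps.\]
Here the first term uses the approximation guarantee for $\Dq$ (the $\bq$-marginal of $\Dpair$), and the second uses the guarantee for $\Dnn$.

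\textbf{Conclusion.} Both marginals have error at most $\eps$ for every $\ell\in\{1,2,4\}$. Hence every $f\in\calF_{R,J}$ admits an $\eps$-approximation for $\Dpair^{(u)}$ of degree at most $k(\gamma\eps/2;R,J;\Dpair)$, which is exactly the claimed inequality. No restriction of $f$ to $\calR_T(u)$ is needed, since the class $\calF_{R,J}$ is unchanged. The only subtlety to double-check is that the ``otherwise output $(\nn,\nn)$'' branch is handled by the crude sum bound rather than an exact decomposition, which is precisely why the accuracy is $\gamma\eps/2$ and not $\gamma\eps$.
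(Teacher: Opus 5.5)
Your proof is correct and is essentially the paper's argument. You reuse the polynomial for $\Dpair$ at accuracy $\gamma\eps/2$ and bound the second marginal via the $1/\gamma$ inflation from conditioning. For the first marginal, you bound the mixture's output error by the sum of the $\bq$-branch and $\nn$-branch errors, each charged to the corresponding marginal of $\Dpair$, exactly as the paper does. Your version keeps the indicator $\nn\in P^u$ explicit before dropping it, which is slightly cleaner.
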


\begin{proof}
    Consider any $(R, J)$-halfspace $f \colon \R^d \to \{-1,1\}$ and let $A \colon \R^d \to \R$ denote the degree $k(\gamma\eps/2;R,J;\Dpair)$ polynomial approximating $f$ with respect to $\Dpair$. Using the notation of Definition~\ref{def:general-approx-deg} to refer to $\Dpair^{(u)}_X$ and $\Dpair^{(u)}_Y$ as first and second marginals, for $\ell \in \{1,2,4\}$,
\begin{align*}
\Ex_{\by \sim \Dpair^{(u)}_{Y}}\left[ |A(\by) - f(\by)|^{\ell} \right] &\leq \frac{1}{\gamma} \Ex_{\by \sim \Dpair_Y}\left[ |A(\by) - f(\by)|^{\ell} \right] \leq \frac{1}{\gamma} \cdot \gamma\eps/2\leq \eps, \\
\Ex_{\bx \sim \Dpair^{(u)}_{X}}\left[ |A(\bx) - f(\bx)|^{\ell} \right] &\leq  \frac{1}{\gamma} \Ex_{(\bx,\bp) \sim \Dpair}\left[ \ind\{ \bx \in \calR_{\bT}(u)\} |A(\bx) - f(\bx)|^{\ell} + \ind\{\bx\notin \calR_{\bT}(u)\} |A(\bp) - f(\bp)|^{\ell} \right] \\
			&\leq \frac{1}{\gamma} \Ex_{\bx \sim \Dpair_X}\left[ |A(\bx) - f(\bx)|^{\ell} \right] + \frac{1}{\gamma} \Ex_{\bp \sim \Dpair_Y}\left[ |A(\bp) - f(\bp)|^{\ell} \right]  \leq \gamma\eps/(2\gamma) + \gamma \eps / (2\gamma) = \eps.
\end{align*}
\end{proof}

\begin{theorem}[Formalization of Theorem~\ref{thm:main-intro}]
\label{thm:main-formal}
    There exists a polynomial-time algorithm which receives a dataset $P \subset \R^d$ of $n$ points with $d = \polylog(n)$, and sample-access to a nearest neighbor pair distribution $\Dpair$.
    \begin{itemize}
	\item Suppose that there is a constant $c_1 > 0$ such that for any $\eps_a > 0$ and $R, J \in \N$, $\Dpair$ has $(R, J)$-halfspace approximation degree $k(\eps_a) \leq (R \cdot 2^J/\eps_a)^{c_1}$.
	\item $\Dpair$ admits a perfect $1/3$-balanced halfspace tree $T^{\star}$.
    \end{itemize}
    With probability $1-o(1)$, the algorithm outputs a nearest neighbor data structure for $P$ which uses space $O(nd)$ and finds a nearest neighbor in time $o(nd)$ with probability $1-o(1)$ over $(\bq,\nn) \sim \Dpair$.
\end{theorem}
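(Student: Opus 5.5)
The plan is to instantiate \GreedyTree with \ptfcut as the balanced cut oracle, bound the resulting tree's cost by Lemma~\ref{lem:greedy-reduction}, and convert this into a search algorithm via Claim~\ref{claim:cost-to-search}. Concretely, fix a small target $\gamma = (\log n)^{-c_2}$ for a constant $c_2>0$ to be tuned. Set $R = O(1/(\beta_1\gamma))$ and $J = O(\log(1/(\beta_1\gamma))/\beta_1)$ as in Lemma~\ref{lem:greedy-reduction}, with $\beta_1 = \Omega(\beta_0^2) = \Omega(1)$ the balance guaranteed by Theorem~\ref{thm:improper-alg} at $\beta = 1/3$. Then $R\cdot 2^J = \poly(1/\gamma)$.

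First I verify that \ptfcut, run on $\Dpair^{(u)}$ with cut class $\calF_{R,J}$ and error parameter $\eps$, is a $(\beta_1, 1/3, R, J, \eps_c)$-balanced cut oracle. By Theorem~\ref{thm:improper-alg} the output has balance $\Omega(1)$ and error $O(\sqrt{\opt_{1/3}(\calF_{R,J}) + \eps})$, hence at most $O(\sqrt{\err(h)}) + O(\sqrt{\eps})$ for every $1/3$-balanced $h \in \calF_{R,J}$. So I take $\eps_c = O(\sqrt{\eps})$. The required degree is $k(\eps; R,J; \Dpair^{(u)})$. Since \GreedyTree only cuts nodes with mass at least $\gamma$, Claim~\ref{cl:conditioning} bounds this degree by $k(\gamma\eps/2; R,J;\Dpair) \le (R 2^J \cdot 2/(\gamma\eps))^{c_1} = \poly(1/\gamma, 1/\eps)$. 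Sampling from $\Dpair^{(u)}$ is done by rejection from $\Dpair$, with overhead $1/\gamma$. To amplify the $9/10$ success probability, I repeat each call $O(\log(1/(\gamma\delta)))$ times. I then choose the best output by empirical error among those with sufficient empirical balance, using Lemma~\ref{lem:uniform-convergence}, and union bound over the $O(1/\gamma)$ calls from Claim~\ref{claim:greedy-tree-runtime}.

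Lemma~\ref{lem:greedy-reduction} then gives $\cost_{\Dpair}(\bT) \le 2\gamma + O(\sqrt{\eps})$, so I take $\eps = \gamma^2$ and obtain cost $O(\gamma)$. The tree has depth $O(\log(1/\gamma))$ and complexity $\tau = (d+k)^{k}$ with $k = \poly(1/\gamma)$. Claim~\ref{claim:cost-to-search} with $\delta = \gamma^{1/4}$ gives success probability $1-o(1)$ and query time
\[O(\tau \log(1/\gamma)) + O(nd\,\gamma^{1/2}).\]
The second term is $o(nd)$. The first term is $d^{\poly(1/\gamma)}$. With $d = \polylog(n)$ this equals $\exp(\poly(\log n)^{c_2'}\log\log n)$, so choosing $c_2$ small enough makes it $n^{o(1)} = o(nd)$. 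Space is $O(nd)$ for the data plus $O(1/\gamma)\cdot(d+k)^k = n^{o(1)}$ for the polynomials' coefficients. Preprocessing time is $\poly((d+k)^k, 1/\gamma) \cdot n = \poly(n)$.

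The main obstacle is the parameter balancing. All dependencies compound: $R\,2^J$ is itself $\poly(1/\gamma)$ because $J$ is logarithmic and $\beta_1$ is constant, then the conditioning loses a further factor $\gamma$, and the degree is raised to the power $c_1$. I must check that the resulting exponent $k = (1/\gamma)^{O(c_1)}$ still gives $k\log d = o(\log n)$. This is exactly why $\gamma$ must be only inverse-polylogarithmic, which yields barely sublinear time as in Remark~\ref{rm:barely-sublinear}. A secondary point is that Claim~\ref{claim:cost-to-search} needs an estimate of $\cost(\bT)$; I would simply use the upper bound $O(\gamma)$ from Lemma~\ref{lem:greedy-reduction}.
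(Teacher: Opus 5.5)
Your proposal is correct and follows the paper's proof essentially step for step. You run \ptfcut on the conditional distributions $\Dpair^{(u)}$, with the degree controlled via Claim~\ref{cl:conditioning}, as the $(\beta_1,1/3,R,J,\eps_c)$-oracle inside \GreedyTree; Lemma~\ref{lem:greedy-reduction} with $\eps=\gamma^2$ gives cost $O(\gamma)$; and Claim~\ref{claim:cost-to-search} with inverse-polylogarithmic $\gamma$ yields query time $d^{\poly(1/\gamma)} + nd\,\gamma^{\Omega(1)}$, matching the paper's choice $\gamma=\eps_c=\delta_s^3$. The extra details you supply (rejection sampling from $\Dpair^{(u)}$, and amplifying and union-bounding the oracle's success probability) are left implicit in the paper and are handled correctly.
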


\begin{proof}
We will use $\GreedyTree(P, \Dpair,\delta \mid \calO)$ with $\delta = o(1)$ (which encodes the failure probability of the data structure learning algorithm), and its guarantees (Lemma~\ref{lem:greedy-reduction}), and Theorem~\ref{thm:improper-alg} to implement the cut oracle $\calO$ (which is executed on conditional distributions $\Dpair^{(u)}$). 
\begin{enumerate}
\item[(a)] First, executing $\GreedyTree(P,\Dpair,\delta\mid\calO)$ with parameter $\gamma$ in $\TestMass$ (which we set later) guarantees that all calls to $\calO$ are on distributions $\Dpair^{(u)}$ whose second marginal always contains at least $\gamma$-fraction of $\Dnn$, and this implies that for any $R, J \in \N, \eps_a > 0$, the $(R, J)$-halfspace degree of every $\Dpair^{(u)}$ is at most $k(\gamma\eps_a/2; R, J; \Dpair)$ (Claim~\ref{cl:conditioning}). 
\item[(b)] For $\eps_c > 0$ (which we set later), we implement a $(\beta_1, 1/3, R, J, \eps_c)$-balanced cut oracle $\calO$ for $R = O(1/\gamma)$ and $J = O(\log(1/\gamma))$ as per Definition~\ref{def:cut-oracle} for each $\Dpair^{(u)}$ using Theorem~\ref{thm:improper-alg} on the class $\calF$ of $(R,J)$-halfspaces with input balance parameter $\beta \leftarrow 1/3$, error parameter $\eps\leftarrow c'\eps_c^2$ (for a small constant $c'$) and obtain $\beta_1 = \Omega(1)$. The performance guarantees, using (a) to say $k(c'\eps_c^2; R, J; \Dpair^{(u)}) \leq k(c'\gamma\eps_c^2/2; R,J;\Dpair)$, is 
\begin{align*} 
\text{sample complexity per call to $\calO$: }& O\left( (d + k(c'\gamma\eps_c^2/2;R,J;\Dpair))^{k(c'\gamma\eps_c^2/2;R,J;\Dpair)} / \eps_c^4 \right),
\end{align*}
whose running time is polynomial in the sample complexity (furthermore, $\calO$ is called $O(1/\gamma)$ times). By Claim~\ref{claim:greedy-tree-runtime} and inspection of $\GreedyTree(P, \Dpair,\delta\mid\calO)$, the total running time is polynomial in the above sample complexity, $\log(1/\delta)$ and $1/\gamma$, and linear in $n$.
\item[(c)] Using $\calO$ built above in $\GreedyTree(P,\Dpair,\delta\mid\calO)$ and applying Lemma~\ref{lem:greedy-reduction}, with probability $1-\delta$, we obtain a cut tree $\bT$ of depth $O(\log(1/\gamma))$ whose cuts are PTFs of degree $k(c'\gamma\eps_c^2/2; R, J;\Dpair)$ and $\cost_{\Dpair}(\bT)$ is at most $O(\gamma + \eps_c)$, and for any $\delta_s > 0$, Claim~\ref{claim:cost-to-search} implies a data structure which succeeds at finding the nearest neighbor of $(\bq,\nn) \sim \Dpair$ with probability at least $1-\delta_s$ and has query time at most
\[ O( (d + k(c'\gamma\eps_c^2/2;R,J;\Dpair))^{k(c'\gamma\eps_c^2/2;R,J;\Dpair)} \log(1/\gamma)) + O(nd / \delta_s^2) \cdot \left(\gamma + \eps_c \right).\]
\end{enumerate}
For the final parameter settings, assuming $d = \polylog(n)$, we set 
\[ \gamma = \eps_c = \delta_s^3 \qquad\text{and}\qquad \delta_s = (\log n)^{-C},\] 
for a small enough constant $C$. As long as $\cost_{\Dpair}(T^{\star}) \leq \delta_s^6$, the fact that $k(\cdot)$ grows polynomially in the error implies $k(c'\gamma\eps_c^2;R,J;\Dpair)$ is polynomial in at most $(\log n)^{O(C)}$. Since $d = \polylog(n)$, the query time of the data structure is \smash{$d^{{(\log n)}^{O(C)}} + nd / (\log n)^{\Omega(C)}$}, which is $o(nd)$ for a small enough constant $C > 0$. Finally, the data structure consists of a tree of $\polylog(n)$ nodes, each of which holds a polynomial $\R^d \to \R$ of degree $(\log n)^{O(C)}$, as well as the points, giving the total space bound.
\end{proof}

\begin{remark}
    In Appendix~\ref{sec:poly-approx} (specifically Theorem~\ref{thm:conditions-approx}), we show that distributions $\Dpair$ with Gaussian-like marginals have bounded $(R,J)$-halfspace approximation degree $k(\eps_a) \le (R \cdot 2^J/\eps_a)^{c_1}$.
\end{remark}

%% file: sections/poly-approx.tex

\newcommand{\Rint}{\calR^{\textup{int}}}
\newcommand{\Rtrans}{\calR^{\textup{trans}}}
\newcommand{\Rtail}{\calR^{\textup{tail}}}

\section{Additional Preliminaries}

\subsection{VC Dimension and Sample Complexity}

\begin{definition}
\label{def:vc-dim}
    For a set of elements $X$, let $\calC \subseteq 2^X$ denote a concept class. A set $S\subseteq X$ is \emph{shattered} by $\calC$ if for every $T\subseteq S$, there exists $C\in\calC$ for which $T = S \cap C$. The \emph{VC dimension} of $\calC$, denoted $\vc(\calC)$, is the size of the largest set shattered by $\calC$.    
\end{definition}

\begin{lemma}[Theorem~6.8 of~\cite{shalev2014}]
\label{lem:vc-samples}
    Let $\calC$ be a concept class. For any distribution $\calM$, if we draw $\calbS\sim\calM^m$ with
    \[
    m = O\!\left(\frac{\vc(\calC) + \log(1/\delta)}{\eps^2}\right),
    \]
    then with probability at least $1-\delta$ we have
    \[
    \sup_{C\in\calC}\left|\Prx_{\bx\sim S}[\bx\in C] - \Prx_{\bx\sim\calM}[\bx\in C]\right|\le\eps.
    \]
\end{lemma}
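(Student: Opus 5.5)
This statement is a standard uniform-convergence bound from statistical learning theory, cited as Theorem~6.8 of~\cite{shalev2014}. The plan is to recall its proof via symmetrization and Sauer's lemma rather than to rebuild it from scratch.

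\emph{Reduction to the growth function.} For a sample $\calbS$ of size $m$, define the empirical deviation
\[
\Delta(\calbS) := \sup_{C\in\calC}\left|\Prx_{\bx\sim \calbS}[\bx\in C]-\Prx_{\bx\sim\calM}[\bx\in C]\right|.
\]
\begin{itemize}
\item First, bound the expectation $\E[\Delta(\calbS)]$. Introduce an independent ghost sample $\calbS'$ and Rademacher signs. Symmetrization then bounds $\E[\Delta]$ by twice the expected empirical Rademacher complexity of $\calC$ restricted to $\calbS\cup\calbS'$.
\item By Sauer's lemma, the restriction of $\calC$ to $2m$ points has at most $(2em/\vc(\calC))^{\vc(\calC)}$ patterns.
\item Massart's finite-class lemma then gives
\[
\E[\Delta]\le O\!\left(\sqrt{\vc(\calC)\log(m/\vc(\calC))/m}\right).
\]
\end{itemize}

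\emph{Removing the log factor.} To obtain the stated rate $O(\sqrt{\vc(\calC)/m})$, replace Massart's lemma with Dudley chaining. The covering numbers of $\calC$ in empirical $L_2$ at scale $t$ are at most $(1/t)^{O(\vc(\calC))}$ by Haussler's packing bound. Integrating $\sqrt{\log N(t)}$ from $0$ to $1$ gives $O(\sqrt{\vc(\calC)/m})$.

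\emph{High probability.} The quantity $\Delta$ has bounded differences $1/m$ in each sample coordinate. McDiarmid's inequality therefore gives
\[
\Delta\le\E[\Delta]+\sqrt{\log(1/\delta)/(2m)}
\]
with probability at least $1-\delta$. Setting $m=O((\vc(\calC)+\log(1/\delta))/\eps^2)$ makes both terms at most $\eps/2$.

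\emph{Main obstacle.} The delicate step is the chaining or packing bound that removes the $\log m$ factor. If that step were too cumbersome, I would accept the extra logarithm, since every downstream use in the paper tolerates polylogarithmic slack. As stated, though, the result is simply the cited textbook theorem, and I would invoke it directly.
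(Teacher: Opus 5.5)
Your proposal is correct and matches the paper, which gives no proof of its own and simply invokes Theorem~6.8 of~\cite{shalev2014}; your sketch (symmetrization, Sauer plus Massart, Dudley chaining with Haussler's packing bound to remove the $\log m$ factor, and McDiarmid for the high-probability bound) is the standard argument behind that citation. The textbook's own proof (its Chapter~28) only gives the version with the extra logarithm and defers the chaining step, so your fallback of accepting that logarithm is sound: it costs only logarithmic factors in the sample complexities of Lemma~\ref{lem:uniform-convergence} and Lemma~\ref{lem:finite-sample-extension}.
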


To apply this framework to geometric separators, we define the following standard concept classes over the domain $X = \R^d$.

\begin{definition}[Slabs]
    The class of \emph{slabs} (or \emph{strips}) in $d$ dimensions is defined as
    \[
    \mathrm{Slab}_1 = \left\{ \{x \in \R^d : |\langle w, x \rangle - \theta| \le \gamma\} : w \in \R^d, \theta \in \R, \gamma \ge 0 \right\}.
    \]
\end{definition}

\begin{fact}\label{fact:vc-halfspaces}
    The VC dimension of halfspaces in $\R^d$ is $\vc(\calP_1) = d + 1 = O(d)$.
\end{fact}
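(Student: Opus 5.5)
The statement to prove is Fact~\ref{fact:vc-halfspaces}: $\vc(\calP_1) = d+1$. I will prove the lower bound by exhibiting a shattered set of size $d+1$, and the upper bound by showing that no set of $d+2$ points is shattered, via Radon's theorem. Throughout, I identify a halfspace $\sign(\langle h,x\rangle-\theta)$ with the set $\{x : \langle h,x\rangle - \theta \ge 0\}$. The sign convention at $0$ does not affect the argument, since in the lower bound we can always perturb $\theta$ to avoid points on the boundary.

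\textbf{Lower bound.} Take the points $0, e_1, \dots, e_d$. Given any target labeling $T \subseteq \{0,e_1,\dots,e_d\}$, I will realize it with a halfspace. Set $h_i = 1$ if $e_i \in T$ and $h_i = -1$ otherwise. Set $\theta = -1/2$ if $0 \in T$ and $\theta = 1/2$ otherwise.

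I then check each point against $\langle h,x\rangle - \theta$:
\begin{itemize}
\item At the origin, the value is $-\theta$, which is positive exactly when $0 \in T$.
\item At $e_i$, the value is $h_i - \theta \in \{\pm 1/2, \pm 3/2\}$, and its sign is the sign of $h_i$.
\end{itemize}
So this halfspace cuts out exactly $T$, and $d+1$ points are shattered.

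\textbf{Upper bound.} Take any $d+2$ points $x_1,\dots,x_{d+2} \in \R^d$. The lifted vectors $(x_i,1) \in \R^{d+1}$ are $d+2$ vectors in a $(d+1)$-dimensional space, so they are linearly dependent. Hence there is a nonzero $\lambda$ with $\sum_i \lambda_i x_i = 0$ and $\sum_i \lambda_i = 0$.

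Split the indices into $I^+ = \{i : \lambda_i > 0\}$ and $I^- = \{i : \lambda_i \le 0\}$. Because $\lambda \neq 0$ and its entries sum to $0$, both $I^+$ and $\{i : \lambda_i < 0\}$ are nonempty. Normalizing by $s = \sum_{i\in I^+}\lambda_i > 0$, the point
\[ z = \sum_{i \in I^+} (\lambda_i/s)\, x_i = \sum_{i \in I^-} (-\lambda_i/s)\, x_i \]
lies in both $\mathrm{conv}\{x_i : i\in I^+\}$ and $\mathrm{conv}\{x_i : i\in I^-\}$; this is Radon's theorem.

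Now suppose some halfspace realized the labeling $I^+$. It would contain every $x_i$ with $i \in I^+$, and since halfspaces are convex, it would contain $z$, so $\langle h,z\rangle - \theta \ge 0$. On the other side, every $x_i$ with $i \in I^-$ satisfies $\langle h,x_i\rangle - \theta < 0$. The convex combination has strictly positive weight on at least one index with $\lambda_i<0$, so $\langle h,z\rangle - \theta < 0$. This is a contradiction, so $I^+$ cannot be realized and no $d+2$ points are shattered.

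\textbf{Main obstacle.} The only delicate point is this strict inequality: the weights must be chosen so that at least one strictly negative coefficient receives positive mass. The nonzero sum-zero structure of $\lambda$ guarantees this. Combining the two bounds gives $\vc(\calP_1) = d+1 = O(d)$.
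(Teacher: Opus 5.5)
Your proof is correct. The paper states this as a standard fact without proof and uses it only as a black box, for example when it lifts degree-$k$ PTFs to halfspaces in $\R^N$ for the uniform convergence argument. Your argument is the standard textbook one: shatter $\{0,e_1,\dots,e_d\}$, and rule out $d+2$ points via the affine dependence of the lifted vectors $(x_i,1)$, i.e.\ Radon's theorem.

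One small nit: you say the $\sign(0)$ convention is irrelevant, but you justify this only for the lower bound. For the upper bound, note that a halfspace and its complement are both convex, whether the halfspace is taken open or closed. The Radon point $z$ would then have to lie both inside and outside the halfspace, which settles the convention question. It also makes your ``delicate'' strictness step unnecessary.
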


\begin{fact}\label{fact:vc-slabs}
    The VC dimension of slabs in $\R^d$ is $\vc(\mathrm{Slab}_1) = O(d)$.
\end{fact}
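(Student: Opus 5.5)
The plan is a direct shattering argument. The claim is $\vc(\mathrm{Slab}_1)=O(d)$. Write each slab as an intersection of two halfspaces:
\[
\{x : |\langle w,x\rangle-\theta|\le\gamma\} \;=\; \{x:\langle w,x\rangle \le \theta+\gamma\}\cap\{x:\langle -w,x\rangle\le -\theta+\gamma\}.
\]
So $\mathrm{Slab}_1$ is contained in the class of pairwise intersections of sets from $\calP_1$, and it suffices to bound the VC dimension of $\{H_1\cap H_2 : H_1,H_2\in\calP_1\}$.

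For that bound I will use the Sauer--Shelah lemma together with Fact~\ref{fact:vc-halfspaces}. Take any set $S$ of $m$ points. By Sauer--Shelah, halfspaces induce at most $\Phi(m):=\sum_{i\le d+1}\binom{m}{i}\le (em/(d+1))^{d+1}$ distinct traces on $S$. Since the trace of $H_1\cap H_2$ on $S$ is the intersection of the traces of $H_1$ and $H_2$, intersections of two halfspaces induce at most $\Phi(m)^2$ traces. If $S$ is shattered, then $2^m\le (em/(d+1))^{2(d+1)}$. This inequality fails once $m\ge C(d+1)$ for a suitable absolute constant $C$, because the left side grows exponentially in $m/(d+1)$ while the right side grows only polynomially. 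Hence $\vc(\mathrm{Slab}_1)\le C(d+1)=O(d)$.

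The only real step is the final inequality. Setting $t=m/(d+1)$, it becomes $2^t\le (et)^2$, which is false for all $t\ge 20$, say, so any $C\ge 20$ works. No lower bound is needed, since the statement only asks for $O(d)$, although the trivial bound $\vc\ge d$ could be noted. I do not expect any genuine obstacle.
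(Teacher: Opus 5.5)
Your proof is correct. The paper states this as a standard fact and gives no proof, so there is no argument to compare against. Your argument supplies the standard justification and gives the explicit bound $\vc(\mathrm{Slab}_1)<20(d+1)$. It writes a slab as the intersection of the closed halfspaces $\{\langle w,x\rangle\le\theta+\gamma\}$ and $\{\langle w,x\rangle\ge\theta-\gamma\}$. Sauer--Shelah then bounds the number of traces of pairwise intersections on an $m$-point set by $\Phi(m)^2$, and you check that $2^t>(et)^2$ for all $t\ge 20$.

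Two small points are worth stating explicitly. First, the estimate $\Phi(m)\le(em/(d+1))^{d+1}$ needs $m\ge d+1$, which holds in the regime $m\ge 20(d+1)$ you use. Second, it does not matter whether Fact~\ref{fact:vc-halfspaces} is read for open or closed halfspaces, as the paper's $\sign$-based $\calP_1$ leaves ambiguous. Including the degenerate $w=0$ cases does not matter either. On a finite set, every such trace is also realized by a nondegenerate closed halfspace after perturbing the threshold.

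The tempting shortcut of viewing a slab as the degree-$2$ PTF set $\{(\langle w,x\rangle-\theta)^2\le\gamma^2\}$ and invoking Fact~\ref{fact:vc-ptf} would only give $O(d^2)$. Your intersection-of-two-halfspaces route is what delivers the linear bound.
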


\begin{fact}\label{fact:vc-ptf}
    The VC dimension of the class of degree-$k$ PTFs in $d$ dimensions is 
    \[
    \vc(\calP_k) = \binom{d+k}{k} = O((d+k)^k).
    \]
\end{fact}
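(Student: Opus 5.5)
The plan is to identify degree-$k$ PTFs with halfspaces in a lifted space. Let $N = \binom{d+k}{k}$ and let $\phi_k \colon \R^d \to \R^N$ be the monomial embedding used in Subsection~\ref{subsec:sdp-algorithm}. Every degree-$k$ polynomial can be written $A(x) = w^{\intercal}\phi_k(x)$ with $w \in \R^N$. The set $\{x : A(x) \geq 0\}$ (or $>0$, depending on the sign convention) is therefore the preimage under $\phi_k$ of a homogeneous halfspace $\{z \in \R^N : \langle w, z\rangle \geq 0\}$. The constant monomial is one coordinate of $\phi_k$, so any threshold is absorbed into $w$.

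For the upper bound, suppose a set $S = \{x_1,\dots,x_s\} \subset \R^d$ is shattered by degree-$k$ PTFs. Then the images $\phi_k(x_1),\dots,\phi_k(x_s)$ are distinct. Every labeling realized on $S$ by a PTF is realized on these images by a homogeneous halfspace in $\R^N$. So $\phi_k(S)$ is shattered by homogeneous halfspaces in $\R^N$. I would then use the standard linear-algebra argument: homogeneous halfspaces in $\R^N$ have VC dimension at most $N$. Concretely, if $s \geq N+1$, the vectors $\phi_k(x_i)$ are linearly dependent, giving $\sum_i c_i \phi_k(x_i) = 0$ with not all $c_i = 0$. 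Take the labeling that is positive exactly where $c_i > 0$ (after possibly negating $c$ so that some $c_i > 0$). Any $w$ realizing it gives $\sum_i c_i \langle w, \phi_k(x_i)\rangle > 0$ from the indices with $c_i > 0$, while the other terms are $\geq 0$. This contradicts the sum being $0$. Handling strict versus non-strict inequalities for points evaluating to exactly $0$ is the one fiddly point. I would fix the convention $\sign(0) = 1$ and, if needed, use a labeling that assigns $-1$ to $\{c_i>0\}$ instead, which makes the contradiction strict. This gives $\vc(\calP_k) \leq N$.

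For the matching lower bound, I would exhibit $N$ points on which the monomials are linearly independent. For example, take a unisolvent set for polynomials of degree $\leq k$, such as the points of the simplex lattice $\{\alpha/k : \alpha \in \Z_{\ge 0}^d,\ |\alpha|\le k\}$. On such points, evaluation $w \mapsto (w^{\intercal}\phi_k(x_i))_i$ is a bijection $\R^N \to \R^N$, so any sign pattern, e.g.\ values $\pm 1$, is realized by interpolation. This shows $\vc(\calP_k) \geq N$. Finally, $\binom{d+k}{k} \leq (d+k)^k$ trivially. I expect the only real obstacle to be the unisolvence claim for the lower bound. For the paper's use (sample complexity in Lemma~\ref{lem:uniform-convergence}), only the upper bound $O((d+k)^k)$ is needed, so I would present that carefully and cite the lower bound as standard.
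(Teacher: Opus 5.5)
Your proof is correct and follows the route the paper itself uses: the Fact is stated without proof, but the proof of Lemma~\ref{lem:uniform-convergence} justifies it by identifying degree-$k$ PTFs with halfspaces in $\R^N$ via the monomial map $\phi_k$. Your version is slightly sharper in two ways: using homogeneous halfspaces (threshold absorbed into the constant monomial) together with the flipped labeling gives the exact upper bound $N$ rather than the paper's $N+1$, and your unisolvent-set interpolation supplies the matching lower bound needed for the stated equality $\vc(\calP_k) = \binom{d+k}{k}$.
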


\section{Approximating Geometric Regions with Polynomials}
\label{sec:poly-approx}

Our final nearest neighbor search guarantee (Theorem~\ref{thm:main-formal}) makes an assumption on the $(R, J)$-halfspace approximation degree of the distribution $\Dpair$. Specifically, it assumes it is bounded by $(R \cdot 2^J/\eps)^{O(1)}$. The purpose of this appendix is to justify this assumption by demonstrating that this bound holds when the marginals of $\Dpair$ satisfy Gaussian-like properties.

In Subsection~\ref{subsec:poly-approx}, we show that the geometric regions of interest (halfspaces, intersections of halfspaces, and disjoint unions of intersections of halfspaces) are well-approximated by low-degree polynomials, whenever the underlying distribution satisfies a certain quantitative relaxation of anticoncentration and subgaussianity properties (see Definition~\ref{def:concentration-defs}). In Subsection~\ref{sec:extension-finite-sample}, we show that these anticoncentration and subgaussianity properties are preserved under finite sampling. The latter is relevant to our nearest neighbor search application, where the $\by$-marginal of $\Dpair$ is a discrete distribution supported on the dataset.

\begin{definition}
\label{def:concentration-defs}
    We say a distribution $\calM$ over $\R^d$ is \emph{$(L,\Delta)$-anticoncentrated} if for every unit vector $h\in\R^d$,
    \[
    \sup_{t\in\R}\cbr{\Prx_{\bz\sim\calM}[|\ip{h}{\bz} - t| \le r]} \le Lr+\Delta
    \quad \text{for all } r \in [0,1].
    \]
    We say that $\calM$ is \emph{$(B,r)$-subgaussian} if for every unit vector $h\in\R^d$ and integer $1\le p \le r$,
    \[
    \Ex_{\bz\sim\calM}[|\ip{h}{\bz}|^p]^{1/p} \le B\sqrt{p}.
    \]
    We say $\calM$ is $B$-subgaussian if it is $(B,r)$-subgaussian for all $r\in\N$.
\end{definition}

Recall the notion of $(R,J)$-halfspace approximation degree as defined in Definition~\ref{def:gen-halfspace-approx-deg}.

\begin{theorem}\label{thm:conditions-approx} Let $L,B>0$ be constants. Suppose that $\Dpair$ be a distribution over $\R^d \times \R^d$ whose first marginal $\Dx$ is $(L,0)$-anticoncentrated and $B$-subgaussian, and whose second marginal $\Dy$ is the uniform distribution over a dataset $\bP \sim \calM^m$, where $\calM$ is $(L,0)$-anticoncentrated and $B$-subgaussian.

There exists a constant $c_1 > 0$ such that for any target error $\eps \in (0, 1)$ and parameters $R, J \in \N$, if the dataset size satisfies $m \ge \poly(d, 1/\eps, \log(1/\delta), R, 2^J)$, then with probability at least $1-\delta$ over the draw of $\bP$, the distribution $\calD$ has an $(R,J)$-halfspace approximation degree bounded by:
$$ k(\eps; R, J; \Dpair) \le \left(\frac{R \cdot 2^J}{\eps}\right)^{c_1}. $$
\end{theorem}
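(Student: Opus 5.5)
Our approach builds the approximating polynomial in three layers: a single halfspace, then an intersection of $J$ halfspaces, then a disjoint union of $R$ intersections. At every layer we control the $\ell=1,2,4$ errors under an arbitrary distribution that is $(L,\Delta)$-anticoncentrated and $(B,r)$-subgaussian, with $\Delta$ small and $r$ a large enough multiple of the degree. We then apply this lemma to $\Dx$ directly (with $\Delta=0$) and to $\Dy$, the empirical distribution of $\bP\sim\calM^m$, once we have shown that $\Dy$ inherits the two properties with small $\Delta$ and slightly worse constants.

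\textbf{Step 1 (one halfspace).} For $f(z)=\sign(\ip{h}{z}-\theta)$ with $\|h\|=1$, take $A(z)=p(\ip{h}{z}-\theta)$, where $p$ is a univariate polynomial of degree $k$ approximating $\sign$ on $[-K,K]$. Concretely, let $p$ be within $\eps'$ of $\sign$ outside $[-\rho,\rho]$, bounded by $2$ on $[-K,K]$, and with $|p(t)|\le (2|t|/K)^{k}$ outside $[-K,K]$. A Chebyshev or Jackson-type construction achieves this with $k=\Ot(K/\rho)$. Write $w=\ip{h}{\bz}-\theta$ and split $\E|A-f|^\ell$ into three regions. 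On $|w|\le\rho$ the contribution is at most $3^\ell(L\rho+\Delta)$ by anticoncentration. On $\rho<|w|\le K$ it is at most $\eps'^\ell$. On $|w|>K$ we apply Cauchy--Schwarz: the term is at most $\E[|p(w)|^{2\ell}]^{1/2}\Pr[|w|>K]^{1/2}$. Both factors are controlled by the moment bounds up to order $r=O(\ell k)$. Subgaussianity bounds $\E|w|^{q}\le (B\sqrt q+|\theta|)^q$, and we may assume $|\theta|=O(B\sqrt{\log(1/\eps)})$, since otherwise $f$ is $\eps$-close to a constant. Hence choosing $K=\Theta(B\sqrt{k\log(1/\eps)})$ makes the tail term at most $\eps$. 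Taking $\rho=\eps/L$ then gives degree $\poly(1/\eps)$.

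\textbf{Step 2 (intersections and disjoint unions).} For $g=\bigwedge_j f_j$ we have $(g+1)/2=\prod_j (f_j+1)/2$. We replace each factor by $(A_j+1)/2$ from Step 1, with error $\eps/(J2^{J})^{O(1)}$ in a suitable $L^{4J}$-type sense. The product error is bounded by telescoping, $\prod a_j-\prod b_j=\sum_j (a_j-b_j)\prod_{i<j}a_i\prod_{i>j}b_i$. Each term is handled by H\"older, using that $|a_i|\le1$ and that $\E|b_i|^{q}=O(1)$ for $q\le O(J)$; the latter follows from Step 1 applied with higher $\ell$, which is why we need subgaussian moments of order $O(Jk)$. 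This costs a factor of at most $2^{O(J)}$ and degree $Jk$. For a disjoint union, $(h+1)/2=\sum_{r}(g_r+1)/2$ exactly, so the errors add by Minkowski, costing a factor $R$. Altogether we obtain degree $(R2^J/\eps)^{O(1)}$ whenever the distribution has $\Delta\le \eps/(R2^J)^{O(1)}$ and moments up to order $(R2^J/\eps)^{O(1)}$.

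\textbf{Step 3 (finite sample), the main obstacle.} We must show that, with probability $1-\delta$, the empirical $\Dy$ satisfies the hypotheses of the lemma. Anticoncentration follows by uniform convergence over slabs (Fact~\ref{fact:vc-slabs} with Lemma~\ref{lem:vc-samples}): for every $(h,t,r)$, $\Pr_{\Dy}[|\ip{h}{\bz}-t|\le r]\le Lr+\Delta$ with $\Delta=O(\sqrt{(d+\log(1/\delta))/m})$. Subgaussianity up to order $r$ is harder, because $|\ip{h}{z}|^p$ is unbounded. Our plan is to truncate at level $K'=\Theta(B\sqrt{\log m})$, noting that by a union bound no sample exceeds this level in any of the relevant directions. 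Then $\E|\ip{h}{z}|^p=\int_0^{K'} pt^{p-1}\Pr[|\ip{h}{z}|>t]\,dt$, and each tail probability concentrates uniformly over $h$ and $t$, since complements of slabs form a VC class. The additive deviation is at most $K'^p\cdot O(\sqrt{d/m})$, which is $\le (B\sqrt p)^p$ once $m\ge \poly(d,(\log m)^{p})$.

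The main difficulty is balancing the orders involved. The moment order $p$ we need is $\poly(R2^J/\eps)$, so we must check that $(\log m)^{p/2}$ remains compatible with the requirement $m\ge\poly(\cdot)$ in the statement. If this fails, we would instead avoid raw moments in the tail term. Specifically, we would bound $\E[|p(w)|^{2\ell}\ind\{|w|>K\}]$ directly, using $|w|\le K'$ on the sample together with the uniform tail bound $\Pr_{\Dy}[|w|>K]\le e^{-K^2/2B^2}+\Delta$. With this bound, the sample only needs to satisfy $\Delta$-level tail control rather than high moments.
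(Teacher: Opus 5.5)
Your three-layer plan matches the paper's own architecture: Lemmas~\ref{lem:region-approx}, \ref{lem:intersect-approx} and \ref{lem:union-intersect-approx}, followed by Lemma~\ref{lem:finite-sample-extension}. Steps~1--2 are workable variants of it, with one caveat at the end. The genuine gap is the truncation level in Step~3.

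The $(R,J)$-halfspace function, and hence its direction $h$, is quantified \emph{after} $\bP$ is drawn. So the truncation must hold simultaneously for every unit vector $h$, and a union bound over the $m$ samples only covers one fixed $h$. Uniformly, $\sup_{\|h\|_2=1}\max_{z\in\bP}|\langle h,z\rangle|=\max_{z\in\bP}\|z\|_2=\Theta(B\sqrt{d+\log(m/\delta)})$. For Gaussian-like $\calM$, the direction $h=z_1/\|z_1\|_2$ puts the atom $z_1$ at $\langle h,z_1\rangle\approx\sqrt d$. This is exactly the radius in Claim~\ref{claim:subgaussian-max-norm}.

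With $K'\approx B\sqrt{d+\log m}$, your deviation term $K'^p\sqrt{d/m}$ is at most $(B\sqrt p)^p$ only once $m\ge d^{\Omega(p)}$ (for $p\ll d$), where $p=\Theta(Jk)=\poly(R2^J/\eps)$. The fallback fails for the same reason. That atom has mass $1/m$, so your bound $\sup_{|w|\le K'}|p(w)|^{2\ell}\cdot\Pr_{\Dy}[|w|>K]$ is at least about $(1/m)(2\sqrt d/K)^{2\ell k}$. This is at most $\eps$ only if $m\ge (d/k)^{\Omega(k)}$, or if $K\gtrsim\sqrt d$, which makes the degree grow with $d$. Even granting your $\sqrt{\log m}$ radius, the main route needs $m\ge(\log m)^{\Theta(p)}$, i.e.\ $\log m\gtrsim p\log p$, which is not polynomial in $1/\eps$.

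Once the radius is corrected, your argument reduces to the paper's: truncate at $CB\sqrt{d+\log(m/\delta)}$ and pay a sample size exponential in the moment order. The paper's route gives nothing better. Lemma~\ref{lem:finite-sample-extension} requires $m\ge d\,(2B^2d\log(1/\delta)/r)^r$ with $r\ge 4Jk$, which for $r\ll d$ is $d^{\poly(R2^J/\eps)}$ rather than $\poly(d,1/\eps,\log(1/\delta),R,2^J)$. This is harmless in Theorem~\ref{thm:main-formal}, where $d=\polylog(n)$ and $r=(\log n)^{O(C)}$. A genuinely polynomial sample bound would need a different construction, for instance one adapted to the sample rather than a fixed $p(\langle h,z\rangle-\theta)$. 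A finer tail estimate for the same polynomial does not remove the contribution of that single far atom.

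A smaller point on Step~2: in the H\"older step, put only an $L^{2\ell}$ norm on $a_j-b_j$ and the high $L^{2\ell(J-1)}$ norms on the $b_i$. A balanced $L^{\ell J}$ split takes a $J$-th root of the transition-region mass, where $|a_j-b_j|=\Theta(1)$, and pushes the degree to $(1/\eps)^{\Omega(J)}$. The paper avoids this by using pointwise $O(1)$ bounds outside the tail regions and AM--GM on the tails.
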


A preliminary result, which we use to build polynomial approximations for high-dimensional geometric regions is the following result of~\cite{diakonikolas2010}, which gives a polynomial approximation of the one-dimensional sign function $\sign : \R \to \{-1,1\}$.

\begin{lemma}[Theorem~4.5 and Fact~3.11 of~\cite{diakonikolas2010}]
\label{lem:interval-approx}
    For any $\eta\in(0,1)$, there is a choice of $a = \Theta(\eta / \log(1/\eta))$ such that there exists a univariate polynomial $p$ of degree $k=\Theta(\log^2(1/\eta)/\eta)$ satisfying: \begin{itemize}
    \item \textbf{Interval:} $p(t) \in [\sign(t)-\eta,\, \sign(t)+\eta]$ for $|t|\in[a,1]$.
    \item \textbf{Transition:} $p(t) \in [-(1+\eta),\,1+\eta]$ for $|t| \le a$.
    \item \textbf{Tail:} $|p(t)| \le 2^{k+1}\cdot |t|^k$ for $|t| \ge 1$.
\end{itemize}
\end{lemma}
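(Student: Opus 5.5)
The approach is to build $p$ as an odd, monotone ``smoothed step'': integrate a nonnegative even polynomial kernel $K$ that puts almost all of its mass inside $[-a,a]$. Monotonicity then gives the transition bound for free, and Chebyshev's extremal inequality gives the tail bound. Concretely, set
\[ p(t) := \frac{\int_0^t K(s)\,ds}{\int_0^1 K(s)\,ds}. \]
Because $K \geq 0$ is even, $p$ is odd and nondecreasing on $\R$, with $p(1) = 1$ and $p(-1) = -1$.

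\textbf{Step 1 (interval and transition bounds from the kernel).} Monotonicity immediately gives $p(t) \in [-1,1]$ for all $|t| \leq 1$; this already covers the transition region $|t| \le a$. For $t \in [a,1]$ we have
\[ 1 - p(t) = \frac{\int_t^1 K}{\int_0^1 K} \le \frac{\int_a^1 K}{\int_0^a K}, \]
and by oddness the same bound controls $|p(t)+1|$ for $t \in [-1,-a]$. So the interval condition reduces to the single kernel inequality
\[ \int_a^1 K(s)\,ds \le \eta \int_0^a K(s)\,ds. \]

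\textbf{Step 2 (kernel construction; the main obstacle).} The naive kernel $(1-s^2)^m$ needs $m \approx \log(1/\eta)/a^2$ before its mass outside $[-a,a]$ is negligible. That gives degree roughly $\eta^{-2}$, which is too large. Instead I would use a Chebyshev peak polynomial. Let $x(s) := 1 - 2(s^2-a^2)/(1-a^2)$ and define
\[ P(s) := \frac{T_r(x(s))}{T_r(x(0))}, \qquad K := P^2. \]
For $|s|\in[a,1]$ we have $x(s)\in[-1,1]$, hence
\[ |P(s)| \le \frac{1}{T_r(1+\Theta(a^2))} \le e^{-\Theta(ra)}. \]
For $|s|\le a$ we have $x(s)\ge 1$, and since $T_r$ is increasing on $[1,\infty)$, $P(s) \in (0,1]$. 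To lower bound the inside mass, look at $|s|\le a/2$. There, using $T_r(1+y)\approx\cosh(r\sqrt{2y})$, the ratio $P(s) \geq \cosh(c' ra)/\cosh(c''ra)$ for constants $c' < c''$, which is at least $e^{-\Theta(ra)}$ with a strictly smaller constant in the exponent. Consequently $\int_0^a K \ge (a/2)\, e^{-c_1 ra}$ while $\int_a^1 K \le e^{-c_2 ra}$ with $c_2 > c_1$. Choosing $r = \Theta(\log(1/(a\eta))/a)$ then gives the ratio $\le \eta$. The degree of $p$ is $4r+1 = O(\log(1/\eta)/a)$, and with $a = \Theta(\eta/\log(1/\eta))$ this is $k = \Theta(\log^2(1/\eta)/\eta)$, as claimed. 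Verifying these constants carefully (the cosh comparison near $x = 1$) is where I expect the real work. If it proves delicate, the fallback is to quote the Eremenko--Yuditskii bound on best uniform approximation of $\sign$ on $[-1,-a]\cup[a,1]$, which has error $e^{-\Theta(ka)}$. That fallback, however, would lose monotonicity, so the transition bound would then have to be re-proven separately, e.g., via Markov's inequality on $[-a,a]$.

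\textbf{Step 3 (tail).} By Steps 1--2, $|p| \le 1 \le 2$ on $[-1,1]$. The Chebyshev extremal property states that any degree-$k$ polynomial bounded by $M$ on $[-1,1]$ satisfies $|p(t)| \le M|T_k(t)|$ for $|t| \ge 1$. Combined with $|T_k(t)| \le (|t|+\sqrt{t^2-1})^k \le (2|t|)^k$, this yields $|p(t)| \le 2^{k+1}|t|^k$ for $|t|\ge1$, which is the tail condition. Choosing the degree parity or padding by one if needed is harmless.
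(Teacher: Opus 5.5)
Your argument is correct, but it follows a genuinely different route from the paper. The paper gives no proof of its own and simply imports the lemma from \cite{diakonikolas2010}. There, the polynomial comes from sharp (Eremenko--Yuditskii-type) estimates for the best uniform approximation of $\sign$ on $[-1,-a]\cup[a,1]$, so the behaviour on $[-a,a]$ needs a separate argument. The tail is the same Chebyshev growth bound as your Step~3; the form $2\cdot(2|t|)^k$ reflects a bound of $2$ on $[-1,1]$.

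Your construction instead builds monotonicity in by integrating a nonnegative even kernel. This makes the transition condition free: you get the stronger $|p|\le 1$ on all of $[-1,1]$, and hence the tail with $2^k$ instead of $2^{k+1}$. The interval condition then reduces to a single elementary mass-ratio estimate. You lose only constant factors in the degree, which the $\Theta(\cdot)$ statement absorbs. You also still get the relation $a=\Theta(\log(1/\eta)/k)$ that Lemma~\ref{lem:region-approx} actually uses. The cited route buys the optimal error--degree tradeoff, at the cost of nontrivial approximation theory.

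Your Step~2 is also easier than you expect. Since $T_r(\cosh\theta)=\cosh(r\theta)$ exactly, the normalization $T_r(x(0))$ cancels. Writing $x(a/2)=\cosh\theta_1$,
\[
\frac{\int_a^1 K}{\int_0^a K}\;\le\;\frac{(1-a)\max_{[a,1]}K}{(a/2)\min_{[0,a/2]}K}\;\le\;\frac{2}{a}\cdot\frac{1}{\cosh^2(r\theta_1)}\;\le\;\frac{8}{a}\,e^{-2ra},
\]
using $x(a/2)\ge 1+\tfrac32 a^2\ge\cosh a$ for $a\le 1$. So $r=\lceil \log(8/(a\eta))/(2a)\rceil$ works, and no comparison of two $\cosh$ rates is needed. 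Like the statement itself, this tacitly takes $\eta$ bounded away from $1$ (e.g.\ $\eta\le 1/2$) so that $a<1$.
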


Using Lemma~\ref{lem:interval-approx}, we show that if a distribution $\calM$ over $\R^d$ satisfies anticoncentration and subgaussianity, then every halfspace can be approximated with a low-degree polynomial. Additionally, it will be helpful to explicitly identify the ``interval'', ``transition'', and ``tail'' regions in approximating any single halfspace and specify the error guarantees in each region.

\begin{lemma}
\label{lem:region-approx}
Fix a dimensionality $d \in \N$. Suppose that:
\begin{itemize}
\item $\calM$ is a distribution over $\R^d$ that is $(L, \Delta)$-anticoncentrated and $(B, r)$-subgaussian, for some parameters $L, \Delta, B \in \R_{\geq 0}$, and $r \in \N$. 
\item $f \colon \R^{d} \to \{-1,1\}$ is a halfspace, and $\eps \in (0,1)$ and $q \in \N$ are desired parameters.
\end{itemize}
Then, for $k = \tilde{O}( \max\{ q L^2 B^2 / \eps^2 , 1/\eps\} )$ and as long as $r \geq 4qk$, there exists a degree-$k$ polynomial $A \colon \R^d \to \R$ and a partition $\R^d = \Rint \sqcup\Rtrans\sqcup\Rtail$ such that, for all $\ell\in\{1,2,4\}$: \begin{itemize}
        \item \textbf{Interval:} For all $z\in\Rint$, it holds that $|A(z)-f(z)|^\ell = O(\eps)$.
        \item \textbf{Transition:} For all $z\in\Rtrans$, it holds that $|A(z)-f(z)|^\ell \le O(1)$, and $\Prx_{\bz\sim\calM}[\bz\in\Rtrans] = O(\eps+\Delta)$.
        \item \textbf{Tail:} $\Ex_{\bz\sim\calM}[|A(\bz)-f(z)|^{\ell \tilde{q}} \cdot \one\{\bz\in\Rtail\}] = O(\eps)$ for all $1 \leq \tilde{q} \leq q$.
    \end{itemize}
Moreover, the construction of $A$ depends only on the halfspace $f$ and the parameters $B, q, \eps$.
\end{lemma}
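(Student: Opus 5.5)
The plan is to compose the halfspace $f(z)=\sign(\ip{h}{z}-\theta)$, with $h$ a unit vector, with a rescaled copy of the univariate polynomial $p$ from Lemma~\ref{lem:interval-approx}. Concretely, set $A(z):=p\big((\ip{h}{z}-\theta)/\sigma\big)$ for a scale $\sigma$ to be chosen, with accuracy parameter $\eta=\Theta(\eps/(L\sigma))$ or $\eta=\Theta(\eps)$, whichever is smaller. Write $t(z)=(\ip{h}{z}-\theta)/\sigma$. The partition then follows the three regimes of Lemma~\ref{lem:interval-approx}:
\[
\Rint=\{z: a\le |t(z)|\le 1\},\qquad \Rtrans=\{z:|t(z)|<a\},\qquad \Rtail=\{z:|t(z)|>1\}.
\]
The interval bound is immediate: $|A-f|\le\eta\le\eps$ there, so every power $\ell\le 4$ is $O(\eps)$. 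On the transition region, $|A-f|\le 2+\eta$, so the $\ell$-th power is $O(1)$. Its mass is $\Pr[|\ip{h}{\bz}-\theta|\le a\sigma]\le La\sigma+\Delta$ by $(L,\Delta)$-anticoncentration, provided $a\sigma\le 1$. With $a=\Theta(\eta/\log(1/\eta))$, choosing $\eta\lesssim \eps/(L\sigma)$ makes this $O(\eps+\Delta)$.

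The main obstacle is the tail, where I must balance $\sigma$ against the degree. On $\Rtail$ we have $|A(z)-f(z)|\le 1+2^{k+1}|t|^k\le 2^{k+2}|t|^k$, so
\[
\Ex\big[|A-f|^{\ell\tilde q}\one\{\Rtail\}\big]\le 2^{(k+2)\ell\tilde q}\,\Ex\big[|t|^{k\ell\tilde q}\one\{|t|>1\}\big].
\]
I would apply Cauchy--Schwarz to split off $\Pr[|t|>1]^{1/2}$ and bound $\Ex[|t|^{2k\ell\tilde q}]^{1/2}$ using $(B,r)$-subgaussianity. This needs $2k\ell\tilde q\le 8qk\le r$; the stated hypothesis $r\ge 4qk$ suggests I should instead use Hölder more carefully, or route through the $\ell=4$ bound, so that only moments of order $4qk$ appear. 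The $\theta$ shift is harmless: if $|\theta|$ exceeds about $2B\sqrt{\log(1/\eps)}$, $f$ is essentially constant on $\calM$ and the constant polynomial works. Otherwise the moments give $\Ex[|t|^{p}]^{1/p}\lesssim (B\sqrt p+|\theta|)/\sigma$.

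Choosing $\sigma=\Theta(B\sqrt{qk})\cdot\Theta(1)$ large, say $\sigma=c\,B\sqrt{qk}$ with $c$ a big constant, makes $(B\sqrt{4qk}+|\theta|)\cdot 2/\sigma\le 1/4$. Then the moment factor decays like $4^{-k\ell\tilde q}$, which beats the $2^{(k+2)\ell\tilde q}$ prefactor. Combined with the tail probability, which is tiny by Markov on high moments, this gives $O(\eps)$ once $k\gtrsim\log(1/\eps)$. Substituting $\sigma\approx B\sqrt{qk}$ into $\eta\approx\eps/(L\sigma)$ and $k=\Theta(\log^2(1/\eta)/\eta)$ yields the self-consistent condition $k\approx L B\sqrt{qk}\,\polylog/\eps$, that is, $k=\tilde O(qL^2B^2/\eps^2)$. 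Together with $k\ge\tilde\Omega(1/\eps)$ from $\eta\le\eps$, this is the claimed degree. The condition $a\sigma\le 1$ holds automatically, because $a\lesssim\eta\lesssim\eps/(L\sigma)$.

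Finally, $A$ depends only on $h,\theta,B,q,\eps$, and on $L$ through $\eta$, matching the ``moreover'' clause (or $L$ is absorbed as a constant). The delicate bookkeeping I would check last is the constant in $\sigma$ versus the $2^{k+1}$ tail growth, and the exact moment order relative to $r\ge 4qk$.
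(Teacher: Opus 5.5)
Your plan follows the paper's proof in essentially every respect. You compose the univariate approximator $p$ with the projection rescaled by $\sigma=\Theta(B\sqrt{qk})$ (the paper takes $R=32B\sqrt{qk}$) and partition by $|t(z)|$ into the same three regions. You bound the transition mass by anticoncentration, control the tail with subgaussian moments of order $k\ell\tilde q\le 4qk\le r$, and fall back to the constant polynomial $\sign(-\theta)$ when $|\theta|$ is large. Your fixed-point computation for $\eta$ amounts, up to logarithmic factors, to the paper's direct choice $\eta=\min\{\eps^2/(qL^2B^2),\eps\}$. Fixing $\eta$ up front like that is the cleaner way to break the circularity among $\eta$, $k$ and $\sigma$.

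Two corrections are needed.

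First, the Cauchy--Schwarz step is unnecessary. Drop the indicator and bound $\E[|t(\bz)|^{k\ell\tilde q}]\le \bigl((B\sqrt{4qk}+|\theta|)/\sigma\bigr)^{k\ell\tilde q}$ directly, as your last paragraph in fact does. This uses only moments of order at most $4qk$, so $r\ge 4qk$ suffices.

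Second, and more substantively, the cutoff $|\theta|\gtrsim 2B\sqrt{\log(1/\eps)}$ for the constant-polynomial case is too low. In that case the tail term is exactly $2^{\ell\tilde q}\Pr[f(\bz)\neq\sign(-\theta)]$. Since $\ell\tilde q$ can be as large as $4q$, you need the misclassified mass to be at most $\eps\,2^{-4q}$. A subgaussian tail bound at a $\Theta(B\sqrt{\log(1/\eps)})$ cutoff gives only a $q$-independent power of $\eps$, which cannot absorb $2^{4q}$ once $q$ is large relative to $\log(1/\eps)$; the lemma is stated for arbitrary $q$.

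The fix is to place the split at $|\theta|=2B\sqrt{qk}$, as the paper does. Markov on the $qk$-th moment then bounds the misclassified mass by $2^{-qk}$, which absorbs $2^{4q}$ once $k\gtrsim\log(1/\eps)$. Your small-$|\theta|$ tail analysis still works for all $|\theta|\le 2B\sqrt{qk}$ provided $\sigma\ge 32B\sqrt{qk}$.
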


\begin{proof}
    Let $\calM$ be a distribution over $\R^d$ and $f(z) = \sign(\ip{h}{z}-\theta)$ a halfspace. We invoke Lemma~\ref{lem:interval-approx} with parameter $\eta = \min\{\eps^2 / (qL^2B^2), \eps\}$ to obtain a parameter $a = \Theta(\eta / \log(1/\eta))$ and polynomial $p$ of degree 
    \[k = O(\log(1/\eta)^2 / \eta) = \tilde{O}(\max\{qL^2B^2/\eps^2, 1/\eps\}).\] 
 We define a scale parameter $R = 32B\sqrt{qk}$. We consider two cases, based on whether $|\theta|$ is larger or smaller than $R/16$. 
 
 The easier case occurs when $|\theta|\ge R/16$; in this case, we ignore $p$ and show that the degree-0 polynomial $A(z) = \sign(-\theta)$ satisfies the necessary properties. We partition $\R^d$ by letting $\Rint = \{z \in \R^d : f(z) = \sign(-\theta)\}$, $\Rtrans = \emptyset$, and $\Rtail = \{z \in \R^d : f(z) \neq \sign(-\theta)\}$, and we verify that the three conditions hold. The interval property holds, because for all $z \in \Rint$, we have $|A(z) - f(z)| = 0$. The transition property holds vacuously since it is empty. For the tail property, note that $z \in \Rtail$ implies $|\ip{h}{z}| \ge |\theta| \ge R/16$. We use the upper bound on the $qk$-th moment from $(B, r)$-subgaussianity of $\calM$ and apply Markov's inequality, 
\[
    \Prx_{\bz\sim\calM}[\bz \in \Rtail] \le \Prx_{\bz\sim\calM}\left[|\ip{h}{\bz}| \ge \frac{R}{16}\right] \le \frac{\Ex_{\bz\sim\calM}[|\ip{h}{\bz}|^{qk}]}{(R/16)^{qk}} \le \left(\frac{16 B\sqrt{qk}}{R}\right)^{qk} \leq \frac{1}{2^{qk}},
\]
so using the fact $|A(z)-f(z)| \le 2$ as both $f$ and $A$ are functions in $\{-1,1\}$, we conclude
\[
    \Ex_{\bz\sim\calM}[|A(\bz)-f(z)|^{\ell \tilde{q}} \cdot \one\{\bz\in\Rtail\}] \leq 2^{\ell \tilde{q}} \cdot \Prx_{\bz\sim\calM}[\bz \in \Rtail] \le \frac{1}{2^{q(k - \ell)}} \le O(\eps)
\]
for $k$ being a sufficiently large constant factor of $\Omega(\log(1/\eps))$, which is always larger than $\ell$.

Next, we handle the case that $|\theta|\le R/16$. We define $t(z) = \frac{\ip{h}{z}-\theta}{R}$ and a degree-$k$ polynomial $A$ given by \[A(z) = p\del{t(z)}.\] We partition $\R^d$ as follows: \begin{itemize}
    \item Let $\Rint = \{z : |t(z)| \in [a,1]\}$. By construction and Lemma~\ref{lem:interval-approx}, $|A(z)-f(z)|^\ell\le\eta^\ell\le \eta = O(\eps)$.
    \item Let $\Rtrans = \{z : |t(z)| < a\}$. By construction and Lemma~\ref{lem:interval-approx} once more, $|A(z)-f(z)|^{\ell} \le (2+\eta)^{\ell} \le 3^4$ for all $z\in\Rtrans$. By $(L,\Delta)$-anticoncentration of $\calM$, $\Prx_{\bz\sim\calM}[\bz\in\Rtrans]\le aRL+\Delta$. Since $a = \Theta(\log(1/\eta) / k)$, we have \[aRL = O\del{\frac{\log(1/\eta)}{k} \cdot B\sqrt{qk} \cdot L} = O\del{\frac{BL\sqrt{q} \cdot \log(1/\eta)}{\sqrt{k}}} = O(\eps)\] by large enough choice of $k = \tilde{O}(qL^2B^2/\eps^2)$. Thus, $\Prx_{\bz\sim\calM}[\bz\in\Rtrans] = O(\eps+\Delta)$.
    \item Let $\Rtail = \{z : |t(z)| > 1\}$. By construction and Lemma~\ref{lem:interval-approx}, for all $z\in\Rtail$, $|A(z)|\le 2^{k+1}|t(z)|^k$ and thus $|A(z)|^{\ell \tilde{q}} \le 2^{\ell q(k+1)}|t(z)|^{\ell qk}$. To bound the tail expectation, we use the triangle inequality on the $\ell qk$-norm (losing a factor of $2^{\ell qk}$) and $(B, 4qk)$-subgaussianity of $\calM$: 
    \begin{align*}
    \Ex_{\bz\sim\calM}[|t(\bz)|^{\ell qk}] &\le  (2/R)^{\ell q k} \Ex_{\bz\sim\calM}[|\ip{h}{\bz}|^{\ell qk}] + |2\theta/R|^{\ell qk} \le \left(\dfrac{2 B \sqrt{\ell q k}}{R} \right)^{\ell qk} + \left(\frac{2\theta}{R}\right)^{\ell q k},    
\end{align*}   
Using the fact $f(\bz) \in \{-1,1\}$ and once more the triangle inequality on the $\ell q$-norm,
 \begin{align*}
 \Ex_{\bz\sim\calM}[|A(\bz) - f(\bz)|^{\ell \tilde{q}}\cdot\one\{\bz\in\Rtail\}] &\le 2^{\ell q} \left( \Ex_{\bz\sim \calM} [ | A(\bz)|^{\ell \tilde{q}} \cdot \ind\{ \bz \in \Rtail\} ] + \Prx_{\bz \sim \calM}[ \bz \in \Rtail] \right) \\
 	&\leq 2^{\ell q} \cdot  (2^{\ell q(k+1)} + 1) \cdot \Ex_{\bz \sim \calM}\left[ |t(\bz)|^{\ell q k} \right],
\end{align*}
where, we substituted the upper bound on $|A(\bz|^{\ell \tilde{q}}$, and used Markov's inequality to similarly upper bound the probability $\bz \in \Rtail$ by the $\ell qk$-th moment of $|t(\bz)|$. Substituting the $\ell qk$-th moment of $|t(\bz)|$, we obtain the upper bound
\[ \left( \dfrac{8 B \sqrt{\ell qk}}{R}\right)^{\ell qk} + \left( \frac{8 \theta}{R}\right)^{\ell q k} \leq O(\eps)\]
whenever $\ell \leq 4$, $R \geq 32 B \sqrt{qk}$, $R \geq 16 \theta$ and $k$ is a sufficiently large constant factor of $\log(1/\eps)$.
\end{itemize}

This completes the proof.
\end{proof}

\subsection{Polynomial Approximation Statements}\label{subsec:poly-approx}

In the upcoming lemmas, we use Lemma~\ref{lem:region-approx} in order to construct low-degree polynomial approximations of halfspaces, intersection of halfspaces and disjoint unions of intersections of halfspaces. We will consider distributions $\calM$ over $\R^d$ which are $(L, \Delta)$-anticoncentrated and $(B, r)$-subgaussian, for parameters $L, B, \Delta \ge 0$ and $r \in \N$. Conceptually, it is useful to distinguish the roles of these parameters as follows:
\begin{itemize}
\item $L$ and $B$ capture the anticoncentration and subgaussianity of the underlying distribution, respectively. They should be viewed as fixed values.
\item $\Delta \ge 0$ and $r \in \N$ capture the extent to which a \emph{discrete} distribution approximates anticoncentrated and subgaussianity, respectively. $\Delta$ should be viewed as arbitrarily small and $r$ arbitrarily large.
\end{itemize}

\begin{lemma}
Let $f \colon \R^d \to \{-1,1\}$ be a halfspace and $\eps \in (0, 1)$. For $k = \tilde{O}(\max\{ L^2 B^2/\eps^2, 1/\eps\})$, and as long as $\Delta \leq \eps$ and $r \geq 4k$, there exists a degree-$k$ polynomial $A \colon \R^d \to \R$ such that for $\ell \in \{1, 2, 4\}$: 
\[\Ex_{\bz\sim\calM}[|A(\bz)-f(\bz)|^\ell] = O(\eps).\] 
Moreover, $A$ depends only on the halfspace $f$ and the parameters $B$ and $\eps$.
\end{lemma}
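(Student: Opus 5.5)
This follows directly from Lemma~\ref{lem:region-approx} applied with $q = 1$. That choice gives degree $k = \tilde{O}(\max\{L^2B^2/\eps^2, 1/\eps\})$ and needs only $r \geq 4k$, which matches the hypotheses here. We obtain a degree-$k$ polynomial $A$ and a partition $\R^d = \Rint \sqcup \Rtrans \sqcup \Rtail$. The plan is to split $\Ex_{\bz\sim\calM}[|A(\bz)-f(\bz)|^\ell]$ over the three regions, bound each contribution by $O(\eps)$, and add them.

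For fixed $\ell \in \{1,2,4\}$, the three regions are handled as follows.
\begin{itemize}
\item On $\Rint$ we have the pointwise bound $|A(z)-f(z)|^\ell = O(\eps)$. Since the region has probability at most $1$, its contribution is $O(\eps)$.
\item On $\Rtrans$ we have the pointwise bound $|A(z)-f(z)|^\ell \le O(1)$, and $\Prx_{\bz\sim\calM}[\bz\in\Rtrans] = O(\eps+\Delta)$. With the hypothesis $\Delta \le \eps$, the contribution is $O(\eps)$.
\item On $\Rtail$ the tail guarantee with $\tilde q = 1$ (allowed since $q=1$) gives $\Ex_{\bz\sim\calM}[|A(\bz)-f(\bz)|^{\ell}\one\{\bz\in\Rtail\}] = O(\eps)$ directly.
\end{itemize}
Summing gives $\Ex_{\bz\sim\calM}[|A(\bz)-f(\bz)|^\ell] = O(\eps)$ for each $\ell\in\{1,2,4\}$.

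The polynomial $A$ is the same for all three values of $\ell$, because Lemma~\ref{lem:region-approx} produces one $A$ whose guarantees hold simultaneously for all $\ell\in\{1,2,4\}$. The ``moreover'' clause is inherited: Lemma~\ref{lem:region-approx} states that $A$ depends only on $f$ and on $B, q, \eps$, and with $q=1$ fixed this becomes dependence only on $f$, $B$, and $\eps$. Neither $\calM$ nor $L$ enters the construction beyond the choice of $k$. I see no real obstacle; the only point requiring care is confirming that the degree and the moment requirement $r \ge 4qk$ specialize correctly at $q = 1$.
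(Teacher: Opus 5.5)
Your proposal is correct and follows the paper's proof essentially step for step. Like the paper, you invoke Lemma~\ref{lem:region-approx} with $q=1$ and split the expectation over the interval, transition, and tail regions. You bound the first two via the pointwise guarantees (using $\Delta \le \eps$ for the transition mass) and the third via the tail guarantee with $\tilde q = 1$, and the ``moreover'' clause is inherited from that lemma.
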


\begin{proof}
    We invoke Lemma~\ref{lem:region-approx} with parameter $q=1$ to obtain a polynomial $A$ of degree $\Ot(\max\{L^2B^2/\eps^2,1/\eps\})$ and a partition of $\R^d$ into disjoint regions $\Rint, \Rtrans, \Rtail$ satisfying the stated properties. We bound the approximation error by decomposing it over these three regions:
    \[
        \Ex_{\bz\sim\calM}[|A(\bz)-f(\bz)|^\ell] = \sum_{\calR \in \{\Rint, \Rtrans, \Rtail\}} \Ex_{\bz\sim\calM}\left[|A(\bz)-f(\bz)|^\ell \cdot \one\{\bz \in \calR\}\right]. \tag{$\ast$}
    \]
We bound each contribution using Lemma~\ref{lem:region-approx} as follows. In the interval region, all $z\in\Rint$ satisfy $|A(z)-f(z)|^\ell = O(\eps)$ for all $z \in \Rint$, so the contribution to $(\ast)$ is at most $O(\eps)$. In the transition region, all $z\in\Rtrans$ satisfy $|A(z) - f(z)|^\ell \leq O(1)$ and $\Prx_{\bz\sim\calM}[\bz \in \Rtrans] = O(\eps)$ (since $\Delta \leq \eps$). Thus, the contribution to ($\ast$) is $O(\eps)$. Finally, the tail region satisfies $\Ex_{\bz\sim\calM}[|A(\bz)-f(\bz)|^\ell \cdot \one\{\bz\in\Rtail\}] = O(\eps)$, meaning the contribution to ($\ast$) from this region is $O(\eps)$. Summing the bounds over the three regions yields $\Ex_{\bz\sim\calM}[|A(\bz)-f(\bz)|^\ell] = O(\eps)$ for all $\ell \in \{1,2,4\}$. Finally, Lemma~\ref{lem:region-approx} ensures that the construction of $A$ depends only on $f$, $B$, and $\eps$, concluding the proof.
\end{proof}

\begin{lemma}
\label{lem:intersect-approx}
Let $f \colon \R^d \to \{-1,1\}$ be an intersection of $J$ halfspaces $f_1, \dots, f_J \colon \R^d \to \{-1,1\}$. That is, 
    \[ f(z) :=  \del{2\prod_{j=1}^J \frac{f_j(z)+1}{2}} - 1 = \begin{cases}
        1 & \text{if } f_j(z) = 1  \text{ for all $j\in[J]$}\\
        -1 & \text{otherwise} 
    \end{cases}, \] 
    and $\eps \in (0, 1)$. Then, for a sufficiently large constant $c > 1$ and $k = \tilde{O}(\max\{J^2 L^2B^2 c^{2J}/\eps^2, 1/\eps \})$, as long as $\Delta \leq \eps / c^{J}$ and $r \geq 4J k$, there exists a degree-$k$ polynomial $A \colon \R^d \to \R$ such that for all $\ell \in \{1, 2, 4\}$:
    \[ \Ex_{\bz\sim\calM}[|A(\bz)-f(\bz)|^\ell] = O(\eps). \]
    Moreover, $A$ depends only on the function $f$ and the parameters $B$, $J$, and $\eps$.
\end{lemma}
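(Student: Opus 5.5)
We approximate each $f_j$ separately and take the product. For each $j\in[J]$, invoke Lemma~\ref{lem:region-approx} on $f_j$ with accuracy parameter $\eps' := \eps/c^J$ and moment parameter $q := J$. This gives a degree-$k_0$ polynomial $A_j$ with $k_0 = \tilde O(\max\{J L^2B^2 c^{2J}/\eps^2, c^J/\eps\})$, together with a partition $\R^d = \Rint_j\sqcup\Rtrans_j\sqcup\Rtail_j$. Then define
\[
A(z) := 2\prod_{j=1}^J \frac{A_j(z)+1}{2} - 1 ,
\]
which has degree $Jk_0$, matching the stated $k$. The hypothesis $r\ge 4Jk$ ensures $r \geq 4qk_0$, as Lemma~\ref{lem:region-approx} requires. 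The construction of $A$ depends only on the $f_j$, $B$, $J$, $\eps$.

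\textbf{Telescoping.} Write $a_j := (A_j+1)/2$ and $b_j := (f_j+1)/2\in\{0,1\}$, so that $|a_j-b_j| = |A_j-f_j|/2$. Then
\[
\prod_j a_j - \prod_j b_j = \sum_{j}\Big(\prod_{i<j} a_i\Big)(a_j-b_j)\Big(\prod_{i>j} b_i\Big),
\]
and hence $|A-f| \le \sum_j |A_j-f_j|\prod_{i<j}|a_i|$, using $|b_i|\le 1$. Define the \emph{good set} $G := \bigcap_j(\Rint_j\cup\Rtrans_j)$. On $G$ every $|a_i|\le 1+\eta \le 2$ by the interval and transition bounds of Lemma~\ref{lem:interval-approx}, so $|A-f|\le 2^J\sum_j|A_j-f_j|$ there. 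By convexity, $|A-f|^\ell \le J^{\ell-1}2^{J\ell}\sum_j|A_j-f_j|^\ell$ on $G$. Bounding each $\E[|A_j-f_j|^\ell\one_G]$ by $O(\eps') + O(1)\cdot O(\eps'+\Delta)$ gives a total contribution of $J^4 2^{4J}\cdot O(\eps/c^J + \Delta) = O(\eps)$, provided $c\ge 32$ absorbs the $J^4 2^{4J}$ factor and $\Delta\le\eps/c^J$.

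\textbf{The tail region (main obstacle).} On $G^c$, some $z\in\Rtail_{j_0}$, and the factors $|a_i|$ are no longer bounded. Here I would bound $|A-f|^\ell \le 2^\ell\big(1+\prod_i|a_i|^\ell\big)$ and handle the product via generalized H\"older with $J$ exponents:
\[
\E\Big[\prod_i |a_i|^\ell\,\one_{\Rtail_{j_0}}\Big] \le \prod_i \E\big[|a_i|^{\ell J}\one_{\Rtail_{j_0}}\big]^{1/J}.
\]
For $i=j_0$, the factor is at most $O(1)\cdot\big(\E[|A_{j_0}-f_{j_0}|^{\ell J}\one_{\Rtail}]+\Pr[\Rtail]\big)^{1/J}$. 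This is where the tail guarantee with $\tilde q = J = q$ is used, together with $\Pr[\Rtail_{j_0}]$, which is also $O(\eps')$ via the Markov bound inside the lemma's proof. For $i\ne j_0$, the factor $\E[|a_i|^{\ell J}]^{1/J}$ is $O(1)$: split over the regions of $i$, using the $\tilde q=J$ tail bound on $\Rtail_i$ and boundedness elsewhere.

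\textbf{Difficulty with the exponents.} The product bound is then only $O(\eps')^{1/J}\cdot O(1)^J$, which loses the $1/J$ power. To avoid this, I would apply Lemma~\ref{lem:region-approx} with the smaller accuracy $\eps'' = (\eps/c^J)^J$. The cost is that $k$ becomes polynomial in $\eps^{-J}$, which is worse than the stated bound.

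The better fix, which I would try first, is Cauchy--Schwarz applied with the indicator instead of H\"older on the product. Bound $\E[\prod_i|a_i|^\ell\one_{\Rtail_{j_0}}]$ by
\[
\E\Big[\prod_i|a_i|^{2\ell}\Big]^{1/2}\Pr[\Rtail_{j_0}]^{1/2}.
\]
The tail probability is $2^{-\Omega(qk_0)}$, which is tiny. The first factor is bounded by $2^{O(J)}$ using H\"older and the $\tilde q$-moments, with $q=2J$ (absorbed into the constants). Because $\Pr[\Rtail_{j_0}]$ is exponentially small in $k_0\ge\log(c^J/\eps)$, this term is $O(\eps/J)$. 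Summing over $j_0\in[J]$ then yields $O(\eps)$.
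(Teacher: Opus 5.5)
Your final argument (the Cauchy--Schwarz version) is correct, and it takes a different route from the paper in the tail region; on the good set $G=\bigcap_j(\Rint_j\cup\Rtrans_j)$ it matches the paper. There, $G$ is exactly the paper's $\Rint\cup\Rtrans$, and your telescoping sum plays the role of its subset expansion.

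\textbf{The paper's tail argument.} It expands $\prod_j(A_j+1)-\prod_j(f_j+1)$ over nonempty $T\subseteq[J]$ and bounds the factors with $z\notin\Rtail_j$ by $O(1)$. Let $K\subseteq T$ be the indices with $z\in\Rtail_j$. AM--GM replaces $\prod_{j\in K}|A_j-f_j|^\ell$ by $\sum_{j\in K}|A_j-f_j|^{\ell|K|}$. The tail-moment guarantee of Lemma~\ref{lem:region-approx} with $\tilde q=|K|\le q=J$ then gives $O(\eps/c^J)$ per term, with no loss of power.

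\textbf{Your tail argument.} You use the crude bound $|A-f|^\ell=O(1+\prod_i|a_i|^\ell)$, peel off $\Pr[\Rtail_{j_0}]^{1/2}$ by Cauchy--Schwarz, and bound $\E[\prod_i|a_i|^{2\ell}]\le 2^{O(J)}$ by H\"older. This works, but it has two costs you should state explicitly.
\begin{itemize}
\item The square-root loss is harmless only because $\Pr[\Rtail_{j_0}]\le 2^{-\Omega(qk_0)}$. This is not a stated conclusion of Lemma~\ref{lem:region-approx}. You must re-derive it from the $qk_0$-th moment bound via Markov, exactly as in that lemma's proof. A bound of only $O(\eps/c^J)$ would give $2^{O(J)}(\eps/c^J)^{1/2}=O(\sqrt{\eps})$, not $O(\eps)$.
\item H\"older needs moments of $|A_i|$ of order $2\ell J\le 8J$. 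So you must invoke Lemma~\ref{lem:region-approx} with $q=2J$ from the outset, not $q=J$. This only changes constants in $k_0$. The requirement $r\ge 4qk_0=8Jk_0$ follows from $r\ge 4Jk=4J^2k_0$ when $J\ge 2$. For $J=1$ you have $A=A_1$ and there is nothing to prove.
\end{itemize}

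In short, your route avoids the subset/AM--GM bookkeeping. The price is one extra (easily proved) property of the single-halfspace approximation and a doubled $q$.

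Incidentally, the paper's AM--GM step does not cover the terms with $K=\emptyset$: there the product is $1$ while the sum is $0$, so they contribute $O(1)^{\ell|T|}\one\{z\in\Rtail\}$. These terms need either the same small-$\Pr[\Rtail]$ fact you use or a finer appeal to the interval/transition bounds. So your explicit use of the tail probability is not wasted.
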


\begin{proof}
    Let $c>1$ be a large enough constant, and let $\eps' = \eps / c^J$. For each halfspace $f_j$, we invoke Lemma~\ref{lem:region-approx} with parameter $q=J$ and error $\eps'$, obtaining a polynomial $A_j$ of degree $k_j = \Ot(\max\{JL^2B^2/(\eps')^2,1/\eps\}) = \Ot(\max\{JL^2B^2c^{2J}/\eps^2, 1/\eps\})$ and a partition of $\R^d$ into regions $\Rint_j,\Rtrans_j,\Rtail_j$. We define a polynomial $A$ of degree $\sum_{j\in[J]} k_j = \Ot(J^2L^2B^2c^{2J}/\eps^2)$ as
\[A(z) := \del{2\prod_{j=1}^J \frac{A_j(z)+1}{2}} - 1,\]
such that we can upper bound $|A(z) - f(z)|$ by
\begin{align*}
2^{1-J} \left| \prod_{j=1}^J (A_j(z) + 1) - \prod_{j=1}^J (f_j(z) + 1) \right| &= 2^{1-J} \left| \prod_{j=1}^J \left(f_j(z) + 1 + (A_j(z) - f_j(z)) \right) - \prod_{j=1}^J (f_j(z) + 1)\right| \\
		&= 2^{1-J} \sum_{T \subset [J] \setminus \{ \emptyset\}} \prod_{j \in T} \left| A_j(z) - f_j(z) \right| \prod_{j \notin T} |f_j(z)+1| \\
		&\leq 2 \sum_{T \subset [J]\setminus \{ \emptyset\}} \prod_{j \in T} |A_j(z) - f_j(z)|,
\end{align*}
where the last line used the fact $|f_j(z) + 1|\leq 2$, and therefore, raising to power $\ell$ (and identifying the maximum $T$ term), we obtain the upper bound 
\begin{align*}
|A(z) - f(z)|^{\ell} \leq 2^{\ell(J + 1)} \sum_{T \subset [J] \setminus \{ \emptyset\}} \prod_{j \in T} |f_j(z) - A_j(z)|^{\ell}.
\end{align*}
Letting $\Rtail = \bigcup_{j=1}^J \Rtail_j$ and $\Rtrans = (\cup_{j=1}^J \Rtrans_j) \setminus \Rtail$ and $\Rint = \bigcap_{j=1}^J \Rint_j$, we consider the three regions. The easier case occurs whenever $z \in \Rint$, since $z \in \Rint_j$ for all $j \in [J]$ and Lemma~\ref{lem:region-approx} implies $|f_j(z) - A_j(z)|^{\ell} \leq \eps'$. Hence,
\begin{align*} 
\Ex_{\bz \sim \calM}\left[ |A(\bz) - f(\bz)|^{\ell} \cdot \ind\{ \bz \in \Rint\} \right] &\leq 2^{\ell(J+1)} \sum_{T \subset [J] \setminus \{\emptyset\}} (\eps')^{|T|} \leq \eps,
\end{align*}
since $\eps'$ is at most $\eps / c^J$ for a sufficiently large constant $c$, and $\ell \leq 4$. Next, if $z \in \Rtrans$, we must have $z \in \Rtrans_j$ for some $j$ while $z \notin \Rtail_{j'}$ for all $j' \in [J]$. Hence, we have
\begin{align*}
\Ex_{\bz \sim \calM}\left[ |A(\bz) - f(\bz)|^{\ell} \cdot \ind\{ \bz \in \Rtrans\} \right] \leq 2^{\ell(J+1)} \cdot 2^J \cdot O(1)^{J} \sum_{j \in [J]} \Prx_{\bz \sim \calM} \left[ \bz \in \Rtrans_j\right] \leq \eps,
\end{align*}
where we first use $|A_{j}(z) - f_j(z)| \leq O(1)$ for $z \in \Rtrans_j$ and the probability $\bz \in \Rtrans_j$ is at most $O(\eps' + \Delta)$, and each is at most $\eps / c^J$. Finally, for $z \in \Rtail$, we use two facts: at least one $j \in [J]$ has $z \in \Rtail_j$, and $|A_{j'}(z) - f_{j'}(z)| \leq O(1)$ whenever $z \notin \Rtail_{j'}$. This gives us the upper bound
\begin{align*}
|A(z) - f(z)|^{\ell} \cdot \ind\{ z \in \Rtail \} &\leq 2^{\ell(J+1)} \sum_{T \subset [J] \setminus \{ \emptyset \}} O(1)^{\ell |T|} \prod_{\substack{j \in T \\ z \in \Rtail_j}} |A_j(z) - f_j(z)|^{\ell}.
\end{align*}
To simplify the above, consider any $z \in \Rtail$ and any $T \subset [J] \setminus \{\emptyset\}$ where $K \subset T$ consists of indices $j$ where $z \in \Rtail_j$. Applying the AM-GM inequality, we can upper bound the product $\prod_{j \in K} |A_j(z) - f_j(z)|^{\ell}$ by $\sum_{j \in K} |A_j(z) - f_j(z)|^{\ell |K|}$, giving
\begin{align*}
|A(z) - f(z)|^{\ell} \cdot \ind\{ z \in \Rtail \} \leq O(1)^{\ell(J+1)} \sum_{\kappa=1}^J \sum_{j=1}^J |A_j(z) - f_j(z)|^{\ell \kappa} \cdot \ind\{ z \in \Rtail_j \} ,
\end{align*}
which implies
\begin{align*}
\Ex_{\bz \sim \calM}\left[ |A(\bz) - f(\bz)|^{\ell} \cdot \ind\{ \bz \in \Rtail \}\right] \leq O(1)^{\ell (J+1)} \sum_{\kappa=1}^J \sum_{j=1}^J \Ex_{\bz \sim \calM}\left[ |A_j(\bz) - f_j(\bz)|^{\ell \kappa} \cdot \ind\{ \bz \in \Rtail_j\}\right].
\end{align*}
This is at most $O(1)^{\ell(J+1)} \cdot J^2 \cdot O(\eps') \leq \eps$ as desired, where we have used the last condition of Lemma~\ref{lem:region-approx}. This completes the proof.
\end{proof}

\begin{lemma}
\label{lem:union-intersect-approx}
Let $f \colon \R^d \to \{-1,1\}$ be a union of $R$ disjoint intersections of $J$ halfspaces. That is, $f^{(1)}, \dots, f^{(R)} \colon \R^d \to \{-1,1\}$ are each intersections of $J$ halfspaces forming disjoint regions, and 
\[ f(z) := \sum_{r=1}^R (f^{(r)}(z) + 1) - 1 = \begin{cases}
        1 & \text{if } f^{(r)}(z) = 1  \text{ for some $r\in[R]$}\\
        -1 & \text{otherwise} 
    \end{cases}, \]
    and $\eps \in (0, 1)$. Then, for a sufficiently large constant $c > 1$ and $k = \tilde{O}(J^2L^2B^2 R^8 c^{2J} / \eps^2)$, as long as $\Delta \leq \eps / (R^4 c^J)$ and $r \geq 4J k$, there exists a polynomial $A$ over $\R^d$ of degree $k$ such that for $\ell \in \{1, 2, 4\}$:
    \[ \Ex_{\bz\sim\calM}[|A(\bz)-f(\bz)|^\ell] = O(\eps). \]
    Moreover, $A$ depends only on the function $f$ and the parameters $B,J,R,\eps$.
\end{lemma}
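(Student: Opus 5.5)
We build $A$ from polynomial approximations of the $R$ pieces, as suggested by the formula defining $f$. For each $r \in [R]$, apply Lemma~\ref{lem:intersect-approx} to the intersection $f^{(r)}$ with target error $\eps' := \eps/R^4$. Since $\Delta \le \eps/(R^4 c^J) = \eps'/c^J$ and $r \ge 4Jk$, the hypotheses hold. This gives a polynomial $A^{(r)}$ of degree $\tilde{O}(J^2L^2B^2c^{2J}/(\eps')^2) = \tilde{O}(J^2L^2B^2R^8c^{2J}/\eps^2)$ satisfying $\Ex_{\bz\sim\calM}[|A^{(r)}(\bz)-f^{(r)}(\bz)|^\ell] = O(\eps')$ for $\ell\in\{1,2,4\}$. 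We then set
\[ A(z) := \sum_{r=1}^R \big(A^{(r)}(z)+1\big) - 1. \]
This is a sum, not a product, so its degree is the maximum of the degrees of the $A^{(r)}$, which matches the claimed $k$. Since the construction of each $A^{(r)}$ depends only on $f^{(r)}, B, J, \eps'$, the dependence claim follows.

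The identity $f = \sum_r (f^{(r)}+1) - 1$ holds by disjointness. It gives $A(z)-f(z) = \sum_{r=1}^R (A^{(r)}(z)-f^{(r)}(z))$ pointwise. For each $\ell\in\{1,2,4\}$, convexity (power mean / H\"older) gives $|\sum_{r} a_r|^\ell \le R^{\ell-1}\sum_r |a_r|^\ell$. Taking expectations,
\[ \Ex_{\bz\sim\calM}[|A(\bz)-f(\bz)|^\ell] \le R^{\ell-1}\sum_{r=1}^R \Ex_{\bz\sim\calM}[|A^{(r)}(\bz)-f^{(r)}(\bz)|^\ell] \le R^{\ell}\cdot O(\eps') \le R^4\cdot O(\eps/R^4) = O(\eps). \]

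The only real obstacle is the bookkeeping: the error inflates by $R^\ell \le R^4$ in the worst case $\ell=4$. This is exactly why the target is $\eps/R^4$, and why the $R^8$ factor appears in $k$ and the $R^4$ factor in the condition on $\Delta$. One should check that the subgaussianity requirement $r \ge 4Jk$ of Lemma~\ref{lem:intersect-approx} refers to the degree of each $A^{(r)}$, which is at most $k$, so the assumed bound suffices. One should also check that the $1/\eps'$ term inside the max of Lemma~\ref{lem:intersect-approx} is dominated by the stated $k$ once $\eps<1$. No interaction between the pieces needs to be controlled, because the approximation error is linear in the pieces.
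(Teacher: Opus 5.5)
Your proposal is correct and follows the paper's proof essentially verbatim: the paper also applies Lemma~\ref{lem:intersect-approx} to each piece with accuracy $\eps' = \eps/R^4$ and sets $A := \sum_{r=1}^R (A^{(r)}+1) - 1$. It then uses the same power-mean bound $|A(z)-f(z)|^\ell \le R^{\ell-1}\sum_r |A^{(r)}(z)-f^{(r)}(z)|^\ell$ to conclude the error is at most $R^\ell \eps' \le \eps$.
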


\begin{proof}
    For each $r \in [R]$, we invoke Lemma~\ref{lem:intersect-approx} with accuracy $\eps' = \eps / R^4$ to obtain a polynomial $A^{(r)}$ of degree $k_r = \Ot(J^2 L^2 B^2 c^{2J} / (\eps')^2) = \Ot(J^2 L^2 B^2 R^8 c^{2J} / \eps^2)$, and we let $A \colon \R^d \to \R$ be the polynomial given by $A(z) := \sum_{r=1}^R (A^{(r)}(z) + 1) - 1$. By the power mean inequality,
 \[ |A(z) - f(z)|^{\ell} \leq R^{\ell-1} \sum_{r=1}^R |A^{(r)}(z) - f^{(r)}(z)|^{\ell}.  \]
 Hence, 
 \begin{align*}
 \Ex_{\bz \sim \calM}\left[ |A(\bz) - f(\bz)|^{\ell} \right] \leq R^{\ell-1} \sum_{r=1}^R \Ex_{\bz \sim \calM}\left[ |A^{(r)}(\bz) - f^{(r)}(\bz)|^{\ell} \right] \leq R^{\ell} \eps ' \leq \eps,
 \end{align*}
 by the guarantees of Lemma~\ref{lem:intersect-approx} and setting of $\eps'$.
\end{proof}

\subsection{Extension to a Finite Sample} \label{sec:extension-finite-sample}

We will formally show that given a sufficiently large random sample from an $(L,0)$-anticoncentrated and $B$-subgaussian distribution, the uniform distribution over these samples will be $(L,\Delta)$-anticoncentrated and $(B,r)$-subgaussian, with high probability. Crucially, with enough samples, $\Delta$ can be made arbitrarily small, and $r$ arbitrarily large.

We begin with a claim bounding the maximum norm of sampled vectors from a subgaussian distribution.

\begin{claim}
\label{claim:subgaussian-max-norm}
    Let $\calM$ be a $B$-subgaussian distribution over $\R^d$. Then, with probability at least $1 - \delta$ over $\calbS\sim\calM^m$, 
    \[
    \max_{\bz \in \calbS} \|\bz\|_2 \le C B \sqrt{d + \log(m/\delta)}
    \]
    for a sufficiently large constant $C > 0$.
\end{claim}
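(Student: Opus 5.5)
The plan is to reduce the bound on $\|\bz\|_2$ to finitely many one-dimensional tail bounds, using a net argument over directions, and then take a union bound over the $m$ samples. Recall that $\|z\|_2 = \sup_{\|h\|_2=1}\ip{h}{z}$. Fix a $1/2$-net $\calN$ of the unit sphere in $\R^d$ with $|\calN| \le 5^d$. For any $z$, if $h^\star$ attains the supremum and $h\in\calN$ satisfies $\|h-h^\star\|_2\le 1/2$, then
\[
\|z\|_2 = \ip{h^\star}{z} \le \ip{h}{z} + \tfrac12\|z\|_2 .
\]
Rearranging gives $\|z\|_2 \le 2\max_{h\in\calN}\ip{h}{z}$. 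It therefore suffices to control $\max_{h \in \calN}|\ip{h}{\bz}|$ simultaneously over all samples $\bz \in \calbS$.

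\textbf{Step 1: a one-directional tail bound.} Fix a unit vector $h$. By $B$-subgaussianity (Definition~\ref{def:concentration-defs}), for every integer $p\ge1$ we have $\Ex[|\ip{h}{\bz}|^p] \le (B\sqrt p)^p$. Markov's inequality applied to the $p$-th moment then gives
\[
\Prx\big[|\ip{h}{\bz}| \ge t\big] \le \left(\frac{B\sqrt p}{t}\right)^p .
\]
Choose $t = eB\sqrt p$, so that this probability is at most $e^{-p}$. Now set $p = \lceil d\ln 5 + \ln(m/\delta)\rceil$. The resulting threshold is $t = O\big(B\sqrt{d+\log(m/\delta)}\big)$, and the failure probability is at most $\delta/(5^d m)$.

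\textbf{Step 2: union bound.} Apply the bound from Step~1 to every pair $(h,\bz)$ with $h\in\calN$ and $\bz\in\calbS$. There are at most $5^d m$ such pairs, so the total failure probability is at most $\delta$. On the complementary event, every sample satisfies
\[
\|\bz\|_2 \le 2t = C B\sqrt{d+\log(m/\delta)}
\]
for a suitable absolute constant $C$, which is exactly the claimed bound.

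The only delicate point is the conversion from moments to tails. Because the definition bounds only integer moments, we must pick a single integer $p$ that scales with $d+\log(m/\delta)$. The ceiling in the choice of $p$ changes the threshold by at most a constant factor, and this is absorbed into $C$. Beyond that, the argument is routine.
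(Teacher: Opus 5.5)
Your proof is correct and takes essentially the same route as the paper. Both arguments use a moment-to-tail bound via Markov's inequality in each fixed direction, then a union bound over a $1/2$-net of the sphere, then a union bound over the $m$ samples. The differences are only cosmetic: you fix the integer moment $p \approx d + \log(m/\delta)$ directly, whereas the paper sets $p = \lfloor t^2/(e^2B^2)\rfloor$; and you use the triangle-inequality net bound $\|z\|_2 \le 2\max_{h}\langle h, z\rangle$ instead of the paper's constant $8/7$. Both changes are absorbed into $C$.
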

\begin{proof}
    Fix any unit vector $h\in\mathbb{S}^{d-1}\subset\R^d$. By $B$-subgaussianity of $\calM$, for all $p\in\N$, we have \[\Ex_{\bz\sim\calM}[|\ip{h}{z}|^p] \le (B\sqrt{p})^p.\] Applying Markov's inequality, this gives \[\Prx_{\bz\sim\calM}[|\ip{h}{\bz}|\ge t] \le \frac{(B\sqrt{p})^p}{t^p}.\] Setting $p = \floor{t^2 / (e^2B^2)}$ yields \[\Prx_{\bz\sim\calM}[|\ip{h}{\bz}| \ge t] \le \exp(-\Omega(t^2/B^2)).\] Now, let $\Gamma\subset\mathbb{S}^{d-1}$ denote a $1/2$-net of the unit sphere, with $|\Gamma|\le \exp(O(d))$. Using that $\ip{u}{v} = 1-\frac{\|u-v\|_2^2}{2}$ for unit vectors $u,v\in\R^d$, the following holds for any $z\in\R^d$, \[\|z\|_2 \le \frac{8}{7}\cdot\max_{h\in\Gamma}|\ip{h}{\bz}|.\] Taking a union bound over $\Gamma$, \[\Prx_{\bz\sim\calM}[\|\bz\|_2 \ge t] \le |\Gamma| \cdot \exp(-\Omega(t^2/B^2)) \le \exp(O(d) - \Omega(t^2/B^2)).\] Taking $t = CB\sqrt{d+\log(m/\delta)}$ for a sufficiently large constant $C$ gives \[\Prx_{\bz\sim\calM}[\|\bz\|_2 \ge t] \le \delta/m.\] Finally, for $\calbS\sim\calM^m$, taking a union bound over the $m$ samples yields \[\Prx_{\calbS\sim\calM^m}[\max_{\bz\in\bS}\|\bz\|_2 \ge t] \le \delta.\] 
\end{proof}

The following is the main lemma, where we show that taking finite samples preserves anticoncentration and subgaussianity.

\begin{lemma}
\label{lem:finite-sample-extension}
    Let $\calM$ be a distribution over $\R^d$ that is $(L,0)$-anticoncentrated and $B$-subgaussian. For $r\in\N$, $\delta>0$, and $\Delta > 0$, let $\calbS \sim\calM^m$ where 
    \[
    m \ge \max\left\{ C \cdot  \frac{d + \log(1/\delta)}{\Delta^2},\, d \cdot \left(\frac{2B^2d\log(1/\delta)}{r}\right)^r \right\}
    \]
    for a sufficiently large constant $C > 0$. Then, with probability at least $1-\delta$ over the draw of $\calbS$, the empirical distribution $\Unif(\calbS)$ is $(L,\Delta)$-anticoncentrated and $(2B,r)$-subgaussian.
\end{lemma}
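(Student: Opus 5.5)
We prove the two properties separately and finish with a union bound, giving each failure probability $\delta/2$.

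\textbf{Anticoncentration.} We use uniform convergence over slabs. For a unit $h$, $t\in\R$ and $r\in[0,1]$, the set $\{z : |\langle h,z\rangle - t|\le r\}$ is a slab. By Fact~\ref{fact:vc-slabs} the class $\mathrm{Slab}_1$ has VC dimension $O(d)$. Lemma~\ref{lem:vc-samples} then applies with $m \ge C(d+\log(1/\delta))/\Delta^2$. With probability $1-\delta/2$, every slab has empirical mass within $\Delta$ of its mass under $\calM$. Since $\calM$ is $(L,0)$-anticoncentrated, the latter mass is at most $Lr$. So the empirical mass is at most $Lr+\Delta$ simultaneously for all $h,t,r$, which is exactly $(L,\Delta)$-anticoncentration of $\Unif(\calbS)$.

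\textbf{Subgaussianity.} We must bound $\frac1m\sum_{\bz\in\calbS}|\langle h,\bz\rangle|^p$ by $(2B\sqrt p)^p$ for all unit $h$ and all $p\le r$ at once. The plan is a truncation argument combined with a net.

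1. Apply Claim~\ref{claim:subgaussian-max-norm}. With probability $1-\delta/4$, every sample satisfies $\|\bz\|_2\le M:=CB\sqrt{d+\log(m/\delta)}$, so $|\langle h,\bz\rangle|^p\le M^p$ for all unit $h$.

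2. Fix $h$ and $p$. The summands $|\langle h,\bz\rangle|^p$ are i.i.d., have mean at most $(B\sqrt p)^p$, and after step 1 are bounded by $M^p$. Their variance is at most $\E|\langle h,\bz\rangle|^{2p}\le (B\sqrt{2p})^{2p}$. Bernstein's (or Hoeffding's) inequality gives deviation $O\big(\sqrt{(B\sqrt{2p})^{2p}\log(1/\delta')/m} + M^p\log(1/\delta')/m\big)$. The $M^p/m$ term is the one that forces the second term in the sample bound, $m \ge d(2B^2d\log(1/\delta)/r)^r$: it behaves like $(B^2 d\log)^{p/2}$ against $m$. Choosing $\delta'=\delta\, e^{-O(d)}/(4r)$, we want this deviation to be at most $((2^p-1))(B\sqrt p)^p$.

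3. Union bound over a $1/2$-net $\Gamma$ of the sphere ($|\Gamma|\le e^{O(d)}$) and over $p\in[r]$. To pass from the net to arbitrary unit $h$, write $h = h_0 + (h-h_0)$ with $h_0\in\Gamma$ and $\|h-h_0\|\le 1/2$, and apply Minkowski in $L_p(\Unif(\calbS))$. Iterating, or taking a supremum argument, gives $\sup_h\|\langle h,\cdot\rangle\|_p \le 2\max_{h_0\in\Gamma}\|\langle h_0,\cdot\rangle\|_p$. That factor $2$ would waste the whole constant. So instead use a finer net, say a $1/10$-net, and have the net points achieve $1.5B\sqrt p$; the sup is then at most $\frac{10}{9}\cdot 1.5B\sqrt p\le 2B\sqrt p$.

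\textbf{Main obstacle.} The difficulty is the parameter bookkeeping in step 2: making the heavy-tailed high-moment concentration fit the stated bound on $m$. The plan is to take $m$'s second term as given and check the inequality
\[
M^p\log(1/\delta')/m \ll (B\sqrt p)^p,\qquad M^2\approx B^2(d+\log(m/\delta)).
\]
This holds when $m\gtrsim d\,(B^2 d\log(1/\delta)/p)^{p/2}$, and the stated bound dominates this for $p\le r$ up to absorbing logs. If this reduction turns out to be loose, it may need minor adjustment of constants.
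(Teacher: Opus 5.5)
Your anticoncentration step is the same as the paper's. For subgaussianity you take a genuinely different route.

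\textbf{The paper's route.} It invokes uniform convergence once, for the class of tail sets $\{z:|\langle h,z\rangle|\ge\tau\}$ (complements of slabs, VC dimension $O(d)$). It then truncates at the radius $R$ of Claim~\ref{claim:subgaussian-max-norm} and integrates with the layer-cake formula. This gives $\Ex_{\bz\sim\calbS}[|\langle h,\bz\rangle|^p]\le(B\sqrt p)^p+R^p\cdot O(\sqrt{(d+\log(1/\delta))/m})$ for every $h$ and every $p$ at once. There is a single good event, no net, no union bound over $p$, and no constant is lost passing from a net to the sphere.

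\textbf{Your route.} You apply Bernstein for each fixed $(h,p)$, take a union bound over a $1/10$-net and over $p\in[r]$, and extend to the sphere because $h\mapsto\|\langle h,\cdot\rangle\|_{L_p(\Unif(\calbS))}$ is a seminorm. You correctly note that a $1/2$-net would consume the whole factor $2$. This costs the net bookkeeping, an extra $\log r$, and the $2p$-th moments of $\calM$ for the variance term. In exchange it is quantitatively better. The range term is $M^p\log(1/\delta')/m$ rather than $R^p/\sqrt m$, so the dominant factor in the sample requirement is $(K(d+\log(m/\delta))/p)^{p/2}$ instead of the paper's $(K(d+\log(m/\delta))/p)^{p}$.

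Two small repairs are needed. First, apply Bernstein to $|\langle h,\bz\rangle|^p\one\{\|\bz\|_2\le M\}$, which coincides with the original summands on the max-norm event. Second, use the target $(1.5^p-1)(B\sqrt p)^p$ throughout, since that is what your net step needs.

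\textbf{The gap.} The step that does not go through is your closing claim that the stated term $d\,(2B^2d\log(1/\delta)/r)^r$ dominates what you need ``up to absorbing logs.'' This is not a matter of constants, for three reasons.
\begin{itemize}
\item Your requirement contains no $B$. Since $M=CB\sqrt{d+\log(m/\delta)}$, the $B$ cancels against $(B\sqrt p)^p$; the $B^2$ in your displayed condition is a slip.
\item The stated term scales like $B^{2r}$, so rescaling $\calM$ makes it arbitrarily small, while the subgaussianity question is scale-invariant.
\item $(x/r)^r$ is not monotone in $r$ and falls below $1$ once $r>2B^2d\log(1/\delta)$. For large $r$ the hypothesis then reduces to its first term alone. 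Yet for a standard Gaussian, the constraint at $p\approx d/16$ already forces $m\ge 2^{\Omega(d)}$ (take $h$ parallel to a sample point).
\end{itemize}

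This is not a weakness of your method. The paper's proof ends with the same one-line appeal to the second term, and that term genuinely does not suffice. For example, take $\calM=\calN(0,\sigma^2I_d)$ with $\sigma=d^{-1/2}$, which is $(\sqrt{2/\pi}/\sigma,0)$-anticoncentrated and $\sigma$-subgaussian. With $r=4$, $\delta=1/10$ and $\Delta=1/2$, both terms of the hypothesis are $O(d)$. Yet for $m=\Theta(d)$ and $h=\bz_1/\|\bz_1\|_2$ one gets $\Ex_{\bz\sim\calbS}[\langle h,\bz\rangle^4]\ge\|\bz_1\|_2^4/m=\Theta(1/d)$, far above $(4\sigma)^4=256/d^2$.

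What your argument actually proves is the lemma with the second term replaced by the condition your Bernstein step needs. For instance, $m\ge K(d+\log(r/\delta))\max_{1\le p\le r}\max\{1,(K(d+\log(m/\delta))/p)^{p/2}\}$ for a large constant $K$ suffices. Under that hypothesis, and with the two repairs above, your proof is complete.
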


\begin{proof}
    We bound the failures of anticoncentration and subgaussianity separately, showing that both properties hold simultaneously with probability at least $1-\delta$.

    \paragraph{Anticoncentration.}
    Let $\mathrm{Slab}_1$ denote the class of slabs in $\R^d$, given by sets $\{z : |\ip{h}{z} - \theta| \le \gamma\}$ for vectors $h \in \R^d$, $\theta \in \R$, and $\gamma\ge 0$. By Fact~\ref{fact:vc-slabs}, $\vc(\mathrm{Slab}_1)=O(d)$. Applying Lemma~\ref{lem:vc-samples}, with probability at least $1 - \delta/2$ over the draw of $\calbS$, 
    \[
    \sup_{\substack{h\in\R^d \\ t,r\in\R}}\cbr{\left|\Prx_{\bz\sim\calbS}[|\ip{h}{\bz} - t| \le r] - \Prx_{\bz\sim\calM}[|\ip{h}{\bz} - t|\le r]\right|} \le O\left(\sqrt{\frac{d + \log(1/\delta)}{m}}\right).
    \]
    Because $m \ge C\cdot \frac{d + \log(1/\delta)}{\Delta^2}$ for a sufficiently large $C$, the above expression is at most $\Delta$. Since $\calM$ is $(L,0)$-anticoncentrated, for any $h\in\R^d$ and $t,r\in\R$, we have 
    \[
    \Prx_{\bz\sim\calM}[|\ip{h}{\bz} -t| \le r] \le Lr,
    \] 
    which implies 
    \[
    \Prx_{\bz\sim\calbS}[|\ip{h}{\bz} -t| \le r] \le Lr+\Delta.
    \] 
    Thus, $\Unif(\calbS)$ is $(L, \Delta)$-anticoncentrated.

    \paragraph{Subgaussianity.}
    We must show that for all unit vectors $h \in \R^d$ and all integers $p \in [1, r]$, the empirical moments satisfy $\Ex_{\bz\sim\calbS}[|\ip{h}{\bz}|^p] \le (2B\sqrt{p})^p$. 
    
    Let $\mathrm{Tail}_1$ denote the class of tail regions given by sets $\{z : |\ip{h}{\bz}| \ge \tau\}$ for vectors $h \in \R^d$ and $\tau \ge 0$. Because tails are complements of slabs, $\vc(\mathrm{Tail}_1) = O(d)$ by Fact~\ref{fact:vc-slabs}. Applying Lemma~\ref{lem:vc-samples}, with probability at least $1 - \delta/4$ over the draw of $\calbS$, we have
    \[
    \sup_{\substack{h\in\R^d \\ \tau \ge 0}} \cbr{\left|\Prx_{\bz\sim\calbS}[|\ip{h}{\bz}| \ge \tau] - \Prx_{\bz\sim\calM}[|\ip{h}{\bz}| \ge \tau]\right|} \le O\left(\sqrt{\frac{d + \log(1/\delta)}{m}}\right).
    \]
    
    Let $R = C B \sqrt{d + \log(m/\delta)}$ be the bound from Claim~\ref{claim:subgaussian-max-norm} (which also holds with probability $1-\delta/2$). Conditioned on these events, $|\ip{h}{\bz}| < \tau$ for all $\bz\in\calbS$. Substituting $\tau = t^{1/p}$ we have:
    \begin{align*}
        \Ex_{\bz\sim\calbS}[|\ip{h}{\bz}|^p] &= \int_0^\infty \Prx_{\bz\sim\calbS}[|\ip{h}{\bz}|^p \ge t] \,dt \\
        &= \int_0^{R^p} \Prx_{\bz\sim\calbS}[|\ip{h}{\bz}| \ge t^{1/p}] \,dt \\
        &\le \int_0^{R^p} \left( \Prx_{\bz\sim\calM}[|\ip{h}{\bz}| \ge t^{1/p}] + O\left(\sqrt{\frac{d + \log(1/\delta)}{m}}\right) \right) \,dt \\
        &\le \int_0^\infty \Prx_{\bz\sim\calM}[|\ip{h}{\bz}|^p \ge t] \,dt + R^p \cdot O\left(\sqrt{\frac{d + \log(1/\delta)}{m}}\right) \\
        &\le \Ex_{\bz\sim\calM}[|\ip{h}{\bz}|^p] + \left(CB\sqrt{d + \log(m/\delta)}\right)^p \cdot O\left(\sqrt{\frac{d + \log(1/\delta)}{m}}\right).
    \end{align*}
    By the $B$-subgaussianity of $\calM$, the true moment $\Ex_{\bz\sim\calM}[|\ip{h}{\bz}|^p]$ is at most $(B\sqrt{p})^p$. For the remaining error term, we use $m \ge d \cdot \left(\frac{2B^2d\log(1/\delta)}{r}\right)^r$ to conclude the bound of $(B\sqrt{p})^p$. 
    
    Adding these together yields:
    \[
    \Ex_{\bz\sim\calbS}[|\ip{h}{\bz}|^p] \le 2(B\sqrt{p})^p \le (2B\sqrt{p})^p,
    \]
    which confirms that $\Unif(\calbS)$ is $(2B, r)$-subgaussian, completing the proof.
\end{proof}

We are now ready to prove the main theorem.

\begin{proof}[Proof of Theorem~\ref{thm:conditions-approx}]
To show that $\Dpair$ achieves the stated $(R,J)$-halfspace approximation degree, it suffices to show that both marginals $\Dx$ and $\Dy$ are $(O(1),\Delta)$-anticoncentrated and $(O(1),r)$-subgaussian, $\Delta \le \eps / (R^4 c^J)$ is sufficiently small, and $r\ge \Omega(J)$ is sufficiently large. By assumption, for constants $L,B>0$, the first marginal $\Dx$ is $(L,0)$-anticoncentrated and $B$-subgaussian. Thus, it immediately satisfies the requirements of Lemma~\ref{lem:union-intersect-approx} (i.e., with $\Delta = 0$ and all $r$ arbitrarily large). For the second marginal $\Dy = \Unif(\bP)$, we apply Lemma~\ref{lem:finite-sample-extension}, which shows that a sample size of $m \ge \poly(d, 1/\eps, \log(1/\delta), R, 2^J)$ is sufficient to guarantee that $\Dy$ is $(L,\Delta)$-anticoncentrated and $(2B,r)$-subgaussian for sufficiently small $\Delta$ and sufficiently large $r$, with probability at least $1-\delta$. 

Conditioned on this event, we apply Lemma~\ref{lem:union-intersect-approx} to both $\Dx$ and $\Dy$. This guarantees that any union of $R$ disjoint intersections of $J$ halfspaces can be approximated (up to error $O(\eps)$) by a polynomial of degree
$$ k = \tilde{O}\left(\frac{J^2 R^8 c^{2J}}{\eps^2}\right)$$ up to the required moments. Choosing an appropriately large constant $c_1$, we conclude:
$$ k(\eps; R, J; \Dpair) \le \left(\frac{R \cdot 2^J}{\eps}\right)^{c_1}. $$

Crucially, Lemma~\ref{lem:union-intersect-approx} ensures that the constructed polynomial depends \emph{only} on the target $(R,J)$-halfspace function and parameters ($B, J, R,$ and $\eps$), so we obtain a single polynomial for both $\Dx$ and $\Dy$. 
\end{proof}

%% file: sections/missing-cut-algo.tex

\section{Omitted Proofs from Section~\ref{sec:balanced-cut-algo}}
\label{sec:missing-proofs}

\subsection{Rounding to a Sparse Cut (Proof of Lemma~\ref{lem:shift-existence})}
\label{subsec:cheeger-rounding}
In this section, we prove that any function with low energy and constant kurtosis induces a threshold cut with low error and constant balance.

\shiftExistence*

Observe that $\Rpair(F)$ and $\kurt(F)$ are invariant under affine transformations of $F$. Moreover, $\err(F,\theta)$ and $\bal(F,\theta)$ are preserved under affine transformations of $F$, provided that the same transformation is applied to the threshold $\theta$. Hence, it suffices to prove Lemma~\ref{lem:shift-existence} for function $F : \R^d\to\R$ with mean $0$ and variance $1$ on $\Dy$, in which case \[\Rpair(F) = \Ex_{(\bx,\by)\sim\calD}[(F(\bx)-F(\by))^2]\quad\text{and}\quad\kurt(F;\;\Dy) = \Ex_{\by\sim\Dy}[F(\by)^4].\] We will use this simplification for the remainder of the section. Also, we will henceforth omit the subscript $\Dpair$ from $\calR$, $\err$, and $\bal$.

We begin by defining the following relaxation of $\calR(F)$: \[\calS(F) := \Ex_{(\bx,\by)\sim\Dpair}[|F(\bx)^2-F(\by)^2|].\]

\begin{claim}
\label{claim:relating-S-and-R}
    For any $F: \R^d\to\R$ with variance $1$ on $\Dy$,\, $\calS(F) \le 8\cdot \del{\sqrt{\calR(F)} + \calR(F)}$.
\end{claim}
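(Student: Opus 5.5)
The plan is to factor the difference of squares and apply Cauchy--Schwarz. Write $|F(\bx)^2-F(\by)^2| = |F(\bx)-F(\by)|\cdot|F(\bx)+F(\by)|$. Cauchy--Schwarz then gives
\[
\calS(F) \le \Ex\big[(F(\bx)-F(\by))^2\big]^{1/2}\cdot \Ex\big[(F(\bx)+F(\by))^2\big]^{1/2} = \sqrt{\calR(F)}\cdot \Ex\big[(F(\bx)+F(\by))^2\big]^{1/2},
\]
using the normalization (mean $0$, variance $1$ on $\Dy$) under which $\calR(F)=\Ex[(F(\bx)-F(\by))^2]$. If $F$ is not normalized, I will first subtract the $\Dy$-mean. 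Strictly speaking $\calS$ is not shift invariant. However, the claim is only ever used after the reduction at the start of this section, so it suffices to treat the normalized case. Alternatively, I could replace the variance by the second moment $\Ex_{\by\sim\Dy}[F(\by)^2]=1$, which is what the argument actually uses.

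Next I bound the second factor. I write $F(\bx)+F(\by) = 2F(\by) + (F(\bx)-F(\by))$ and use $(a+b)^2\le 2a^2+2b^2$:
\[
\Ex\big[(F(\bx)+F(\by))^2\big] \le 8\,\Ex_{\by\sim\Dy}[F(\by)^2] + 2\,\Ex\big[(F(\bx)-F(\by))^2\big] = 8 + 2\calR(F).
\]
Note that only the $\by$-marginal's second moment is controlled by hypothesis. For this reason I express $F(\bx)$ through $F(\by)$ plus the difference, rather than bounding $\Ex[F(\bx)^2]$ directly. This substitution is the only nonobvious step.

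Finally I combine the two bounds. Since $\sqrt{8+2\calR}\le \sqrt{8}+\sqrt{2\calR}$, we get
\[
\calS(F)\le \sqrt{\calR}\,(\sqrt 8+\sqrt{2\calR}) \le 3\sqrt{\calR}+2\calR \le 8\big(\sqrt{\calR(F)}+\calR(F)\big),
\]
which is the claimed bound. I do not expect a real obstacle here. The only care needed is the normalization convention noted above and avoiding any assumption on the $\bx$-marginal.
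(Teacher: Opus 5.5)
Your proof is correct and matches the paper's argument: Cauchy--Schwarz on $|F(\bx)-F(\by)|\cdot|F(\bx)+F(\by)|$, followed by the same decomposition $F(\bx)+F(\by) = (F(\bx)-F(\by)) + 2F(\by)$, which gives $\Ex[(F(\bx)+F(\by))^2]\le 2\calR(F)+8$. Your remark that the argument really only needs $\Ex_{\by\sim\Dy}[F(\by)^2]=1$ (which forces $\Ex[(F(\bx)-F(\by))^2]\le \calR(F)$) is apt, because that is the form in which the paper later applies the claim to $G = C\cdot F^+$, and $G$ has unit second moment but is not mean-zero.
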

\begin{proof}
    By Cauchy-Schwarz, we have \[\calS(F) \le \sqrt{\Ex_{(\bx,\by)\sim\Dpair}[(F(\bx)-F(\by))^2]} \cdot \sqrt{\Ex_{(\bx,\by)\sim\Dpair}[(F(\bx)+F(\by))^2]}.\] The left factor is precisely $\sqrt{\calR(F)}$. To bound the right factor, we use the inequality $(a+b)^2 \le 2a^2+2b^2$, obtaining \[\Ex_{(\bx,\by)\sim\Dpair}[(F(\bx)+F(\by))^2] \le 2\cdot \Ex_{(\bx,\by)\sim\Dpair}[(F(\bx)-F(\by))^2] + 2\cdot \Ex_{\by\sim\Dy}[4\cdot F(\by)^2] = 2\cdot\calR(F) + 8.\] Therefore, $\calS(F) \le 8\cdot \del{\sqrt{\calR(F)} + \calR(F)}$, as desired.
\end{proof}

Now, we show the existence of a sparse threshold cut for nonnegative functions $G : \R^d\to\R_{\ge 0}$. For convenience, we define $\tail(G,t) = \Pr_{\by\sim\Dy}[G(\by) \ge t]$ so that $\bal(G,t) = \min\{\tail(G,t), 1-\tail(G,t)\}$.

\begin{claim}
\label{claim:nonnegative-shift}
    For any $G : \R^d \to \R_{\ge 0}$ with $\E_{\by\sim\Dy}[G(\by)^2] = 1$ and $\E_{\by\sim\Dy}[G(\by)^4] = M_4$, there is a threshold $t' > 0$ such that \[\frac{\err(G,t')}{\tail(G,t')} \le 2\cdot \calS(G)\quad\text{and}\quad\tail(G,t') \ge  \frac{1}{4M_4}.\]
\end{claim}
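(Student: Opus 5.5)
The plan is a standard co-area (layer-cake) argument for the nonnegative function $G$, restricted to a range of thresholds where the tail is guaranteed to be large. Here $\err(G,t)$ is the probability over $(\bx,\by)\sim\Dpair$ that $t$ lies between $G(\bx)$ and $G(\by)$; $\tail(G,t)$ is as defined above.

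\textbf{Step 1 (co-area identity).} For a fixed pair, $|G(\bx)^2-G(\by)^2| = \int_0^\infty \ind\{s \text{ lies between } G(\bx)^2, G(\by)^2\}\,ds$. Substituting $s=t^2$, $ds = 2t\,dt$, and taking expectations gives
\[ \calS(G) = \int_0^\infty 2t\,\err(G,t)\,dt. \]
Similarly, since $\E[G(\by)^2]=1$,
\[ \int_0^\infty 2t\,\tail(G,t)\,dt = 1. \]

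\textbf{Step 2 (restrict to thresholds with large tail).} Set $t_0 := (1/(2M_4))^{1/2}$ and let $I := [0,t_0]$. For $t\in I$ we have $\tail(G,t)\ge\tail(G,t_0)$ by monotonicity, so it suffices to lower bound $\tail(G,t_0)$. By Paley--Zygmund applied to $G^2$ (mean $1$, second moment $M_4$),
\[ \Pr[G(\by)^2 > \theta] \ge (1-\theta)^2/M_4 . \]
Taking $\theta = 1/(2M_4)\le 1/2$ (note $M_4\ge 1$ by Jensen) gives $\tail(G,t_0)\ge 1/(4M_4)$. It would be cleaner, however, to choose the interval so that it also carries constant weight in the normalization integral. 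The weight outside $I$ is $\int_{t_0}^\infty 2t\,\tail(G,t)\,dt = \E[(G^2-t_0^2)_+]\ge 1-t_0^2\ge 1/2$, so the interval where the tail is automatically large is the \emph{low} interval, while the weight sits on high $t$. This mismatch must be handled.

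\textbf{Step 3 (averaging over the truncated interval).} I would instead average over all $t$ in the range $[0,T]$, where $T$ is the largest $t$ with $\tail(G,t)\ge 1/(4M_4)$. The truncated normalization is
\[ \int_0^T 2t\,\tail(G,t)\,dt = 1 - \E\big[(G^2-T^2)_+\big], \]
and beyond $T$ the tail is below $1/(4M_4)$. By Cauchy--Schwarz,
\[ \E\big[G^2\ind\{G>T\}\big] \le \sqrt{M_4}\cdot\sqrt{1/(4M_4)} = 1/2, \]
so the truncated normalization is at least $1/2$. Then
\[ \frac{\int_0^T 2t\,\err\,dt}{\int_0^T 2t\,\tail\,dt} \le \frac{\calS(G)}{1/2} = 2\calS(G), \]
and an averaging argument yields some $t'\in(0,T]$ with $\err(G,t')\le 2\calS(G)\,\tail(G,t')$ and $\tail(G,t')\ge 1/(4M_4)$. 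This is the desired conclusion.

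The main obstacle is this last step: getting the constants to match exactly ($2$ and $1/(4M_4)$) through the Cauchy--Schwarz tail truncation. There are also the technicalities of left/right continuity of $\tail$ at $T$ (handled by taking $T$ as a supremum, and by $t'>0$ since the measure $2t\,dt$ ignores $0$). The degenerate case $\err\equiv0$ on the range is fine, since then the inequality holds trivially.
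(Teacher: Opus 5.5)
Your Step 3 argument is correct and is essentially the paper's proof. It uses the same layer-cake identity (you substitute $s=t^2$ up front, while the paper sets $t'=\sqrt{t}$ at the end), the same truncation at the largest level whose tail is at least $1/(4M_4)$, the same Cauchy--Schwarz bound showing the truncated normalization is at least $1/2$, and the same averaging step; the Paley--Zygmund detour in Step 2 is unnecessary and can be dropped.
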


\begin{proof}
    For brevity, we will omit $\Dpair$ and $\Dy$ from expectations and probabilities, as they are clear from context. We begin by writing \[\frac{\calS(G)}{1} = \frac{\Ex_{(\bx,\by)}[|G(\bx)^2 - G(\by)^2|]}{\Ex_{\by}[G(\by)^2]} = \frac{\Ex_{(\bx,\by)}\left[\int_0^\infty \one\{\sign(G(\bx)^2 - t) \neq \sign(G(\by)^2 - t)\}\,dt\right]}{\Ex_{\by}\left[\int_0^\infty \one\{G(\by)^2 > t\}\,dt\right]}.\] Changing the order of expectations and integrals yields \[\frac{\int_0^\infty \err(G^2, t)\,dt}{\int_0^\infty \tail(G^2, t)\,dt} = \calS(G).\] This is enough to show the existence of $t' = \sqrt{t}\in\R$ where $\err(G,t') / \tail(G,t') \le \calS(G) / 1$, but it gives no lower bound on $\tail(G,t')$. Instead, we restrict the bounds of integration to a finite interval $[0,L]$ for a parameter $L$ to be chosen later. Observe that \[\frac{\int_0^L \err(G^2, t)\,dt}{\int_0^L \tail(G^2, t)\,dt} \le \calS(G) \cdot \frac{\int_0^\infty \tail(G^2,t)\,dt}{\int_0^L \tail(G^2,t)\,dt} = \calS(G)\cdot \frac{1}{\Ex_{\by}[\min\{G(\by)^2, L\}]}.\] Now, we aim to find a setting of $L$ for which $\Ex_{\by}[\min\{G(\by)^2, L\}]\ge 1 / 2$ and $\tail(G^2, L) \ge (1) / (4M_4)$. We compute \[\Ex_{\by}[\min\{G(\by)^2, L\}] \ge \Ex_{\by}[G(\by)^2] - \Ex_{\by}[G(\by)^2\cdot\one\{G(\by)^2 > L\}] \ge 1 - \sqrt{M_4 \cdot \Pr_{\by}[G(\by)^2 > L]},\] where the last inequality follows from Cauchy-Schwarz. We pick $L$ as follows: \[L = \sup\cbr{r\ge 0 : \Prx_{\by}[G(\by)^2 > r] \ge \frac{1}{4M_4}}\] so that $\Prx_{\by}[G(\by)^2 > L] \le 1 /(4M_4)$. This implies $\Ex_{\by}[\min\{G(\by)^2, L\}] \ge 1/2$, which gives \[\frac{\int_0^L \err(G^2,t)\,dt}{\int_0^L \tail(G^2,t)\,dt} \le 2\cdot\calS(G).\] By an averaging argument, there exists $t\in(0,L)$ such that $\err(G^2,t) / \tail(G^2,t)$ is at most $2\cdot \calS(G)$. Importantly, $\tail(G^2,t) \ge 1/(4M_4)$ by definition of $L$. Substituting $t' = \sqrt{t}$ completes the proof.
\end{proof}

We are now ready to prove the main lemma.

\begin{proof}[Proof of Lemma~\ref{lem:shift-existence}]
    Let $F : \R^d \to \R$ have mean $0$ and variance $1$ on $\Dy$, and write $M_4 = \Ex_{\by\sim\Dy}[F(\by)^4]$. For $y\in\R^d$, define $\tilde{F}(y) := F(y) - \nu$ where $\nu$ is a median of $F(\by)$. By Chebyshev's inequality, $|\nu|\le \sqrt{2}$.

    Now, take the positive and negative parts \[F^+ := \max\{\tilde{F},0\}\quad F^- := \max\{-\tilde{F}, 0\}\] so that $\tilde{F} = F^+ - F^-$ and $\del{\tilde{F}}^2 = (F^+)^2 + (F^-)^2$. Without loss of generality, assume that \[\Ex_{\by\sim\Dy}[F^+(\by)^2] \ge \Ex_{\by\sim\Dy}[F^-(\by)^2]\] so that \[\Ex_{\by\sim\Dy}[F^+(\by)^2] \ge \frac{\Ex_{\by}[\tilde{F}(\by)^2]}{2} = \frac{\Ex_{\by}[F(\by)^2] + \nu^2}{2} \ge \frac{1}{2}\] where we have used that $\Ex_{\by}[F(\by)] = 0$. Also, note that $\nu\le \sqrt{2}$ since $\Ex_{\by\sim\Dy}[F^+(\by)^2] \le 1$. Using $(a+b)^4 \le 8(a^4+b^4)$, this gives \[\Ex_{\by\sim\Dy}[F^+(\by)^4] \le 8M_4 + 8\nu^4 \le 8M_4 + 32 \le 40M_4\] where we used that $M_4 \ge \Ex_{\by\sim\Dy}[F(\by)^2] = 1$.
        
    Let $G:= C\cdot F^+$ for a constant $C = 1 /\sqrt{\Ex_{\by\sim\Dy}[F^+(\by)^2]}\in[1,\sqrt{2}]$, ensuring $\Ex_{\by}[G(\by)^2] = 1$ and $\Ex_{\by}[G(\by)^4] \le 4\cdot \Ex_{\by}[F^+(\by)^4]$.

    To bound $\calR(G)$, we must lower bound the variance of $G$. Since $\nu$ is a median, $\Pr(F \le \nu) \ge 1/2$, which implies $\Pr(F^+ = 0) \ge 1/2$. Thus $G=0$ with probability at least $1/2$. Using Cauchy-Schwarz:
    \[ \Ex[G]^2 = \Ex[G \cdot \one\{G>0\}]^2 \le \Ex[G^2] \cdot \Pr(G>0) \le 1 \cdot \frac{1}{2} = \frac{1}{2}. \]
    Therefore, $\Var[G] = \Ex[G^2] - \Ex[G]^2 \ge 1/2$.
    
    Now we can bound the energy ratio. Using the fact that $x \mapsto \max(x, 0)$ is a contraction and the variance bound above:
    \[ \calR(G) = \frac{\Ex[(G(\bx)-G(\by))^2]}{\Var[G]} \le \frac{C^2 \Ex[(F(\bx)-F(\by))^2]}{1/2} \le 2 C^2\cdot \calR(F) \le 4\cdot\calR(F). \]

    Applying Claim~\ref{claim:relating-S-and-R}, this gives \[\calS(G) \le 8\cdot\del{\sqrt{\calR(G)} + \calR(G)} \le 32\cdot \del{\sqrt{\calR(F)} + \calR(F)}.\] Then, by Claim~\ref{claim:nonnegative-shift}, there exists a shift $t'\in\R$ such that \[\frac{\err(G,t')}{\tail(G,t')} \le 2\cdot\calS(G)\quad\text{and}\quad\tail(G,t') \geq \frac{1}{16\cdot\Ex_{\by\sim\Dy}[F^+(\by)^4]}\]

    Since $G$ has median $0$ and $t' > 0$, $\tail(G,t') \le 1/2$, implying the following bound: \[\err(G,t') \le \calS(G) \le 32\cdot \del{\sqrt{\calR(F)} + \calR(F)} \implies \err(G,t') \le 64\sqrt{\calR(F)}.\] The implication holds because either $\calR(F)\ge 1$, in which case the statement follows from $\err(G,t')\le 1$, or $\calR(F) < 1$, in which case $\sqrt{\calR(F)} \ge \calR(F)$. 
    
    Moreover, since $\tail(G,t')\le 1/2$ we have $\bal(G,t') = \tail(G,t')$, giving: \[\bal(G,t') \ge \frac{1}{16\cdot \Ex_{\by}[F^+(\by)^4]} \ge \frac{1}{640M_4}.\]

    To conclude, we translate the cut function back to $F$. Namely, for $x\in\R^d$, the following holds: \[G(x) \ge t' \iff F^{+}(x) \ge t' / C \iff \tilde{F}(x) \ge t' / C \iff F(x) \ge \nu + t'/C.\] Thus, setting $\theta := \nu + t'/C$, we have $\err(F,\theta) \le 64\sqrt{\calR(F)}$ and $\bal(F,\theta) \ge 1/(640M_4)$, as desired. Note that if we instead selected $F^-$, the argument would be the same, but we would instead set $\theta := \nu - t' / C$.
\end{proof}

\subsection{Existence of a Good Polynomial (Proof of Lemma~\ref{lem:polynomial-existence})}
\label{subsec:polynomial-approx}
\polynomialExistence*
\begin{proof}
    Let $f\in\calF$ satisfy $\bal_{\Dpair}(f) \ge\beta$ and $\err_{\Dpair}(f) = \opt_\beta := \opt_\beta(\calF)$. We have the following bounds on the moments of $f$: \begin{itemize}
        \item \textbf{Mean.} $|\Ex_{\Dy}[f]| \le 1-2\beta$.
        \item \textbf{Variance.} $|\Var_{\Dy}[f]| \ge 4\beta\cdot(1-\beta) \ge 2\beta$.
        \item \textbf{Fourth moment.} Since the range of $f$ is $\{-1,1\}$, $\Ex_{\Dy}[f^4] = 1$
    \end{itemize}

    Moreover, the energy of $f$ is given by \[
        \calR_{\Dpair}(f) = \frac{\Ex_{(\bx,\by)\sim\Dpair}[(f(\bx)-f(\by))^2]}{\Var_{\Dy}[f]} \le \frac{4\cdot\opt_\beta}{2\beta} = (2/\beta)\cdot\opt_\beta.
    \]

    Now, by $(k,\eps)$-smoothness of $\Dpair$, there exists a polynomial $\tilde{A}$ such that for $\ell\in\{1,2,4\}$:
    \[
        \Ex_{\bx\sim\Dx}[|\tilde{A}(\bx)-f(\bx)|^\ell] \le \eps \quad\text{and}\quad \Ex_{\by\sim\Dy}[|\tilde{A}(\by)-f(\by)|^\ell]\le\eps.
    \]

    We now analyze the moments and energy of $\tilde{A}$.
    \begin{enumerate}
        \item \textbf{Mean.} $|\Ex_{\Dy}[\tilde{A}]| \le \Ex_{\Dy}[|\tilde{A}-f|] + |\Ex_{\Dy}[f]| \le (1-2\beta)+\eps < 1-\beta$.
        \item \textbf{Variance.} Using the fact that $|\sqrt{\Var(\bX)} - \sqrt{\Var(\bY)}| \le \E[(\bX-\bY)^2]^{1/2}$, we have:
    \[
        \sqrt{\Var_{\Dy}[\tilde{A}]} \ge \sqrt{\Var_{\Dy}[f]} - \sqrt{\Ex_{\Dy}[(\tilde{A}-f)^2]} \ge \sqrt{2\beta} - \sqrt{\eps} \ge \frac{\sqrt{\beta}}{3}
    \]
    Squaring this expression
    \[
        \Var_{\Dy}[\tilde{A}] \ge \beta/9.
    \]
        \item \textbf{Fourth moment.} Using the inequality $(a+b)^4 \le 8a^4 + 8b^4$, we have \[\Ex_{\Dy}[\tilde{A}^4] \le 8\cdot\Ex_{\Dy}[(\tilde{A}-f)^4] + 8\cdot\Ex_{\Dy}[f^4] \le 8\eps + 8 \le 9.\]
        \item \textbf{Energy.} Using the inequality $(a+b+c)^2 \le 3(a^2+b^2+c^2)$, we have: \[\Ex_{\Dpair}[(\tilde{A}(\bx)-\tilde{A}(\by))^2] \le 3 \cdot \del{\Ex_{\bx\sim\Dx}[(\tilde{A}(\bx)-f(\bx))^2] + \Ex_{(\bx,\by)\sim\Dpair}[(f(\bx)-f(\by))^2] + \Ex_{\by\sim\Dy}[(f(\by)-\tilde{A}(\by))^2]}\] Thus, \[\calR(\tilde{A}) \le \frac{6\eps + 12\cdot\opt_\beta}{\beta/9}\le (108/\beta)\cdot (\opt_\beta+\eps).\]
    \end{enumerate}

    Finally, we define \[A^\star := \frac{\tilde{A} - \Ex_{\Dy}[\tilde{A}]}{\sqrt{\Var_{\Dy}[\tilde{A}]}}\]
    so that $\Ex_{\Dy}[A^\star] = 0$ and $\Ex_{\Dy}[(A^\star)^2] = 1$. For the fourth moment, we use \[\Ex_{\Dy}[(A^\star)^4]\le \frac{8\cdot \Ex_{\Dy}[(\tilde{A})^4] + 8\cdot \Ex_{\Dy}[\tilde{A}]^4}{\Var_{\Dy}[\tilde{A}]^2} \le \frac{72 + 8}{\beta^2 / 81} \le 6480/\beta^2.\]
    Finally, since $\calR$ is invariant under shifting and scaling, $\calR(A^\star) =\calR(\tilde{A}) \le (108/\beta)\cdot(\opt_\beta+\eps)$.
\end{proof}

\subsection{Uniform Convergence (Proof of Lemma~\ref{lem:uniform-convergence})}
\label{subsec:proof-uniform-convergence}

\uniformConvergence*

For a degree-$k$ polynomial threshold function $g$, define
$\Set(g) := g^{-1}(1) \subseteq \R^d$ and $\Cut(g) := (g^{-1}(1)\times g^{-1}(0)) \cup (g^{-1}(0)\times g^{-1}(1))$.
Observe that $\err_{\Dpair}(g) = \Prx_{(\bx,\by)\sim\Dpair}[(\bx,\by)\in\Cut(g)]$ and
\[\bal_{\Dpair}(g) = \frac12 - \left|\frac12 - \Prx_{\by\sim\Dy}[\by\in\Set(g)]\right|.
\]

Define the following concept classes:
\[
\calC_k^{\Set} = \{\Set(g) : g \text{ is a degree-$k$ PTF}\}
\qquad\text{and}\qquad
\calC_k^{\Cut} = \{\Cut(g) : g \text{ is a degree-$k$ PTF}\}.
\]

\begin{proof}[Proof of Lemma~\ref{lem:uniform-convergence}]
    
We begin by showing that both $\vc(\calC_k^{\Set})$ and $\vc(\calC_k^{\Cut})$ are $O((d+k)^k)$.

\paragraph{VC dimension of $\calC_k^{\Set}$.}
A degree-$k$ PTF $g$ is equivalent to a halfspace in $\R^N$ where $N = \binom{d+k}{k} < (d+k)^k$. Since halfspaces in $\R^N$ have VC dimension $N+1$, it follows that
$\vc(\calC_k^{\Set}) \le N+1 \le (d+k)^k$.
\paragraph{VC dimension of $\calC_k^{\Cut}$.}
Suppose a set of pairs
$S=\{(\bx_1,\by_1),\ldots,(\bx_T,\by_T)$ is shattered by $\calC_k^{\Cut}$. Consider the set of points
$X=\{\bx_1,\ldots,\bx_T,\by_1,\ldots,\by_T\}$.

Whether a pair $(\bx_i,\by_i)$ lies in $\Cut(g)$ depends only on the labels $g(\bx_i)$ and $g(\by_i)$. Thus every labeling of the pairs corresponds to a labeling of the points in $X$ induced by $\calC_k^{\Set}$. In particular, if $S$ is shattered then $\calC_k^{\Set}$ must induce at least $2^T$ distinct labelings of $X$.

By Sauer's lemma, the number of labelings of $X$ realizable by $\calC_k^{\Set}$ is at most $(2eT/V)^V$ where $V=\vc(\calC_k^{\Set})$. Thus, $2^T \le (2eT/V)^V$, which implies $T=O(V)$, and hence $\vc(\calC_k^{\Cut}) = O((d+k)^k)$.

Finally, we apply Lemma~\ref{lem:vc-samples} on $\calC_k^\Set$ and $\calC_k^\Cut$, giving the desired sample complexity
\[
m = O\left(\frac{(d+k)^k+\log(1/\delta)}{\eps^2}\right),
\] concluding the proof.
\end{proof}

%% file: sections/missing-dasgupta.tex

\section{Omitted Proofs from Section~\ref{sec:dasgupta}}
\label{sec:missing-dasgupta}

\marginalTester*

\begin{proof}
    Consider the following algorithm \TestMass: \begin{enumerate}
        \item Draw $m = \ceil{C\log(1/\delta)/\gamma}$ independent samples $\bz_1,\ldots,\bz_m\sim\Dnn$, where $C>0$ is a sufficiently large constant.
        \item Let $\hat{\mu} = \frac{1}{m}\sum_{i=1}^m \one\{\bz_i\in R\}$ denote the fraction of samples in $R$
        \item Return \textsf{Large} if $\hat{\bmu} > 1.5\gamma$, and \textsf{Small} otherwise.
    \end{enumerate}

    Now, let $\mu := \Pr_{\bz\sim\Dnn}[\bz\in R]$ denote the true marginal mass of $R$ and consider the two cases:

    \paragraph{Case 1 $(\mu \le \gamma)$.} Since $\mu \le \gamma$, the probability that $\hat{\bmu} > 1.5\gamma$ is at most: 
    \[ 
    \Pr[\hat{\bmu} > 1.5\gamma] \le \exp\left(-\Omega(\gamma m)\right) = \exp\left(-\Omega\left(C \log(1/\delta)\right)\right) \le \delta. 
    \]
    where the constant $C$ is chosen appropriately.

    \paragraph{Case 2: $(\mu \ge 2\gamma)$.} Since $\mu \ge 2\gamma$, the probability that $\hat{\bmu} \le 1.5\gamma$ is at most
    \[ 
    \Pr[\hat{\bmu} \le 1.5\gamma] \le \Pr[\hat{\bmu} \le 0.75\mu] \le \exp\left(-\Omega(\mu m)\right) \le \exp\left(-\Omega(\gamma m)\right) = \exp\left(-\Omega\left(C \log(1/\delta)\right)\right) \le \delta. 
    \]
    for an appropriate $C$.

    $\TestMass$ evaluates inclusion in $R$ exactly once per sample, completing the proof.
\end{proof}

%% file: arxiv-refs.bib
@misc{IX23,
      title={Worst-case Performance of Popular Approximate Nearest Neighbor Search Implementations: Guarantees and Limitations}, 
      author={Piotr Indyk and Haike Xu},
      year={2023},
      eprint={2310.19126},
      archivePrefix={arXiv},
      primaryClass={cs.DS},
      url={https://arxiv.org/abs/2310.19126}, 
}

@inproceedings{JSDS19,
 author = {Jayaram Subramanya, Suhas and Devvrit, Fnu and Simhadri, Harsha Vardhan and Krishnawamy, Ravishankar and Kadekodi, Rohan},
 booktitle = {Advances in Neural Information Processing Systems},
 editor = {H. Wallach and H. Larochelle and A. Beygelzimer and F. d\textquotesingle Alch\'{e}-Buc and E. Fox and R. Garnett},
 pages = {},
 publisher = {Curran Associates, Inc.},
 title = {DiskANN: Fast Accurate Billion-point Nearest Neighbor Search on a Single Node},
 url = {https://proceedings.neurips.cc/paper_files/paper/2019/file/09853c7fb1d3f8ee67a61b6bf4a7f8e6-Paper.pdf},
 volume = {32},
 year = {2019}
}

@article{guruswami03,
author = {Guruswami, Venkatesan},
title = {Inapproximability Results for Set Splitting and Satisfiability
Problems with No Mixed Clauses},
year = {2003},
issue_date = {Mar 2004},
journal = {Algorithmica},
month = dec,
pages = {451–469},
numpages = {19}
}

@article{charikar2005,
title = {Clustering with qualitative information},
journal = {Journal of Computer and System Sciences},
volume = {71},
number = {3},
pages = {360-383},
year = {2005},
note = {Learning Theory 2003},
issn = {0022-0000},
doi = {https://doi.org/10.1016/j.jcss.2004.10.012},
url = {https://www.sciencedirect.com/science/article/pii/S0022000004001424},
author = {Moses Charikar and Venkatesan Guruswami and Anthony Wirth},
}

@inproceedings{klivans2002,
author = {Klivans, Adam and O'Donnell, Ryan and Servedio, Rocco A.},
title = {Learning Intersections and Thresholds of Halfspaces},
year = {2002},
isbn = {0769518222},
publisher = {IEEE Computer Society},
address = {USA},
booktitle = {Proceedings of the 43rd Symposium on Foundations of Computer Science},
pages = {177–186},
numpages = {10},
series = {FOCS '02}
}

@book{shalev2014, place={Cambridge}, title={Understanding Machine Learning: From Theory to Algorithms}, publisher={Cambridge University Press}, author={Shalev-Shwartz, Shai and Ben-David, Shai}, year={2014}}

@INPROCEEDINGS{jiang2020,
  author={Jiang, Haotian and Kathuria, Tarun and Lee, Yin Tat and Padmanabhan, Swati and Song, Zhao},
  booktitle={2020 IEEE 61st Annual Symposium on Foundations of Computer Science (FOCS)}, 
  title={A Faster Interior Point Method for Semidefinite Programming}, 
  year={2020},
  volume={},
  number={},
  pages={910-918},
  doi={10.1109/FOCS46700.2020.00089}}

@inproceedings{raghavendra2012,
author = {Raghavendra, Prasad and Steurer, David and Tulsiani, Madhur},
title = {Reductions between Expansion Problems},
year = {2012},
isbn = {9780769547084},
publisher = {IEEE Computer Society},
address = {USA},
url = {https://doi.org/10.1109/CCC.2012.43},
doi = {10.1109/CCC.2012.43},
booktitle = {Proceedings of the 2012 IEEE Conference on Computational Complexity (CCC)},
pages = {64–73},
numpages = {10},
series = {CCC '12}
}

@article{diakonikolas2010,
author = {Diakonikolas, Ilias and Gopalan, Parikshit and Jaiswal, Ragesh and Servedio, Rocco A. and Viola, Emanuele},
title = {Bounded Independence Fools Halfspaces},
journal = {SIAM Journal on Computing},
volume = {39},
number = {8},
pages = {3441-3462},
year = {2010},
doi = {10.1137/100783030}
}

@inproceedings{dasgupta2016,
author = {Dasgupta, Sanjoy},
title = {A cost function for similarity-based hierarchical clustering},
year = {2016},
isbn = {9781450341325},
publisher = {Association for Computing Machinery},
address = {New York, NY, USA},
url = {https://doi.org/10.1145/2897518.2897527},
doi = {10.1145/2897518.2897527},
booktitle = {Proceedings of the Forty-Eighth Annual ACM Symposium on Theory of Computing},
pages = {118–127},
numpages = {10},
location = {Cambridge, MA, USA},
}

@inproceedings{charikar2017,
author = {Charikar, Moses and Chatziafratis, Vaggos},
title = {Approximate hierarchical clustering via sparsest cut and spreading metrics},
year = {2017},
publisher = {Society for Industrial and Applied Mathematics},
address = {USA},
booktitle = {Proceedings of the Twenty-Eighth Annual ACM-SIAM Symposium on Discrete Algorithms},
pages = {841–854},
numpages = {14},
location = {Barcelona, Spain},
}

@book{knuth1973,
  author = {Knuth, Donald},
  pages = {391-392},
  publisher = {Addison-Wesley},
  title = {The Art Of Computer Programming, vol. 3: Sorting And Searching},
  year = 1973
}

@inbook{balcan21, place={Cambridge}, title={Data-Driven Algorithm Design}, booktitle={Beyond the Worst-Case Analysis of Algorithms}, publisher={Cambridge University Press}, author={Balcan, Maria-Florina}, year={2021}, pages={626–645}}


%% file: waingarten.bib
@String{algorithmica     = {Algorithmica}}

@String{arxiv            = {arXiv preprint arXiv:}}

@String{focs2006         = {Proceedings of the 47th Annual {IEEE} Symposium on Foundations of Computer Science ({FOCS}~'2006)}}

@String{focs2014         = {Proceedings of the 55th Annual {IEEE} Symposium on Foundations of Computer Science ({FOCS}~'2014)}}

@String{focs2018         = {Proceedings of the 59th Annual {IEEE} Symposium on Foundations of Computer Science ({FOCS}~'2018)}}

@String{icm2018          = {Proceedings of the International Congress of Mathematicians ({ICM}~'2018)}}

@String{infcomp          = {Information and Computation}}

@String{jacm             = {Journal of the ACM}}

@String{neurips2021      = {Proceedings of Advances in Neural Information Processing Systems~34 ({NeurIPS}~'2021)}}

@String{NEURIPS2024      = {Proceedings of Advances in Neural Information Processing Systems~38 ({NeurIPS}~'2024)}}

@String{sicomp           = {{SIAM} Journal on Computing}}

@String{socg2004         = {Proceedings of the 20th {ACM} Symposium on Computational Geometry ({SoCG}~'2004)}}

@String{soda2014         = {Proceedings of the 25th {ACM-SIAM} Symposium on Discrete Algorithms ({SODA}~'2014)}}

@String{soda2017         = {Proceedings of the 28th {ACM-SIAM} Symposium on Discrete Algorithms ({SODA}~'2017)}}

@String{soda2021         = {Proceedings of the 32nd {ACM-SIAM} Symposium on Discrete Algorithms ({SODA}~'2021)}}

@String{stoc1998         = {Proceedings of the 30th {ACM} Symposium on the Theory of Computing ({STOC}~'1998)}}

@String{stoc2002         = {Proceedings of the 34th {ACM} Symposium on the Theory of Computing ({STOC}~'2002)}}

@String{stoc2015         = {Proceedings of the 47th {ACM} Symposium on the Theory of Computing ({STOC}~'2015)}}

@String{stoc2018         = {Proceedings of the 50th {ACM} Symposium on the Theory of Computing ({STOC}~'2018)}}

@String{stoc2024         = {Proceedings of the 56th {ACM} Symposium on the Theory of Computing ({STOC}~'2024)}}

@String{tpami            = {{IEEE} Transactions on Pattern Analysis and Machine Intelligence}}

@inproceedings{AAKK14,
	author = {Amirali Abdullah and Alexandr Andoni and Ravindran Kannan and Robert Krauthgamer},
	booktitle = FOCS2014,
	pages = {581--590},
	title = {Spectral Approaches to Nearest Neighbor Search},
	year = {2014}}

@inproceedings{AI06,
	author = {Alexandr Andoni and Piotr Indyk},
	booktitle = FOCS2006,
	pages = {459--468},
	title = {Near-Optimal Hashing Algorithms for Approximate Nearest Neighbor in High Dimensions},
	year = {2006}}

@inproceedings{AJW25,
author = {Andoni, Alexandr and Jiang, Shunhua and Weinstein, Omri},
title = {A Framework for Building Data Structures from Communication Protocols},
year = {2025},
booktitle = {Proceedings of the 57th Annual ACM Symposium on Theory of Computing},
pages = {256–267},
numpages = {12},
location = {Prague, Czechia},
series = {STOC '25}
}

@article{GR17,
    author = {Gupta, Rishi and Roughgarden, Tim},
    title = {A PAC Approach to Application-Specific Algorithm Selection},
    journal = {SIAM Journal on Computing},
    volume = {46},
    number = {3},
    pages = {992-1017},
    year = {2017},
}

@inproceedings{AINR14,
	author = {Alexandr Andoni and Piotr Indyk and Huy L. Nguyen and Ilya Razenshteyn},
	booktitle = SODA2014,
	pages = {1018--1028},
	title = {Beyond Locality-Sensitive Hashing},
	year = {2014}}

@inproceedings{ALRW17,
	author = {Alexandr Andoni and Thijs Laarhoven and Ilya Razenshteyn and Erik Waingarten},
	booktitle = SODA2017,
	pages = {47--66},
	title = {Optimal Hashing-based Time--Space Trade-offs for Approximate Near Neighbors},
	year = {2017}}

@inproceedings{AR15,
	author = {Alexandr Andoni and Ilya Razenshteyn},
	booktitle = STOC2015,
	pages = {793--801},
	title = {Optimal Data-Dependent Hashing for Approximate Near Neighbors},
	year = {2015}}

@inproceedings{C02,
	author = {Moses Charikar},
	booktitle = STOC2002,
	pages = {380--388},
	title = {Similarity estimation techniques from rounding algorithms},
	year = {2002}}

@article{C88,
	author = {Kenneth L. Clarkson},
	journal = SICOMP,
	number = {4},
	pages = {830--847},
	title = {A Randomized Algorithm for Closest-Point Queries},
	volume = {17},
	year = {1988}}

@inproceedings{DIIM04,
	author = {Mayur Datar and Nicole Immorlica and Piotr Indyk and Vahab S. Mirrokni},
	booktitle = SOCG2004,
	pages = {253--262},
	title = {Locality-sensitive hashing scheme based on p-stable distributions},
	year = {2004}}

@inproceedings{IM98,
	author = {Piotr Indyk and Rajeev Motwani},
	booktitle = STOC1998,
	pages = {604--613},
	title = {Approximate Nearest Neighbors: Towards Removing the Curse of Dimensionality},
	year = {1998}}

@article{M93,
	author = {Stefan Meiser},
	journal = INFCOMP,
	number = {2},
	pages = {286--303},
	title = {Point Location in Arrangements of Hyperplanes},
	volume = {106},
	year = {1993}}

@misc{T,
	howpublished = {Available as \texttt{https://twitter.com/}},
	title = {Twitter}}

@inproceedings{ANNRW18,
	author = {Alexandr Andoni and Assaf Naor and Aleksandar Nikolov and Ilya Razenshteyn and Erik Waingarten},
	booktitle = STOC2018,
	pages = {787--800},
	title = {Data-dependent Hashing via Non-linear Spectral Gaps},
	year = {2018}}

@article{LR99,
	author = {Tom Leighton and Satish Rao},
	journal = JACM,
	number = {6},
	pages = {787--832},
	title = {Multicommodity max-flow min-cut theorems and their use in designing approximation algorithms},
	volume = {46},
	year = {1999}}

@article{ARV09,
	author = {Sanjeev Arora and Satish Rao and Umesh Vazirani},
	journal = JACM,
	number = {2},
	pages = {1--37},
	title = {Expander flows, geometric embeddings and graph partitioning},
	volume = {56},
	year = {2009}}

@misc{SUBLINEAR,
	howpublished = {\url{https://sublinear.info/index.php?title=Main_Page}},
	title = {List of Open Problems in Sublinear Algorithms}}

@article{AM85,
	author = {Noga Alon and Vitaly D. Milman},
	journal = {Journal of Combinatorial Theory, Series B},
	number = {1},
	pages = {73--88},
	title = {$\lambda$1, isoperimetric inequalities for graphs, and superconcentrators},
	volume = {38},
	year = {1985}}

@inproceedings{KPW26,
author = {Sanjeev Khanna and Ashwin Padaki and Erik Waingarten},
title = {Sparse Navigable Graphs for Nearest Neighbor Search: Algorithms and Hardness},
booktitle = {Proceedings of the 2026 Annual ACM-SIAM Symposium on Discrete Algorithms (SODA)},
chapter = {},
pages = {1606-1630},
doi = {10.1137/1.9781611978971.58},
year = 2026,
}

@article{DKZ20,
  title={Near-optimal sq lower bounds for agnostically learning halfspaces and relus under gaussian marginals},
  author={Diakonikolas, Ilias and Kane, Daniel and Zarifis, Nikos},
  journal={Advances in Neural Information Processing Systems},
  volume={33},
  pages={13586--13596},
  year={2020}
}

@article{MV22,
author = {Mitzenmacher, Michael and Vassilvitskii, Sergei},
title = {Algorithms with predictions},
year = {2022},
issue_date = {July 2022},
publisher = {Association for Computing Machinery},
address = {New York, NY, USA},
volume = {65},
number = {7},
issn = {0001-0782},
url = {https://doi.org/10.1145/3528087},
doi = {10.1145/3528087},
journal = {Commun. ACM},
month = jun,
pages = {33–35},
numpages = {3}
}

@inproceedings{CDFJLMSSW26,
author = {Alex Conway and Laxman Dhulipala and Martin Farach-Colton and Rob Johnson and Ben Landrum and Christopher Musco and Yarin Shechter and Torsten Suel and Richard Wen},
title = {Sparse Navigable Graphs for Nearest Neighbor Search: Algorithms and Hardness},
booktitle = {Proceedings of the 2026 Annual ACM-SIAM Symposium on Discrete Algorithms (SODA)},
pages = {1606-1630},
year = 2026,
}

@inproceedings{CD07,
 author = {Cayton, Lawrence and Dasgupta, Sanjoy},
 booktitle = {Advances in Neural Information Processing Systems},
 editor = {J. Platt and D. Koller and Y. Singer and S. Roweis},
 pages = {},
 publisher = {Curran Associates, Inc.},
 title = {A learning framework for nearest neighbor search},
 url = {https://proceedings.neurips.cc/paper_files/paper/2007/file/0d7de1aca9299fe63f3e0041f02638a3-Paper.pdf},
 volume = {20},
 year = {2007}
}

@inproceedings{DIRW20,
title={Learning Space Partitions for Nearest Neighbor Search},
author={Yihe Dong and Piotr Indyk and Ilya Razenshteyn and Tal Wagner},
booktitle={International Conference on Learning Representations},
year={2020},
url={https://openreview.net/forum?id=rkenmREFDr}
}

@inproceedings{AIR18,
	author = {Alexandr Andoni and Piotr Indyk and Ilya Razenshteyn},
	booktitle = ICM2018,
	pages = {3287--3318},
	title = {Approximate nearest neighbor search in high dimensions},
	year = {2018}}

@book{MP69,
	author = {Marvin Minsky and Seymour Papert},
	publisher = {{MIT} Press},
	title = {Perceptrons},
	year = {1969}}

@inproceedings{ANRW21,
	author = {Alexandr Andoni and Aleksandar Nikolov and Ilya Razenshteyn and Erik Waingarten},
	booktitle = SODA2021,
	title = {Approximate nearest neighbors beyond space partitions},
	year = {2021}}

@article{S84,
	author = {Hanan Samet},
	journal = {ACM Computing Surveys (CSUR)},
	number = {2},
	pages = {187--260},
	title = {The quadtree and related hierarchical data structures},
	volume = {12},
	year = {1984}}

@article{LT80,
	author = {Richard J. Lipton and Robert E. Tarjan},
	journal = SICOMP,
	number = {3},
	pages = {615--627},
	title = {Applications of a planar separator theorem},
	volume = {9},
	year = {1980}}

@inproceedings{R18,
	author = {Aviad Rubinstein},
	booktitle = FOCS2018,
	pages = {1260--1268},
	title = {Hardness of approximate nearest neighbor search},
	year = {2018}}

@inproceedings{BIGANN21,
	author = {Harsha Vardhan Simhadri and George Williams and Martin Aum\"{u}ler and Matthijs Douze and Artem Babenko and Dmitry Baranchuk and Qi Chen and Lucas Hosseini and Ravishankar Krishnaswamy and Gopal Srinivasa and Shuas Jayaram Subramanya and Jingdong Wang},
	booktitle = NEURIPS2021,
	pages = {177--189},
	title = {Results of the {NeurIPS'21} Challenge on Billion-Scale Approximate Nearest Neighbor Search},
	year = {2021}}

@InProceedings{JWZ24,
  author    = {Rajesh Jayaram and Erik Waingarten and Tian Zhang},
  booktitle = STOC2024,
  title     = {Data-dependent LSH for the Earth Mover's Distance},
  year      = {2024},
}

@article{KKMS08,
author = {Kalai, Adam Tauman and Klivans, Adam R. and Mansour, Yishay and Servedio, Rocco A.},
title = {Agnostically Learning Halfspaces},
journal = {SIAM Journal on Computing},
volume = {37},
number = {6},
pages = {1777-1805},
year = {2008},
doi = {10.1137/060649057},
}

@inproceedings{BBMWWZ25,
    author = {Yiqiao Bao and Anubhav Baweja and Nicolas Menand and Erik Waingarten and Nathan White and Tian Zhang},
    title = {Average-Distortion Sketching},
    year    = {2025},
    booktitle = {2025 IEEE 66th Annual Symposium on Foundations of Computer Science (FOCS)}, 
}

@ARTICLE{WZSSS17,
author={Wang, Jingdong and Zhang, Ting and Song, Jingkuan and Sebe, Nicu and Shen, Heng Tao},
journal={ IEEE Transactions on Pattern Analysis \& Machine Intelligence },
title={{ A Survey on Learning to Hash }},
year={2018},
volume={40},
number={04},
ISSN={1939-3539},
pages={769-790},
publisher={IEEE Computer Society},
address={Los Alamitos, CA, USA},
month=apr}

@INPROCEEDINGS{AN25,
  author={Andoni, Alexandr and Nosatzki, Negev Shekel},
  booktitle={2025 IEEE 66th Annual Symposium on Foundations of Computer Science (FOCS)}, 
  title={Embeddings into Similarity Measures for Nearest Neighbor Search}, 
  year={2025},
  volume={},
  number={},
  pages={864-893},
}

@Article{MY18,
  author  = {Yu A Malkov and Dmitry A Yashunin},
  journal = TPAMI,
  title   = {Efficient and robust approximate nearest neighbor search using hierarchical navigable small world graphs},
  year    = {2018},
  number  = {4},
  pages   = {824--836},
  volume  = {42},
}

@InProceedings{DGMMS24,
  author    = {Haya Diwan and Jinrui Guo and Cameron Musco and Christopher Musco and Tosten Suel},
  booktitle = NEURIPS2024,
  title     = {Navigable graphs for high-dimensional nearest neighbor search: constructions and limits},
  year      = {2024},
}

@misc{BIGANN23,
      title={Results of the Big ANN: NeurIPS'23 competition}, 
      author={Harsha Vardhan Simhadri and Martin Aumüller and Amir Ingber and Matthijs Douze and George Williams and Magdalen Dobson Manohar and Dmitry Baranchuk and Edo Liberty and Frank Liu and Ben Landrum and Mazin Karjikar and Laxman Dhulipala and Meng Chen and Yue Chen and Rui Ma and Kai Zhang and Yuzheng Cai and Jiayang Shi and Yizhuo Chen and Weiguo Zheng and Zihao Wan and Jie Yin and Ben Huang},
      year={2024},
      eprint={2409.17424},
      archivePrefix={arXiv},
      primaryClass={cs.IR},
      url={https://arxiv.org/abs/2409.17424}, 
}
